\newtheorem{theorem}{Theorem}[section]
\newtheorem{prop}{Proposition}[section]
\newtheorem{claim}{Claim}[section]
\newtheorem{coro}{Corollary}[section]
\newtheorem{lemma}[theorem]{Lemma}
\newtheorem{conj}{Conjecture}[section]
\newtheorem{obs}{Observation}[section]
\theoremstyle{definition}
\newtheorem{definition}[theorem]{Definition}
\theoremstyle{remark}
\newtheorem{remark}[theorem]{Remark}
\numberwithin{equation}{section}
\newcommand{\abs}[1]{\lvert#1\rvert}
\newcommand{\norm}[1]{\lVert#1\rVert}
\DeclareMathOperator{\re}{Re}
\DeclareMathOperator{\im}{Im}
\newcommand{\ud}{\mathrm{d}}
\begin{document}

\title[Lyapunov exponent of Extended Harper's Equation]{Analytic quasi-perodic cocycles with singularities and the Lyapunov Exponent of Extended Harper's Model}

\author{S. Jitomirskaya and C. A. Marx}
\address{Department of Mathematics, University of California, Irvine CA, 92717}
\thanks{The work was supported by NSF Grant DMS - 0601081 and BSF, grant 2006483. .}





\begin{abstract}
We show how to extend (and with what limitations) Avila's global theory of analytic SL(2,C) cocycles to families of cocycles with singularities. This allows us to develop a strategy to determine the Lyapunov exponent for extended Harper's model, for all values of parameters and all irrational frequencies. In particular, this includes the self-dual regime for which even heuristic results did not previously exist in physics literature. The extension of Avila's global theory is also shown to imply continuous behavior of the LE on the space of analytic $M_2(\mathbb{C})$-cocycles. This includes rational approximation of the frequency, which so far has not been available.
\end{abstract}

\maketitle

\section{Introduction} \label{sec_intro}
For an irrational $\beta$, consider the quasi-periodic Jacobi operators on $l^2(\mathbb{Z})$
\begin{eqnarray} \label{eq_hamiltonian}
& (H_{\theta;\beta} \psi)_k := v(\theta + \beta k) \psi_{k} + c(\theta + \beta k) \psi_{k+1} + \overline{c}(\theta + \beta (k-1)) \psi_{k-1} ~\mbox{.}
\end{eqnarray}
indexed by $\theta \in [0,1)$. In this article, $c$ and $v$ are assumed to be functions on the torus, $\mathbb{T}:=\mathbb{R}/\mathbb{Z}$,  with analytic extension through a band $\abs{\im(z)} \leq \delta$.  As usual, $v$ is real-valued which makes $H_{\theta;\beta}$ a bounded self adjoint operator. We assume $c \not \equiv 0$.

Operators of the form (\ref{eq_hamiltonian}) arise in a tight-binding description of a crystal layer subject to an external magnetic field 
of flux $\beta$ perpendicular to the lattice plane \cite{T,U}. In this context the functions $c$ and $v$ reflect the lattice geometry as well as interactions between the nuclei in the crystal; $\theta$ represents a (random) quasi-momentum. 

A prominent example and the main motivation for this paper is extended Harper's model. Here, the electron is allowed to hop  between nearest and next nearest neighboring lattice sites expressed through the couplings $\lambda_2$ and $\lambda_1, \lambda_3$, respectively, 
\begin{equation} \label{eq_hamiltonian1}
c_{\lambda}(x) := \lambda_{3} \mathrm{e}^{-2\pi i (x+\frac{\beta}{2})} + \lambda_{2} + \lambda_{1} \mathrm{e}^{2 \pi i (x+\frac{\beta}{2})} ~\mbox{,}
~ v(x)  := 2 \cos(2 \pi x) ~\mbox{.}
\end{equation}
We set $\lambda:=(\lambda_1, \lambda_2, \lambda_3)$ to simplify notation. The model is illustrated in Fig. \ref{figure_2}. Without loss of generality one may assume  $0\leq \lambda_{2} ~\mbox{,} ~0 \leq \lambda_{1} + \lambda_{3}$ and at least one of $\lambda_{1} \mbox{,} ~\lambda_{2} \mbox{,} ~\lambda_{3}$ 
to be positive.

\begin{figure} \label{figure_2}
\includegraphics[width=0.5\textwidth]{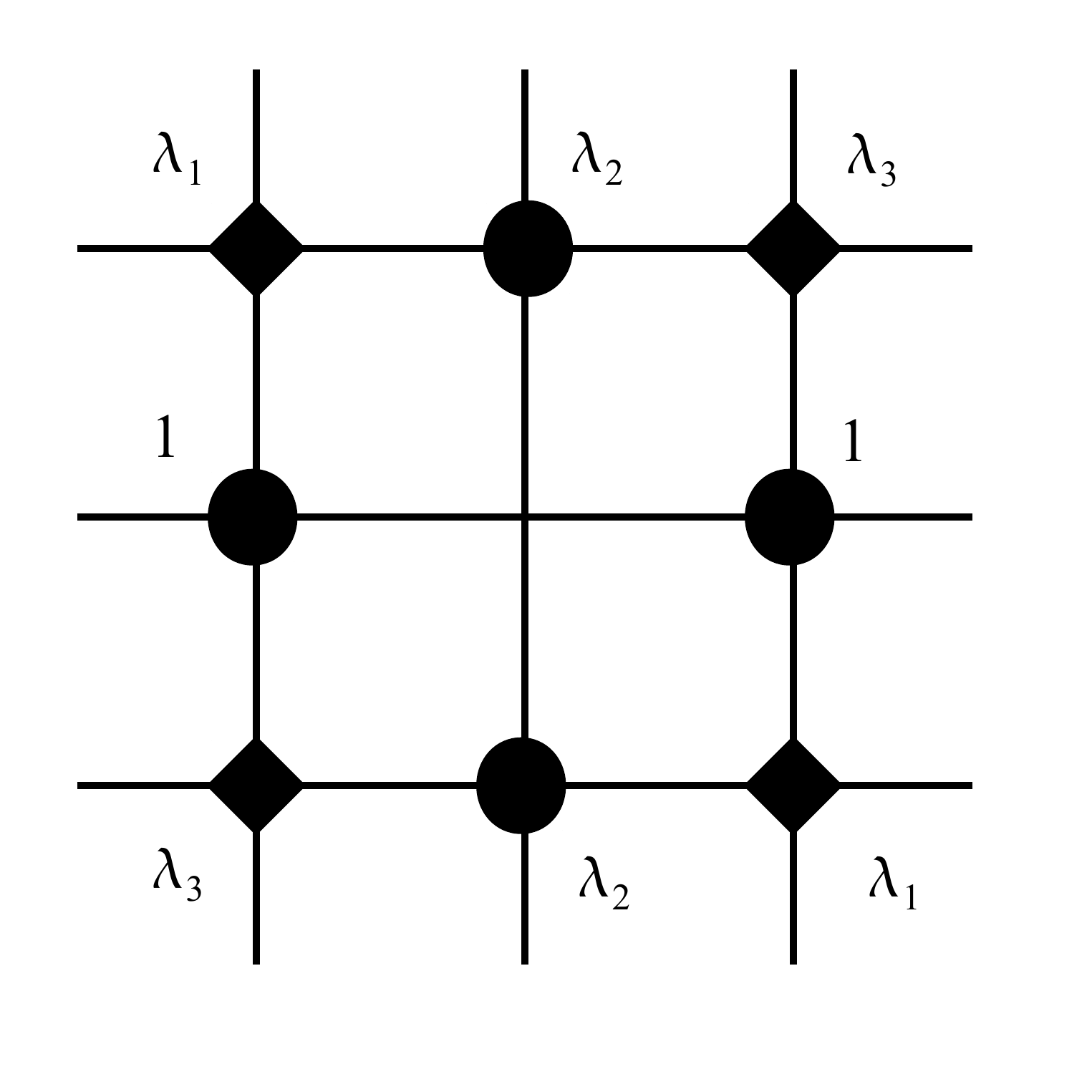}
\caption{Extended Harper's model takes into account both nearest ($\lambda_2$) and next nearest neighbor interaction ($\lambda_1$, $\lambda_3$). Relevant special cases are Harper's model ($\lambda_1=\lambda_3=0$, also known as ``almost Mathieu'') and the triangular lattice (one of $\lambda_1$, $\lambda_3$ equals zero). Extended Harper's model as given here was obtained by renormalization, reducing the number of coupling parameters from four to three.}
\end{figure}

The model is sufficiently general to incorporate both rectangular and triangular (one of $\lambda_1, \lambda_3$ zero) lattice geometries. We also mention that $\lambda_1 = \lambda_3 = 0$ produces the well known almost Mathieu operator (in physics literature also known as Harper's model) with coupling $\lambda_2^{-1}$. Contrary to Harper's model, not much is known about its extension given in (\ref{eq_hamiltonian1}).

Analyzing the solutions to the generalized eigenvalue problem $H_{\theta;\beta} \psi = E \psi$ from a dynamical systems point of view leads to consideration
of the analytic cocycle $(\beta, A^{E}(x))$ where
\begin{equation} \label{eq_deftransfer1}
A^{E}(x) := \begin{pmatrix} E - v(x) & -\overline{c}(x-\beta) \\ c(x) & 0 \end{pmatrix} ~\mbox{.}
\end{equation}
We will call $(\beta, A^E)$ a {\em{Jacobi-cocycle}}, generalizing the notion of Schr\"odinger cocyle associated with the special case $c=1$. A precise definition of (analytic) cocycles is given in Definition \ref{def_anacoc}. 

A complication not present in the study of Schr\"odinger cocycles is the possibility of singularities. A Jacobi cocycle is called {\em{singular}} if for some $x_0 \in \mathbb{T}$, $\det A(x_0) = \overline{c}(x-\beta) c(x) = 0$ ($x_0$ correspondingly is termed a singularity of $(\beta,A^E)$).

By Oseledets' theorem (\cite{AA}; see also Appendix \ref{app_oseledec}), the asymptotic behavior of solutions to $H_{\theta;\beta} \psi = E \psi$ is characterized by the Lyapunov exponent (LE) associated with $(\beta, A^{E})$. In particular, knowledge of the LE is a crucial tool to tackle the spectral analysis for the ergodic operators $H_{\theta;\beta}$. 

Our main result resolves the open problem of obtaining a complete description of the LE of extended Harper's equation as a function of the coupling $\lambda$. Earlier attempts to compute the LE, both heuristic and rigorous, were based on an underlying symmetry of the Hamiltonian, known as duality \cite{D,E,G,H,EE}. Relying on this symmetry, however, precludes the analysis of a certain significant region of couplings.

With respect to duality, the parameter space splits into (see also Fig. \ref{figure_1}):
\begin{description}
\item[region I] $0 \leq \lambda_{1}+\lambda_{3} \leq 1, ~0 \leq \lambda_{2} \leq 1$ ~\mbox{,}
\item[region II] $0 \leq \lambda_{1}+\lambda_{3} \leq \lambda_{2}, ~1 \leq \lambda_{2} $ ~\mbox{,}
\item[region III] $\max\{1,\lambda_{2}\} \leq \lambda_{1}+\lambda_{3}$ ~\mbox{.}
\end{description}
According to the action of the duality transformation, regions I and II are dual regions whereas region III is self-dual. For a precise meaning of duality for extended Harper's model we refer the reader to Appendix \ref{app_dualityconj}. A more general perspective on duality is given in \cite{D}. We reiterate that the duality based approach {\em{a priori}} excludes the self dual region.

\begin{figure} \label{figure_1}
\includegraphics[width=0.5\textwidth]{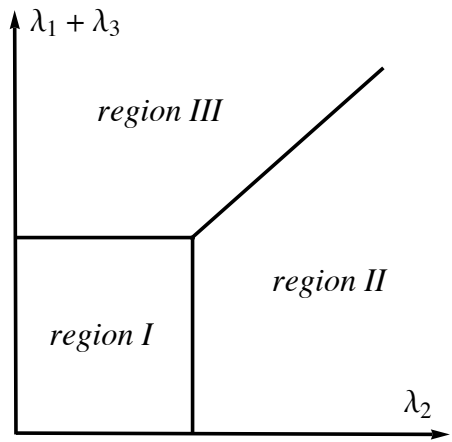}
\caption{Partitioning of the parameter space into regions I - III. So far duality based methods precluded a study of the self dual region III.}
\end{figure}

So far the only rigorous results on extended Harper's equation have
been obtained in the dual regime (regions I \& II) for $\beta$ Diophantine (see (\ref{eq_diophantine}) for a definition) \cite{D, E, EE}. 

Ref. \cite{EE} quantifies the LE in region I for Diophantine $\beta$.  This formula for the LE in region I was obtained based on \cite{E} which proves (spectral) localization for Diophantine $\beta$ in the interior of region I. Consequently, as shown in \cite{E}, duality forces the LE in the interior of the dual region II to be zero.

We mention that these results for the dual regime confirm non-rigorous computations by Thouless \cite{G,H}.  Since Thouless' considerations were also based on duality, they could not provide insight in region III either. 

As a consequence of their heuristic analysis of the Lyapunov exponent, in \cite{G} Thouless and Hahn concluded localization within region I, and extended states for region
II. Note that regions I and II can be viewed as extensions of the corresponding (metallic and insulator) regions of the almost Mathieu operator ($\lambda_1=\lambda_3=0$). 

The self-dual region III, however, where next-nearest-neighbor interaction dominates, does not have analogues in the almost Mathieu family.  In that region, the
authors of \cite{G} had to rely on numerical studies carried out for the case $\lambda_1 = \lambda_3,$ which indicated singular continuous spectrum \cite{G,H}. We note that $\lambda_1 = \lambda_3$ corresponds to the most physically relevant case of isotropic next-nearest-neighbor interaction, yet it turns out to be
the most difficult one to study rigorously, because of the position of singularities of the associated Jacobi cocycle (see Observation \ref{obs_cfun}).

The main achievement of the present article is to overcome the problem of analyzing the self-dual region of extended Harper's model. We prove the following:
\begin{theorem} \label{thm_mainresult}
Fix an irrational frequency $\beta$. Then the Lyapunov exponent on the spectrum is zero within both region II and III. In region I it is given by the formula,
\begin{equation} \label{eq_thoulessform}
\begin{cases}\log \left(  \dfrac{1+\sqrt{1 - 4\lambda_{1} \lambda_{3}}}{2 \lambda_{1}}\right) & \mbox{, if} ~\lambda_{1} \geq \lambda_{3}, ~\lambda_{2} \leq \lambda_{3} + \lambda_{1} ~\mbox{,}  \\
\log \left(  \dfrac{1+\sqrt{1 - 4\lambda_{1} \lambda_{3}}}{2 \lambda_{3}}\right) & \mbox{, if} ~\lambda_{3} \geq \lambda_{1}, ~\lambda_{2} \leq \lambda_{3} + \lambda_{1} ~\mbox{,}  \\
\log \left(  \dfrac{1+\sqrt{1 - 4\lambda_{1} \lambda_{3}}}{\lambda_{2} + \sqrt{\lambda_{2}^{2} - 4 \lambda_{1} \lambda_{3}}}\right) & ~\mbox{, if} ~\lambda_{2} \geq \lambda_{3} + \lambda_{1} ~\mbox{.} 
\end{cases}
\end{equation}
\end{theorem}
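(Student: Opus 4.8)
The plan is to deduce Theorem~\ref{thm_mainresult} from the extension of Avila's global theory to Jacobi cocycles with singularities, announced in the abstract. The first step is to complexify the phase: for $\epsilon\in\mathbb{R}$ consider the cocycle $(\beta,A^E(x+i\epsilon))$ and its Lyapunov exponent $L(\epsilon)$. One shows $L$ is a convex, piecewise-linear function of $\epsilon$ with integer (in general, rational) slopes, provided the cocycle is not identically singular --- this is the content of the global theory, and the singularities of $c_\lambda$ are handled by factoring them out (writing $A^E=B\,\tilde A^E$ with $B$ carrying the zeros of $\det A^E$ and $\tilde A^E$ nonsingular), which is exactly the ``with what limitations'' caveat. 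The key computation is then the asymptotic slope of $L(\epsilon)$ as $\epsilon\to+\infty$: as $\mathrm{Im}(z)\to\infty$ the dominant term in $c_\lambda$ is $\lambda_1 e^{2\pi i(x+\beta/2)}$ (when $\lambda_1\neq0$), in $v$ it is $e^{-2\pi i x}$, and in $E-v(x)-\overline{c_\lambda}(x-\beta)$ one reads off a leading exponential; balancing these exponential rates and using that $\det A^E(x)=\overline{c_\lambda}(x-\beta)c_\lambda(x)$ determines the slope of $L$ at $+\infty$ in terms of $\log\lambda_1$, $\log\lambda_3$, $\log\lambda_2$ and the geometry of which monomial dominates --- this is the source of the three cases and the $\sqrt{1-4\lambda_1\lambda_3}$ factors (the latter coming from the constant term surviving after one normalizes $c_\lambda$ on its domain of nonvanishing).

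Next, the ``acceleration'' / quantization argument: since $L(\epsilon)$ is convex and piecewise-linear with quantized slopes, and since one computes $L(\epsilon)$ explicitly for large $\epsilon$, the value $L(0)$ on the spectrum is pinned down once one knows that the cocycle is \emph{not} uniformly hyperbolic on the spectrum (so $L(0)$ lies on the ``first'' linear piece emanating from $\epsilon=0$, i.e. the acceleration at $0^+$ is either $0$ or forced) and once one knows where the break-points of $L$ sit relative to $\epsilon=0$. Concretely: in regions II and III one argues the graph of $L$ is flat ($L\equiv 0$) on a neighborhood of $0$ --- equivalently the cocycle is of ``subcritical'' type there --- which gives $L=0$ on the spectrum; in region I the first break-point occurs at $\epsilon=0$ itself (criticality), so $L(0)$ equals the limit of the linear continuation from the right, which is precisely the formula \eqref{eq_thoulessform}. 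The case split $\lambda_1\gtrless\lambda_3$ versus $\lambda_2\gtrless\lambda_1+\lambda_3$ matches exactly the competition between the three exponential rates identified above, so the three branches of \eqref{eq_thoulessform} fall out of the $\epsilon\to\infty$ asymptotics combined with convexity.

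To run this one also needs the input that the LE \emph{is} continuous in all parameters including rational approximation of $\beta$ (the other consequence of the extended global theory stated in the abstract): this reduces the computation of $L$ on the spectrum to, say, a dense set of $\beta$ or lets one pass between $\epsilon>0$ (where things are computable by Herman-type subharmonicity estimates) and $\epsilon=0$. For the vanishing in regions II and III, the cleanest route is duality together with this continuity: region II is dual to region I, and on the dual side the positivity of $L$ forces, via the Aubry-type duality relating the two cocycles, $L=0$ on the original side; region III is self-dual, so the same mechanism forces $L=0$ directly --- this is the step that the older duality-based literature could not close because it lacked the global-theory framework to make duality quantitative in the self-dual regime. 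I expect the main obstacle to be precisely the adaptation of Avila's global theory across singularities: verifying that convexity, piecewise-linearity, and the quantization of acceleration survive after factoring out the zeros of $\det A^E$, and controlling the behavior of $L(\epsilon)$ near the $\epsilon$-values where the factored-out scalar part itself degenerates (the ``isotropic'' case $\lambda_1=\lambda_3$ flagged in the introduction, where the two singularities of $c_\lambda$ collide). Everything else --- the explicit $\epsilon\to\infty$ asymptotics and the bookkeeping of the three cases --- is essentially a (careful) computation once that machinery is in place.
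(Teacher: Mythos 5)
Your overall architecture --- complexify the cocycle, compute the asymptotic slope of $\epsilon\mapsto L(\beta,A^E_\epsilon)$ at infinity, then use convexity, quantization of acceleration and non-uniform-hyperbolicity on the spectrum to extrapolate back to $\epsilon=0$ --- is exactly the paper's strategy (Proposition \ref{prop_asy}, Theorems \ref{thm_quantacc}, \ref{coro_regspec}, \ref{prop_LE}). The most serious gap is your treatment of regions II and III. You first (correctly) suggest that $L(\beta,A^E_\epsilon)$ is flat near $\epsilon=0$ there, but offer no way to verify it, and then declare that ``the cleanest route is duality \ldots\ region III is self-dual, so the same mechanism forces $L=0$ directly.'' That last step is invalid: Aubry duality converts localization in region I into zero Lyapunov exponent in its \emph{dual image}, which is region II only; self-duality of region III yields no information, and this is precisely the obstruction that blocked every previous treatment of region III. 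The paper's mechanism is duality-free: the extrapolation gives $L(\beta,B^E)=\max\{0,\Delta\}$ on the spectrum for \emph{all} $\lambda$, with $\Delta=L(\beta,M)-I(\lambda)$, and Observation \ref{thm_anadelta} checks $\Delta\le 0$ throughout II and III. (The alternative proof of Sec.~\ref{sec_alternative} does use duality for region II, but it reaches region III only via the extra observation that $L$ is independent of $\lambda_2$ there, which permits sliding to the plane $\lambda_1+\lambda_3=\lambda_2$ and approaching it from region II.)

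Two further points. Your proposal to handle singularities by factoring $A^E=B\,\tilde A^E$ with a scalar $B$ carrying the zeros of $\det A^E$ is not viable: the zeros of $\det A^E=\overline{c}(x-\beta)\,c(x)$ sit in individual matrix entries, not as a common factor. The paper instead restricts to the couplings $\mathcal{R}$ ($\lambda_1\ne\lambda_3$, $\lambda_2\ne\lambda_1+\lambda_3$), where $\det A^E$ is bounded away from zero both near $\mathbb{T}$ and for large $\abs{\epsilon}$ so that the global theory applies on those two ranges of $\epsilon$, and recovers the excluded planes from joint continuity of the LE in $(\lambda,E,\beta)$ (Theorems \ref{thm_conti_sing} and \ref{thm_lana}) together with Hausdorff continuity of the spectrum --- this continuity statement is itself one of the paper's main results, not an off-the-shelf input. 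Finally, the three branches of (\ref{eq_thoulessform}) do not arise from a competition of exponential rates in the matrix asymptotics: the $\epsilon\to\pm\infty$ limit produces the single constant $L(\beta,M)=\log\bigl|\tfrac{1+\sqrt{1-4\lambda_1\lambda_3}}{2}\bigr|$, independent of $\lambda_2$ and symmetric in $\lambda_1,\lambda_3$. The case split enters only through the subtraction $L(\beta,B^E)=L(\beta,A^E)-I(\lambda)$ and the evaluation of $I(\lambda)=\int_\mathbb{T}\log\abs{c_\lambda}\,\ud x$ by Jensen's formula, i.e., through which zeros of $c_\lambda$ lie inside the unit circle; see (\ref{eq_integral}).
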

We emphasize that Theorem \ref{thm_mainresult} holds for all {\em{irrational}} frequencies. 

Our strategy uses  complexification of the analytic cocycle $(\beta, A^{E})$ introduced in (\ref{eq_deftransfer1}). For all $\lambda$, where $(\beta, A^E)$ does not possess singularities, our method is based on Avila's global theory of analytic one-frequency Schr\"odinger operators \cite{B}, more precisely its extension to the GL(2,$\mathbb{C}$) case, that we develop in Sec. \ref{sec_jacobico}.

A complication, however, arises in extended Harper's equation due to zeros of $\det A^{E}(x)$ relating to zeros
of $c_{\lambda}(x)$. Singularity of the cocycle $(\beta,A^{E})$ constitutes an important difference to analytic Schr\"odinger
cocycles. \footnote{We note that the holomorphic  extension $A(x+i\epsilon)$ of the transfer matrix to all of $\mathbb{C}$ will encounter zeros for 
some {\bf{some non-zero}} $\epsilon$  for any $\lambda$ with $(\lambda_1,\lambda_3)\neq (0,0),$ (see Observation \ref{obs_cfun}), however, as we will
argue, these are much easier to deal with.}

An important ingredient that allows us to both overcome singularities of the cocycle as well as extend our results to all irrational $\beta$ is continuity of the Lyapunov exponent in the energy, $\lambda$, and $\beta$.  As will be shown, this is only a special case of a more general continuity property of the Lyapunov exponent, valid for the class of analytic cocycles. 

To this end denote by $\mathcal{C}^\omega(\mathbb{T},M_2(\mathbb{C}))$ the class of 1-periodic functions on the real line, with analytic extension to {\em{some}} band, $\abs{\im{z}} \leq \delta$, attaining values in the complex 2$\times$2 matrices, $M_2(\mathbb{C})$. The space $\mathcal{C}^\omega(\mathbb{T},M_2(\mathbb{C}))$ is topologized by an inductive limit topology described at the beginning of Sec. \ref{sec_jacobico}. 

\begin{theorem} \label{thm_conti_sing}
Given $\beta$ irrational, the Lyapunov-exponent, $L(\beta + ., .): \mathbb{T} \times \mathcal{C}^\omega(\mathbb{T},M_2(\mathbb{C})) \to \mathbb{R}$ is jointly continuous.
\end{theorem}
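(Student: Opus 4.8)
The plan is to prove joint continuity of $L$ at $(\beta,A)$ for every $A\in\mathcal{C}^\omega(\mathbb{T},M_2(\mathbb{C}))$, the fixed irrational $\beta$ being arbitrary. By regularity of the inductive limit topology we may assume all cocycles under consideration are analytic and uniformly bounded on one fixed band $\abs{\im z}\le\delta$ and converge uniformly to $A$ there. Write $A_n^{\beta'}(x):=A'(x+(n-1)\beta')\cdots A'(x)$ and $\ell_n(\beta',A'):=\frac1n\int_{\mathbb{T}}\log\norm{A_n^{\beta'}(x)}\,\ud x$, so that $L(\beta',A')=\inf_n\ell_n(\beta',A')$ by subadditivity. \emph{Step 1 (upper semicontinuity).} Each $\ell_n$ is jointly continuous, for \emph{all} frequencies: $A_n^{\beta'}$ depends continuously on $(\beta',A')$ uniformly on the band, $\log\norm{A_n^{\beta'}}$ is bounded above, and its logarithmic singularities — at the finitely many zeros of $x\mapsto\prod_j\det A'(x+j\beta')$ — are controlled from below by $\log\norm{A_n^{\beta'}(x)}\ge\frac12\sum_j\log\abs{\det A'(x+j\beta')}$, whose integral converges as $A'\to A$ by the standard continuity of $g\mapsto\int_{\mathbb{T}}\log\abs{g}$ on analytic $g\not\equiv0$ (we assume $\det A\not\equiv0$; the rank-one case $\det A\equiv0$, irrelevant for the Jacobi cocycles at hand, is disposed of directly). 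Hence $L=\inf_n\ell_n$ is upper semicontinuous everywhere; it remains to prove lower semicontinuity at $(\beta,A)$.

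\emph{Step 2 (the non-singular case).} Suppose $B\in\mathcal{C}^\omega(\mathbb{T},M_2(\mathbb{C}))$ has $\det B$ non-vanishing on $\mathbb{T}$. Then for $B'$ near $B$ the normalized cocycle $\hat B':=B'/\sqrt{\det B'}$ is a well-defined analytic $SL(2,\mathbb{C})$-cocycle near $\mathbb{T}$ (up to a sign irrelevant to the Lyapunov exponent), depending continuously on $B'$ in $\mathcal{C}^\omega$, and
\begin{equation*}
L(\beta',B')=\tfrac12\int_{\mathbb{T}}\log\abs{\det B'(x)}\,\ud x+L^{\mathrm{sl}}(\beta',\hat B')~\mbox{.}
\end{equation*}
The first term is jointly continuous (continuous integrand, no zeros on $\mathbb{T}$ for $B'$ near $B$); the second is jointly continuous at $(\beta,\hat B)$ by continuity of the Lyapunov exponent of analytic $SL(2,\mathbb{C})$-cocycles at irrational frequencies \emph{including rational approximants}, which here follows from the $GL(2,\mathbb{C})$-extension of Avila's global theory \cite{B} developed in Sec.~\ref{sec_jacobico}, combined with the treatment of rational frequency approximation as for Schr\"odinger cocycles. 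Thus $L$ is continuous at $(\beta,B)$ whenever $\det B$ is non-vanishing on $\mathbb{T}$.

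\emph{Step 3 (the singular case).} Now let $\det A$ vanish on $\mathbb{T}$. On the open band the function $\epsilon\mapsto g_{\beta',A'}(\epsilon):=L(\beta',A'(\,\cdot\,+i\epsilon))$ is convex, being the infimum over $n$ of the maps $\epsilon\mapsto\frac1n\int_{\mathbb{T}}\log\norm{A_n^{\beta'}(x+i\epsilon)}\,\ud x$, each an average over horizontal lines of the subharmonic function $z\mapsto\log\norm{A_n^{\beta'}(z)}$ and hence convex in $\epsilon$; moreover, exactly as in Step 1, $g_{\beta',A'}$ is bounded on a fixed sub-band uniformly over $\beta'$ and over $A'$ near $A$, so its one-sided slopes on a slightly smaller sub-band are bounded by a constant $K$ independent of $(\beta',A')$. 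Pick $\epsilon_1\in(0,\delta)$ small enough that $\det A$ — hence $\det A'$ for $A'$ near $A$ — has no zero on $\mathbb{T}+i\epsilon_1$ (only finitely many $\epsilon_1$ are excluded). Convexity and the slope bound give $L(\beta',A')=g_{\beta',A'}(0)\ge g_{\beta',A'}(\epsilon_1)-K\epsilon_1$; letting $(\beta',A')\to(\beta,A)$ and applying Step 2 to the non-singular cocycle $B:=A(\,\cdot\,+i\epsilon_1)$ yields $\liminf L(\beta',A')\ge g_{\beta,A}(\epsilon_1)-K\epsilon_1$; and since $g_{\beta,A}$ is convex, hence continuous at the interior point $0$, letting $\epsilon_1\to0^+$ gives $\liminf L(\beta',A')\ge g_{\beta,A}(0)=L(\beta,A)$. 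Together with Step 1 this proves continuity at $(\beta,A)$.

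The hardest point is the input to Step 2: joint continuity of the Lyapunov exponent for non-singular one-frequency analytic $GL(2,\mathbb{C})$-cocycles at an irrational frequency when the approximating frequencies are allowed to be rational — precisely what was not previously available, and what the global-theory machinery of Sec.~\ref{sec_jacobico} is designed to supply. By contrast, the singularities of the cocycle — the genuinely new structural feature relative to Schr\"odinger cocycles — turn out to be comparatively soft once Step 2 is in hand: convexity of $\epsilon\mapsto L(\,\cdot\,+i\epsilon)$ transports continuity across the singular fibre $\epsilon=0$, and the determinant split reduces $M_2(\mathbb{C})$ to $SL(2,\mathbb{C})$ away from it. The residual work is bookkeeping — checking that the constants in Step 1 and the slope bound $K$ are genuinely uniform over a neighbourhood in the inductive limit topology, that the sign ambiguity in $\sqrt{\det B'}$ and the winding of $\det B'$ around $\mathbb{T}$ cause no harm to $L^{\mathrm{sl}}$, and that the degenerate case $\det A\equiv0$ is handled separately.
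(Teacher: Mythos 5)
Your Steps 1--3 reproduce, in essence, the paper's own argument for the main case: upper semicontinuity of $L=\inf_n\ell_n$, reduction of the non-singular case to the known continuity theorem for analytic cocycles with non-vanishing determinant (Theorem \ref{thm_bj}, due to Bourgain--Jitomirskaya and its $M_2(\mathbb{C})$ extension --- note this is a cited input, not something derived from the global theory of Sec.~\ref{sec_jacobico}), and then transport of continuity from $0<\abs{\epsilon}\le\delta$ to $\epsilon=0$ via convexity of $\epsilon\mapsto L(\beta',A'_\epsilon)$ together with a locally uniform bound on its one-sided slopes (the paper's Lemma \ref{fact_conv}). That part is correct and is exactly the paper's strategy.

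The genuine gap is the case $\det A\equiv 0$ on the band, which you set aside as ``irrelevant'' and ``bookkeeping.'' It is neither: the theorem is stated on all of $\mathcal{C}^\omega(\mathbb{T},M_2(\mathbb{C}))$, and your Step 3 machinery breaks down there. The choice of $\epsilon_1$ with no zeros of $\det A$ on $\mathbb{T}+i\epsilon_1$ is impossible, and --- more seriously --- the uniform slope bound $K$ fails: for perturbations $A'$ of $A$ with $\det A'\not\equiv 0$, one only has $L(\beta',A'_\epsilon)\ge\frac12\int_{\mathbb{T}}\log\abs{\det A'(x+i\epsilon)}\,\ud x$, and this lower bound (hence the uniform boundedness of $g_{\beta',A'}$ on a sub-band over a neighbourhood of $A$) degenerates to $-\infty$ as $A'\to A$, so convexity gives no slope control near such $A$. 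The paper handles this case by an entirely different mechanism: off the real axis the kernel and range of $A(z)$ are analytic line bundles, and one conjugates $(\beta,A)$ to a diagonal cocycle $\mathrm{diag}(c(z),0)$ (or finds $A(z+\beta)A(z)\equiv 0$, whence $L=-\infty$ and upper semicontinuity suffices); continuity then reduces to continuity of the LE at the \emph{constant} cocycle $\mathrm{diag}(1,0)$, which requires Ruelle's result on cocycles with eigenvalues of distinct moduli (Lemma \ref{lem_contleconst}). Without supplying this argument your proof covers only the open set $\mathcal{A}^\omega(\mathbb{T},M_2(\mathbb{C}))$ of cocycles with $\det A\not\equiv 0$, i.e.\ it proves Theorem \ref{thm_contile_v1} but not Theorem \ref{thm_conti_sing}.
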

\begin{remark} 
\begin{itemize}
\item[(i)] It is important that $\beta$ is irrational, since at rational $\beta$ the theorem is known to be false even for the $SL(2,\mathbb{C})$-case (see e.g. Remark 5 in \cite{B} for a counter-example).
\item[(ii)] If $D \equiv 0$, $L(\beta, D)=-\infty$ for any $\beta \in \mathbb{R}$. \label{rem_defletriv}
\end{itemize}
\end{remark}

Theorem \ref{thm_conti_sing} is false when reducing the degree of regularity to even smooth cocycles. Furthermore, let $\mathbb{N}_0:=\mathbb{N}\cup\{0\}$. For $k \in \mathbb{N}_0 \cup \{\infty\}$, let $\mathcal{C}^k(\mathbb{T},M_2(\mathbb{C}))$ denote the class of 1-periodic $M_2(\mathbb{C})$-valued functions on the real line,  topologized by the complete metric \footnote{For $k \in \mathbb{N}_0$ the metric is in fact derived from a norm, which however is {\em{not}} true for $k=\infty$.}
\begin{equation} \label{eq_metricsck}
\rho_k(D,E):= \begin{cases} \max_{0\leq j \leq k} \max_{x \in \mathbb{T}} \norm{\partial^j (D(x) - E(x))} & \mbox{,} ~k \in \mathbb{N}_0 ~\mbox{,} \\
                                                \sum_{n=0}^{\infty} \frac{1}{2^n} \frac{\rho_n(D,E)}{1+\rho_n(D,E)} & \mbox{,} ~k = \infty ~\mbox{,} \end{cases}
\end{equation}
where $\norm{.}$ is any fixed matrix norm 
. Here, $k=0$ is simply the continuous maps, formally $\partial^0 = \mathrm{id}$, the identity operation on functions. For $k \in \mathbb{N}$, $\partial^k D$ denotes the $kth$ derivative taken componentwise.  We restrict to the subset
\begin{equation}
\mathcal{L}^k:=\{ D \in \mathcal{C}^k(\mathbb{T},M_2(\mathbb{C})): \log \norm{D} \in L^1(\mathbb{T},\ud x) \} ~\mbox{,}
\end{equation}
where the LE is finite (see (\ref{eq_defle_limit})). By Fatou's Lemma $\mathcal{L}^k$ is open in $\mathcal{C}^k(\mathbb{T},M_2(\mathbb{C}))$. Then, we can claim:
\begin{theorem} \label{thm_conti_sing_opt}
Theorem \ref{thm_conti_sing} is optimal, i.e. $L(\beta+.,.) : \mathbb{T} \times \mathcal{L}^\infty \to \mathbb{R}$ is discontinuous. Moreover, for any $k \in \mathbb{N}_0$, $L(\beta + ., .): \mathbb{T} \times \mathcal{C}^\omega \to \mathbb{R}$ is discontinuous in the topology of $\mathbb{T} \times\mathcal{C}^k$. 
\end{theorem}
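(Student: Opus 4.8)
The plan is to exhibit explicit sequences along which $L$ fails to converge; in fact a single, well-chosen family of analytic cocycles settles both assertions at once. The starting point is a constraint: on the open set of non-singular cocycles one has $L(\beta,D)=\inf_{N\geq 1}\frac1N\int_{\mathbb T}\log\norm{D^{(N)}(x)}\,\ud x$, an infimum of $C^0$-continuous functionals, so $L$ is \emph{upper} semicontinuous in the $C^0$ topology, hence in every $\mathcal C^k$ and in $\mathcal C^\omega$. Thus any discontinuity must be a downward jump in the limit: it suffices to produce a cocycle $D_0$ with $L(\beta,D_0)=\ell_0>0$ that, in the appropriate weaker topology, is a limit of cocycles $D_n$ with $\limsup_n L(\beta,D_n)<\ell_0$. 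The factor $\mathbb T$ is irrelevant here, since for irrational $\beta$ one has $L(\beta,D(\,\cdot\,+\varphi))=L(\beta,D)$ for all $\varphi$, so a counterexample in the second variable already breaks joint continuity. Moreover $D_0$ cannot be uniformly hyperbolic, as uniform hyperbolicity is $C^0$-open and on it $L$ is locally bounded away from $0$; so $D_0$ must be non-uniformly hyperbolic with a positive exponent, i.e. its stable and unstable line fields, although a.e.\ distinct, come arbitrarily close along the $\beta$-orbit.

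Next I would build $D_0$ and the perturbations $D_n=D_0+w_n$ explicitly. A concrete choice for $D_0$ is an almost Mathieu cocycle at coupling $>1$ and energy in its (non-uniformly hyperbolic) spectrum, for which $L(\beta,D_0)>0$; a cocycle tailored so that $E^s$ and $E^u$ are pinched at prescribed places would work equally well. The perturbation $w_n$ is concentrated on an orbit segment of length $\asymp q_n$, where $p_n/q_n$ is a continued-fraction convergent of $\beta$ (so $\norm{q_n\beta}\to 0$ and the cocycle is nearly $q_n$-periodic), and is designed so that the $q_n$-step transfer matrix of $D_n$ acquires a pair of unit-modulus eigenvalues; near-periodicity then propagates this ellipticity to all orbit lengths and forces $L(\beta,D_n)$ down to $0$. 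To keep $w_n$ analytic while securing the right convergences, take $w_n$ a trigonometric polynomial whose degree $d_n\to\infty$ and whose amplitude $\varepsilon_n$ satisfy $\log d_n\ll -\log\varepsilon_n\ll d_n$ (e.g.\ $\varepsilon_n=e^{-\sqrt{d_n}}$). Then $\norm{w_n}_{C^k}\to 0$ for every $k$, so $D_n\to D_0$ in $\mathbb T\times\mathcal C^k$ for each $k\in\mathbb N_0$ and (the norms being bounded) also in $\mathbb T\times\mathcal L^\infty$; yet for every fixed strip width $\delta>0$ one has $\norm{w_n}_\delta\asymp\varepsilon_n e^{2\pi\delta d_n}\to\infty$, so $D_n\not\to D_0$ in $\mathcal C^\omega$ — exactly as it must be, since $L$ \emph{is} continuous there by Theorem \ref{thm_conti_sing}. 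As $L(\beta,D_n)=0<\ell_0=L(\beta,D_0)$, this one family simultaneously shows that $L(\beta+\,\cdot\,,\cdot):\mathbb T\times\mathcal L^\infty\to\mathbb R$ is discontinuous and that $L(\beta+\,\cdot\,,\cdot):\mathbb T\times\mathcal C^\omega\to\mathbb R$ is discontinuous in the $\mathcal C^k$ topology for every $k$.

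The main obstacle is the quantitative step: verifying that the engineered ellipticity at scale $q_n$ genuinely suppresses the growth of $\norm{D_n^{(N)}}$ for \emph{all} $N$, not merely for $N$ comparable to $q_n$. This is the delicate core of Bochi--Ma\~n\'e type arguments, transplanted from continuous cocycles over a general ergodic base to a fixed irrational rotation, and it is where the resonant scales enter: one bounds the error committed in treating $D_n$ as $q_n$-periodic, which is controlled by $\norm{q_n\beta}$ together with the small analytic size of $w_n$ at the relevant complex height, and then a single-scale eigenvalue computation is upgraded to a statement about the Lyapunov exponent. For $\beta$ admitting very good rational approximations this is essentially immediate; for Diophantine $\beta$ one needs either a multi-scale pinching (exploiting a whole sequence of resonances at once) or a more robust construction of $D_0$ for which one well-placed perturbation suffices. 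Granting this estimate, the remaining points — convergence of $D_n$ in each of the stated topologies, membership of all cocycles in $\mathcal L^\infty$, and the semicontinuity reduction — are routine.
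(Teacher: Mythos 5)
Your reduction to upper semicontinuity (so that any discontinuity must be a downward jump) is correct, and your observation that the $\mathbb{T}$-factor is harmless is fine. But the heart of your argument --- that a sequence of analytic perturbations $w_n$, concentrated near resonant scales $q_n$ and engineered to create elliptic behavior of the $q_n$-step transfer matrix, actually forces $\limsup_n L(\beta,D_n)<L(\beta,D_0)$ for an \emph{arbitrary fixed} irrational $\beta$ --- is precisely the step you defer, and it is not routine. You acknowledge that for Diophantine $\beta$ one needs ``multi-scale pinching'' or ``a more robust construction of $D_0$''; that is exactly the delicate content of the Wang--You counterexamples for $SL(2,\mathbb{C})$-cocycles cited in the remark following the theorem, and no amount of granting ``the remaining points are routine'' closes it. A single-scale eigenvalue computation at length $q_n$ does not by itself control $\frac1N\log\norm{D_n^{(N)}}$ for all $N$, and the error in treating $D_n$ as $q_n$-periodic is only small when $\norm{q_n\beta}$ is very small relative to the hyperbolicity of $D_0$, which fails badly for Diophantine $\beta$. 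As written, your proposal proves the theorem at most for a set of Liouville-type frequencies, not for every irrational $\beta$.

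You also miss that the theorem lives in $M_2(\mathbb{C})$, not $SL(2,\mathbb{C})$, which makes a far more elementary counterexample available --- this is the route the paper takes. By (\ref{eq_qa2}), the diagonal cocycle $D=\mathrm{diag}(c,c)$ has $L(\beta,D)=\int_{\mathbb{T}}\log\abs{c(x)}\,\ud x=:I(c)$ for \emph{every} $\beta$, with no dynamics involved. The whole problem then reduces to Proposition \ref{prop_contile_top}: the functional $I$ is discontinuous in the $\mathcal{C}^k$ topology at $c(x)=\sin^{2k}(2\pi x)$ (and in $\mathcal{C}^\infty$ at a suitable flat function vanishing at finitely many points), because one can add a $\mathcal{C}^k$-small trigonometric polynomial that makes $c_n$ of size $2^{-n}$ on a neighborhood $J_n$ of the zero set with $\abs{J_n}\asymp 1/n$, producing a fixed deficit $\int_{J_n}\log\abs{c_n}\le -\log 2$ while the integral over the complement converges to $I(c)$. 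This exploits the singularity of the cocycle (zeros of $\det D$ on $\mathbb{T}$) rather than non-uniform hyperbolicity, requires no control of transfer-matrix products, and works uniformly in $\beta$. Your approach, if completed, would yield the stronger $SL(2,\mathbb{C})$ statement, but completing it is a separate and substantial piece of work; for the theorem as stated it is the wrong level of difficulty, and in its current form it has a genuine gap.
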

\begin{remark}
\begin{itemize}
\item[(i)] Certainly this implies discontinuity in $\mathcal{C}^k(\mathbb{T},M_2(\mathbb{C})), k\in \mathbb{N}\cup\{\infty\}$.
\item[(ii)] Wang and You \cite{DD} have recently obtained Theorem \ref{thm_conti_sing_opt} for $SL(2,\mathbb{C})$-cocycles  by explicitly producing subtle counter-examples for any degree of regularity $\mathcal{C}^k(\mathbb{T}, SL(2,\mathbb{C}))$, $k \in \mathbb{N} \cup \{ \infty \}$. In case of $M_2(\mathbb{C})$-cocycles, however, the construction can be done using elementary tools of harmonic analysis only, which is what is done here.
\item[(iii)] Discontinuity for $k=0$ is well known. In \cite{GG}, Furman proved that for irrational $\beta$, the LE is discontinuous at every $(\beta, D)$ where the limit (\ref{eq_ledef2}) is not uniform. More recently, in \cite{GGG}, Bochi proved that cocycles with zero LE are dense in $\mathcal{C}^0(\mathbb{T},SL(2,\mathbb{C}))$.
\end{itemize}
\end{remark}

Theorem \ref{thm_conti_sing} was preceded by a result in \cite{CC} which had already obtained continuity of the LE for singular analytic cocycles but only under a Diophantine condition. The present formulation removes this limitation. 

The main accomplishment here, however, is to also prove continuity in the frequency at any irrational $\beta$, which allows to consider rational approximates of the frequency. Such rational approximations are particularly useful for practical purposes since the resulting dynamical systems are periodic. For instance, continuity in this sense was the key assumption that Thouless used to compute the LE of extended Harper's model in region I.

It is noteworthy that Theorem \ref{thm_conti_sing} arises as a consequence of ideas developed in Sec. \ref{sec_reglyap} and illustrates the power of Avila's global theory approach \cite{B}. The proof given here not only removes the problem of our earlier work to deal with Liouvillean frequencies but is also surprisingly simple. 
The key idea underlying this argument was communicated to us by Artur Avila (see more in the Acknowledgement).

We organize the paper as follows. Sec. \ref{sec_reglyap} forms the technical heart of our strategy to determine the Lyapunov exponent of extended Harper's equation. Based on analyticity of the cocycle $(\beta,A^{E})$ we consider the complex extension $A^{E}(x+i \epsilon)=:A_\epsilon^E(x)$, $\epsilon \in \mathbb{R}$. The underlying idea is to achieve ``{\em{almost constant cocycles}}'' by considering the limits $\epsilon \to \pm \infty$.

In Sec. \ref{sec_jacobico}, we extend Avila's global theory of analytic Schr\"odinger (or more generally, SL(2,$\mathbb{C}$)) cocycles to the non-singular Jacobi (or GL(2,$\mathbb{C}$)) case. In particular, in Theorems \ref{thm_quantacc}, \ref{thm_unifhyp} and \ref{coro_regspec} we analyze the dependence of the Lyapunov exponent of the complexified cocycle on $\epsilon$, eventually enabling us to extrapolate to $\epsilon=0$. 

As an immediate application,  in Sec. \ref{sec_cont} we establish continuity of the LE for singular analytic cocycles as stated in Theorem \ref{thm_conti_sing}. This is the key ingredient to deal with singularities of the cocycle. In the same section we also prove optimality of this result (Theorem \ref{thm_conti_sing_opt}).

Having summarized some simple facts about extended Harper's model in Sec. \ref{sec_ehm}, Section \ref{sec_asympt} obtains asymptotic expressions of its Lyapunov exponent. Sec.  \ref{sec_whatwecansay} carries out the extrapolation process leading to Theorem \ref{thm_mainresult}. In fact, as a result of our analysis we obtain the LE for all $\epsilon$ (Theorem \ref{prop_LE}).

An interesting property of the LE of extended Harper's equation is its independence of $\lambda_2$ in region III (see Fig. \ref{figure_1}). This symmetry is already revealed in the asymptotic analysis of Sec. \ref{sec_asympt} and hence inspires an alternative proof of Theorem \ref{thm_mainresult} which we give in Sec. \ref{sec_alternative}.

We mention that the strategy we develop to obtain the LE for extended Harper's equation does not rely on the details of the functions $c,v$ in (\ref{eq_hamiltonian1}); our method is hence applicable to a general {\em{singular}} quasi-periodic Jacobi operator with analytic coefficients.

Finally, in view of future applications to other Jacobi operators, in Sec. \ref{sec_concl} we point out differences and peculiarities of the Jacobi case as compared to Schr\"odinger cocycles which relate to possible zeros of $\det (A^E(x))$.  As we will see, extended Harper's model provides useful examples and counterexamples. 

In particular, we address the question of {\em{almost reducibility}} which, for Schr\"odinger cocycles, is conjectured to provide a necessary criterion for (purely) absolutely continuous spectrum in terms of the $\epsilon$ dependence of $L(\beta,A_\epsilon^E)$ near $\epsilon=0$ (``almost reducibility conjecture (ARC)'' \cite{B,V,W,X}). Extended Harper's equation provides a nice example that the ARC is false for the general Jacobi case due to possible zeros of $\det(A^{E}(x))$ on $\mathbb{T}$.

{\bf Acknowledgement.} We are grateful to Artur Avila for his remarks on an earlier version of this paper where, among other things,  he essentially provided a simple proof of continuity of the Lyapunov exponent of singular cocycles for all frequencies once such continuity for the Diophantine case is
established, using the ideas of \cite{B}.
 It turned out the same idea 
could be used to provide a simple proof of joint continuity (presented here), significantly simplifying our original approach.  Additionally, our proof of continuity for the case of identically vanishing determinant follows his suggestions as well.
We also thank Anton Gorodetski for useful discussions during the preparation of this manuscript.
\section{Complexified cocycles} \label{sec_reglyap}

\subsection{Jacobi cocycles - General results} \label{sec_jacobico}

We start by considering a general quasi periodic Jacobi operator with analytic coefficients as given in (\ref{eq_hamiltonian}). We will be dealing with analytic functions on $\mathbb{T}$. To this end, some topological remarks will come handy.

Given a Banach space $(\mathfrak{X},\Vert . \Vert_{\mathfrak{X}})$, let $\mathcal{C}_\delta^\omega(\mathbb{T}, \mathfrak{X})$ be the space of $\mathfrak{X}$-valued functions on $\mathbb{T}$ with holomorphic extension to a neighborhood of $\abs{\im{z}} \leq \delta$, $\delta>0$. We shall denote the band $\abs{\im{z}} \leq \delta$ about $\mathbb{T}$ by $\mathbb{T}_\delta$. The set $\mathcal{C}_\delta^\omega(\mathbb{T}, \mathfrak{X})$ naturally becomes a Banach space in its own right when equipping it with the norm $\Vert X \Vert_\delta := \sup_{\abs{\im{z}}\leq \delta} \Vert X(z) \Vert_{\mathfrak{X}}$, for $X(.) \in \mathcal{C}_\delta^\omega(\mathbb{T}, \mathfrak{X})$. 

To obtain statements independent of $\delta$, we consider $\mathcal{C}^\omega(\mathbb{T};\mathfrak{X}):=\cup_{\delta>0} \mathcal{C}_\delta^\omega(\mathbb{T}, \mathfrak{X})$. The appropriate topology for $\mathcal{C}^\omega(\mathbb{T};\mathfrak{X})$ is the {\em{inductive limit topology}} induced by $\Vert . \Vert_\delta$. In this topology, convergence of a sequence $X_n \to X$ is equivalent to existence of some $\delta>0$ such that $X_n \in \mathcal{C}_\delta^\omega(\mathbb{T}, \mathfrak{X})$ holds eventually and $\Vert X_n - X \Vert_\delta \to 0$ as $n \to \infty$. 

If not specifically stated the sets $\mathcal{C}_\delta^\omega(\mathbb{T}, \mathfrak{X})$ and $\mathcal{C}^\omega(\mathbb{T}, \mathfrak{X})$ will always be understood as  topological spaces with respect to the above mentioned topologies.

For $c(x) \neq 0$, we define the transfer matrix associated with the equation $H_{\theta;\beta} \psi = E \psi$,
\begin{eqnarray} \label{eq_deftransfer}
B^{E}(x) := \dfrac{1}{c(x)} \begin{pmatrix} E - v(x) & -\overline{c}(x-\beta) \\ c(x) & 0 \end{pmatrix} =: \dfrac{1}{c(x)} A^{E}(x)~\mbox{,}
\end{eqnarray}
where solutions to the Schr\"odinger equation are obtained iteratively by
\begin{eqnarray}
& \begin{pmatrix} \psi_{n} \\ \psi_{n-1} \end{pmatrix} = B^{E;n}(\beta,\theta) \begin{pmatrix} \psi_{0} \\ \psi_{-1} \end{pmatrix}  ~\mbox{,} \nonumber \\
& B^{E;n}(\beta,\theta) := B^{E}(\theta + \beta (n-1)) \dots  B^{E}(\theta) ~\mbox{,} \nonumber\\
& B^{E;n}(\beta,\theta) := B^{E;n}(\beta,\theta - n \beta)^{-1} ~\mbox{,} ~n\geq 1 ~\mbox{.} \label{eq_iterates}
\end{eqnarray}

Equation (\ref{eq_iterates}) has a convenient dynamical formulation based on the following definition:
\begin{definition} \label{def_anacoc}
\begin{itemize}
\item[(i)] For $\beta \in \mathbb{R}$ and $D: \mathbb{T} \to M_{2}(\mathbb{C})$ Borel-measurable satisfying 
\begin{equation} \label{eq_anaco}
\int_{\mathbb{T}} \log_+{\norm{D(x)}} \ud x < \infty ~\mbox{,}
\end{equation}
we call the pair $(\beta,D(x))$ a {\em{cocycle}} understood as a linear skew-product acting on  $\mathbb{T} \times \mathbb{C}^{2}$ defined by $(x,v) \mapsto \left(x+\beta, D(x) v\right)$. In (\ref{eq_anaco}), $\norm{.}$ may be {\em{any}} norm on $M_2(\mathbb{C})$. If $D \in \mathcal{C}^\alpha(\mathbb{T},M_2(\mathbb{C}))$, $\alpha \in \mathbb{N}_0 \cup \{\infty,\omega\}$, $(\beta,D(x))$ is referred to as $\mathcal{C}^\alpha$-cocyle (also {\em{analytic cocycle}} if $\alpha=\omega$). 
\item[(ii)] A $\mathcal{C}^\alpha$-cocycle $(\beta,D(x))$ is called {\em{singular}} if $\det D(x_0)=0$ for some $x_0 \in \mathbb{T}$, in which case $x_0$ is referred to as singularity of the cocycle $(\beta,D(x))$.
\end{itemize}
\end{definition}
\begin{remark}
For any $D \in \mathcal{C}^\omega(\mathbb{T},M_2(\mathbb{C}))\setminus \{0\}$, analyticity guarantees $\log{\norm{D}} \in L^1(\mathbb{T}, \ud x)$ (for a simple argument see e.g. the proof of Lemma 2.9 in \cite{CC}). 
\end{remark}

Since the functions $c,v$ extend analytically to some band $\abs{\im z}\leq \delta$, we may consider the complexified transfer matrix $B_{\epsilon}^{E}(x):=B^{E}(x+i\epsilon)$ for $\abs{\epsilon} \leq \delta$. Set
\begin{equation} \label{eq_Ifun}
I_{\epsilon}(c) := \int_{\mathbb{T}} \log \abs{c(x+i\epsilon)} \ud x = \frac{1}{2} \int_{\mathbb{T}} \log{\abs{\det{A_{\epsilon}^{E}}(x)}} \ud x ~\mbox{,}
\end{equation}
and write $I_{0}(c) =: I(c)$ to simplify notation.

Given a cocycle $(\beta, D)$ with $\beta$ irrational, the Lyapunov exponent is defined by
\begin{eqnarray} \label{eq_defle_limit}
L(\beta, D) &:=& \lim_{n\to\infty} \frac{1}{n} \int_{\mathbb{T}} \log \norm{D^{(n)}(x)} \ud x =  \inf_{n \in \mathbb{N}} \frac{1}{n} \int_{\mathbb{T}} \log \norm{D^{(n)}(x)} \label{eq_ledef1}  \\
& = & \lim_{n \to \infty} \frac{1}{n} \log\norm{D^{(n)}(x)} ~\mbox{,} \label{eq_ledef2}
\end{eqnarray}
where 
\begin{equation} \label{eq_defiteratesgen}
D^{(n)}(x):=D(x+(n-1)\beta)\dots D(x)
\end{equation}
is the analogue of the $n$-step transfer matrix in (\ref{eq_iterates}). Existence and a.e. independence of the limit in (\ref{eq_ledef2}) follows by the sub-additive ergodic theorem.

In particular, for the cocycles $(\beta, A^E)$ and $(\beta, B^E)$, equation (\ref{eq_deftransfer}) implies the following relation 
\begin{equation} \label{eq_relAB}
L(\beta,B_{\epsilon}^{E}) = L(\beta, A_{\epsilon}^{E}) - I_{\epsilon}(\lambda)~\mbox{,}
\end{equation}
which allows us to focus on the {\em{analytic}} cocycle $(\beta, A^E)$.

In view of Theorem \ref{thm_conti_sing}, notice that by the sub-additive ergodic theorem the limit on the right hand side of (\ref{eq_ledef2}) still exists a.e. (is $L^1$ and invariant under rotations by $\beta$) even if $\beta$ is rational but generally this limit will depend on $x$ (and thus will not be equal to (\ref{eq_ledef1})). 

For {\em{rational}} $\beta$, $\beta = \frac{p}{q}$ with $(p,q)=1$, we define the LE as given in (\ref{eq_ledef1}) and note that by dominated convergence,
\begin{equation} \label{eq_le_rat}
L(\frac{p}{q}, D) = \frac{1}{q} \int_\mathbb{T} \log \rho\left( D^{(q)}(x) \right) \ud x ~\mbox{.}
\end{equation}
Here and later, $\rho(A)$ denotes the spectral radius of a matrix $A$.

Following, we denote the spectrum of $H_{\theta;\beta}$ by $\Sigma$ and its a.s. components by $\Sigma_{pp}$, $\Sigma_{ac}$, and $\Sigma_{sc}$, respectively.
Our analysis of $L(\beta, B^{E})$, for $E \in \Sigma$, is based on the complexification of the cocycle $(\beta, A^{E}(x))$. 

To this end, let $(\beta,D(x))$ be a fixed analytic cocycle. We introduce the acceleration,
\begin{equation} \label{def_acc}
\omega(\beta, D; \epsilon) := \frac{1}{2\pi} \lim_{h \to 0+} \dfrac{L(\beta, D_{\epsilon+h}) - L(\beta, D_{\epsilon})}{h} ~\mbox{.}
\end{equation}
The acceleration was first introduced in \cite{B} for analytic $SL(2,\mathbb{C})$-cocycles. Existence of $\omega(\beta, D; \epsilon)$ is a consequence of convexity of $L(\beta, D_{\epsilon})$ w.r.t. $\epsilon$ which is still true even if $(\beta,D)$ is singular. Albeit simple, convexity of $L(\beta, D_{\epsilon})$ w.r.t. $\epsilon$ for singular analytic cocycles will be shown to have far-reaching consequences in what is to come. We include a brief argument in Appendix \ref{app_1} (see Proposition \ref{prop_convlyap}).

The next two theorems, Theorem \ref{thm_quantacc} and  \ref{thm_unifhyp}, extend statements proven by Avila for analytic cocyles with SL(2,$\mathbb{C}$)-transfer matrices 
\cite{B}. Avila's proof uses continuity of the Lyapunov exponent for Schr\"odinger cocycles established in \cite{C} (see Theorem \ref{thm_bj} below). 

It is straightforward to extend these statements to analytic cocycles whose determinants are bounded away from zero on a strip, making use of the following Lemma which we believe to be well known. For the reader's convenience, we provide proof in Appendix \ref{app_lemmaproof}.
\begin{lemma} \label{lem_root}
Let $f \in \mathcal{C}^\omega(\mathbb{T};\mathbb{C})$ with $\min_{x \in \mathbb{T}} \abs{f(x)} > 0$. Then, there exists $g \in \mathcal{C}^\omega(\mathbb{R}/2\mathbb{Z};\mathbb{C})$ such that $g^2 = f$.
\end{lemma}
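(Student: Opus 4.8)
The plan is to reduce the problem to taking a holomorphic logarithm, whose only obstruction is the winding number of $f$ about $0$. First I would observe that since $f$ extends holomorphically to some band $\mathbb{T}_\delta$ and is continuous on the compact set $\overline{\mathbb{T}_{\delta'}}$ for every $\delta' < \delta$, the hypothesis $\min_{x \in \mathbb{T}} \abs{f(x)} > 0$ together with shrinking $\delta$ allows us to assume that $f$ is zero-free on all of $\mathbb{T}_\delta$. Then the integer
\[
n := \frac{1}{2\pi i} \int_0^1 \frac{f'(x)}{f(x)} \ud x
\]
(the degree of the loop $f|_{\mathbb{T}} : \mathbb{T} \to \mathbb{C}^*$) is well defined, and it measures precisely the failure of $f$ to admit a single-valued logarithm.

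Next I would peel off the winding. Set $\tilde f(z) := \mathrm{e}^{-2\pi i n z} f(z)$, which is holomorphic, zero-free, and $1$-periodic on $\mathbb{T}_\delta$, and whose winding number about $0$ is $0$. Consequently $\tilde f'/\tilde f$ is a holomorphic $1$-periodic function on $\mathbb{T}_\delta$ (viewed as an annulus about $\mathbb{T}$) whose integral over one period vanishes, so its primitive $\Lambda(z) := \int_{z_0}^{z} \tilde f'/\tilde f$ is single-valued and $1$-periodic on $\mathbb{T}_\delta$. A short computation shows that $\mathrm{e}^{-\Lambda}\tilde f$ has vanishing derivative, hence equals a nonzero constant $c$; fixing any value of $\log c$ and replacing $\Lambda$ by $L := \Lambda + \log c$ yields a holomorphic $1$-periodic function $L$ on $\mathbb{T}_\delta$ with $\mathrm{e}^{L} = \tilde f$.

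Finally I would set $g(z) := \mathrm{e}^{\pi i n z + L(z)/2}$. This is holomorphic on $\mathbb{T}_\delta$, and $g(z)^2 = \mathrm{e}^{2\pi i n z}\mathrm{e}^{L(z)} = \mathrm{e}^{2\pi i n z}\tilde f(z) = f(z)$. Since $L$ is $1$-periodic and $\mathrm{e}^{\pi i n (z+2)} = \mathrm{e}^{\pi i n z}\mathrm{e}^{2\pi i n} = \mathrm{e}^{\pi i n z}$, we obtain $g(z+2) = g(z)$, so $g$ is holomorphic on a band about $\mathbb{R}/2\mathbb{Z}$, i.e. $g \in \mathcal{C}^\omega(\mathbb{R}/2\mathbb{Z};\mathbb{C})$ (and $g$ is in fact already $1$-periodic when $n$ is even). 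The only point that requires any care — hence the ``obstacle,'' such as it is — is the bookkeeping that forces $g$ onto the double cover $\mathbb{R}/2\mathbb{Z}$ exactly when $n$ is odd, together with checking that every step stays inside a fixed band so that $g$ genuinely lies in $\mathcal{C}^\omega$.
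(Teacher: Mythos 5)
Your proof is correct, and it takes a more direct route than the paper. The paper first proves the lemma for trigonometric polynomials: it factors $f$ over its roots in the variable $\mathrm{e}^{2\pi i z}$, expands each term of $f'/f$ as a geometric series to exhibit the primitive as $2\pi i n_+ x$ plus a $1$-periodic function (where $n_+$ counts roots on one side of the strip), sets $g=\mathrm{e}^{q/2}$, and then handles general $f$ by uniform polynomial approximation, showing the resulting logarithms $q_n$ converge. Your argument bypasses both the factorization and the approximation step entirely: you identify the sole obstruction as the winding number $n=\frac{1}{2\pi i}\int_0^1 f'/f$, peel it off with the factor $\mathrm{e}^{-2\pi i n z}$, and integrate the logarithmic derivative of the remainder on the annulus, where single-valuedness and $1$-periodicity of the primitive are immediate because the period integral vanishes. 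The two proofs rest on the same underlying fact --- the paper's $n_+$ is exactly your winding number, by the argument principle, and in both cases the passage to the double cover $\mathbb{R}/2\mathbb{Z}$ is forced precisely when that integer is odd --- but your version is cleaner, works for all $f$ at once, and makes the topological obstruction explicit, whereas the paper's computation has the side benefit of producing the explicit Fourier-type expansion (\ref{eq_root2}) that it reuses elsewhere. All the steps you would need to check (zero-freeness on a shrunken band, integrality of $n$, constancy of $\mathrm{e}^{-\Lambda}\tilde f$, $2$-periodicity of $\mathrm{e}^{\pi i n z}$) go through as you describe.
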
 
\begin{remark}
\begin{itemize}
\item[(i)] From basic complex analysis it is clear that $\sqrt{f}$ can be defined holomorphically in a neighborhood of $\mathbb{R}$, however it is not a priori obvious that this also yields a periodic function (which in general is 2- instead of 1-periodic).
\item[(ii)] The same proof shows that $f^{\frac{p}{q}}$, can be defined in $\mathcal{C}^\omega(\mathbb{R}/q \mathbb{Z}; \mathbb{C})$.
\end{itemize}
\end{remark}

As a consequence of Lemma \ref{lem_root}, Avila's results on analytic $SL(2,\mathbb{C})$-cocycles carry over to a {\em{non-singular}} analytic cocycle $(\beta, D)$ upon consideration of the ``renormalized'' $SL(2,\mathbb{C})$-cocycle,
\begin{equation} \label{eq_Dpr}
D^{\prime}:=\frac{D}{\sqrt{\det{D}}} ~\mbox{,} ~D^\prime \in \mathcal{C}^\omega(\mathbb{R}/2 \mathbb{Z}; M_2(\mathbb{C})) ~\mbox{.}
\end{equation}
In this context, it is useful to notice that a given matrix-valued function $D: \mathbb{T} \to M_2(\mathbb{C})$ satisfying the hypotheses of Definition \ref{def_anacoc} (i)  may also be considered as a function on $\mathbb{R}/2 \mathbb{Z}$, in which case the Lyapunov exponents of the respective cocycles (for a fixed irrational $\beta$) are related by a factor 2.

\begin{theorem}[Quantization of acceleration] \label{thm_quantacc}
Consider an analytic cocycle $(\beta,D(x))$ where $\beta$ is irrational and $\det D(x)$ bounded away from zero on a strip $\mathbb{T}_\delta$. Then, the acceleration on $\mathbb{T}_\delta$ is integer-valued.
\end{theorem}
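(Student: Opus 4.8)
The plan is to reduce the statement to Avila's quantization result for analytic $SL(2,\mathbb{C})$-cocycles. Since $\det D(x)$ is bounded away from zero on the strip $\mathbb{T}_\delta$, Lemma~\ref{lem_root} (applied on each slice, or directly to $\det D$ on the closed band, after shrinking $\delta$ slightly so that $\min|\det D|>0$ on the closure) provides $g\in\mathcal{C}^\omega(\mathbb{R}/2\mathbb{Z};\mathbb{C})$ with $g^2=\det D$ holomorphic and nonvanishing on a neighborhood of $\mathbb{T}_\delta$. Then the renormalized cocycle $D'(x):=D(x)/g(x)$ takes values in $SL(2,\mathbb{C})$ and is analytic on $\mathbb{R}/2\mathbb{Z}$, extending holomorphically to the same band. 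First I would record the exact relation between the accelerations of $(\beta,D)$ and $(\beta,D')$: from $D_\epsilon = g_\epsilon\, D'_\epsilon$ and multiplicativity of $\det$ (equivalently, of the scalar factor under iteration), one gets
\begin{equation} \label{eq_Lsplit}
L(\beta,D_\epsilon) = L(\beta,D'_\epsilon) + \tfrac12\,I_\epsilon\big(\det D\big),
\end{equation}
where $I_\epsilon(\det D)=\int \log|g(x+i\epsilon)|^2\,\ud x = \int\log|\det D(x+i\epsilon)|\,\ud x$, using that $D$ is regarded on $\mathbb{R}/2\mathbb{Z}$ with the corresponding factor-$2$ bookkeeping as noted after \eqref{eq_Dpr}.

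The second step is to observe that $\epsilon\mapsto I_\epsilon(\det D)$ is an \emph{affine} function of $\epsilon$ on $\mathbb{T}_\delta$. This is the standard Thouless-type computation: $\log|\det D(x+i\epsilon)|$ is harmonic in $x+i\epsilon$ on the strip (since $\det D$ is holomorphic and nonvanishing there), so its average over the circle $\{\im z=\epsilon\}$ is an affine function of $\epsilon$ by the mean value property / Jensen's formula with no zeros in the band. Consequently $I_\epsilon(\det D)$ contributes a constant to the right-derivative difference quotient in \eqref{def_acc}, and therefore
\begin{equation} \label{eq_accsplit}
\omega(\beta,D;\epsilon) = \omega(\beta,D';\epsilon) + (\text{a constant independent of }\epsilon).
\end{equation}
More precisely, differentiating \eqref{eq_Lsplit} and using affineness of $\epsilon\mapsto I_\epsilon$, the two accelerations differ by the fixed slope $\tfrac{1}{2\pi}\cdot\tfrac{1}{2}\,\partial_\epsilon I_\epsilon(\det D)$, which is itself an integer divided by $2\pi$ times $2\pi$, i.e. an integer, by the argument principle: $\partial_\epsilon I_\epsilon(\det D)$ equals $2\pi$ times the winding number of $x\mapsto \det D(x+i\epsilon)$, which is integer-valued (and locally constant in $\epsilon$ since there are no zeros). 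Hence quantization of $\omega(\beta,D;\epsilon)$ follows from quantization of $\omega(\beta,D';\epsilon)$, and the latter is exactly Avila's theorem for analytic $SL(2,\mathbb{C})$-cocycles over the irrational rotation by $\beta$ on $\mathbb{R}/2\mathbb{Z}$ (the base being $\mathbb{R}/2\mathbb{Z}$ rather than $\mathbb{T}$ is immaterial, as $\beta$ irrational on $\mathbb{R}/2\mathbb{Z}$ is still uniquely ergodic). Avila's proof of that statement rests on continuity of the LE for analytic $SL(2,\mathbb{C})$-cocycles (Theorem~\ref{thm_bj}), which is available here.

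The step I expect to be the main obstacle is verifying that Lemma~\ref{lem_root} can be applied uniformly on the \emph{closed} band: the hypothesis as stated only gives $\det D$ bounded away from zero on $\mathbb{T}_\delta$, but the holomorphic extension of $D$ is to \emph{some} neighborhood of $\mathbb{T}_\delta$, so one must first pass to a slightly smaller $\delta'<\delta$ on whose closure $\det D$ is still bounded away from $0$ and holomorphic, prove quantization on $\mathbb{T}_{\delta'}$, and then let $\delta'\uparrow\delta$; since $\omega(\beta,D;\cdot)$ is the right-derivative of a convex function (Proposition~\ref{prop_convlyap}) it is right-continuous and monotone, so integrality on a dense set of $\epsilon$ forces integrality everywhere on $(-\delta,\delta)$, and the endpoints are handled by one-sided limits. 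A secondary technical point is bookkeeping the factor of $2$ coming from the period doubling $\mathbb{T}\rightsquigarrow\mathbb{R}/2\mathbb{Z}$ consistently across $L$, $I_\epsilon$, and $\omega$ so that the final integrality statement is for $(\beta,D)$ itself and not for its period-$2$ avatar; this is routine but must be done carefully, as a stray factor of $2$ would change the conclusion.
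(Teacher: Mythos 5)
Your decomposition is the same one the paper uses: renormalize by a square root of the determinant via Lemma \ref{lem_root} to obtain an $SL(2,\mathbb{C})$-cocycle $D'$ on $\mathbb{R}/2\mathbb{Z}$, invoke Avila's quantization for $(\beta,D')$, and control the remaining term $\tfrac12\int_{\mathbb{T}}\log\abs{\det D(x+i\epsilon)}\,\ud x$ by showing it is affine in $\epsilon$ with slope in $2\pi\mathbb{Z}$ --- which is exactly Lemma \ref{lem_quantaccbaby}(i); your argument-principle computation and the paper's Jensen-formula computation are the same fact. Your treatment of the open-versus-closed band and of the endpoints via convexity and one-sided limits is fine and is more careful than what the paper writes down.

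The genuine problem is the factor of $2$ that you flag in your final sentence and defer as ``routine'': carried out carefully, it \emph{does} change the conclusion, and your write-up asserts integrality at two places where only half-integrality holds. First, the determinant term contributes $\frac{1}{2\pi}\cdot\frac12\,\partial_\epsilon I_\epsilon(\det D)=\frac{m}{2}$ to the acceleration, where $2\pi m$, $m\in\mathbb{Z}$, is the slope ($m$ is, up to sign, the winding number of $\det D$); this is a half-integer, not an integer as you claim. Second, Avila's theorem applied to $D'$ over $\mathbb{R}/2\mathbb{Z}$ is \emph{not} insensitive to the base: rescaling $y=x/2$ produces a cocycle over $\mathbb{T}$ whose complexification parameter is $\eta=\epsilon/2$, so integrality holds for the $\eta$-acceleration and one only obtains $\omega(\beta,D';\epsilon)\in\tfrac12\mathbb{Z}$. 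The sum of the two contributions is therefore only guaranteed to lie in $\tfrac12\mathbb{Z}$, and no bookkeeping can do better, because the statement as given is false: for
\begin{equation*}
D(x)=\begin{pmatrix} 0 & 1\\ \mathrm{e}^{2\pi i x} & 0\end{pmatrix}
\end{equation*}
one has $\abs{\det D(x+i\epsilon)}=\mathrm{e}^{-2\pi\epsilon}$, bounded away from zero on every strip, while $D^{(2n)}(x+i\epsilon)$ is diagonal with both entries of modulus $\mathrm{e}^{-2\pi n\epsilon}$, so $L(\beta,D_\epsilon)=-\pi\epsilon$ and $\omega(\beta,D;\epsilon)\equiv-\tfrac12$. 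The paper's own proof elides the same point; the corrected statement (this is the content of the published erratum to this paper) is $2\,\omega(\beta,D;\epsilon)\in\mathbb{Z}$, with integrality holding when the winding number of $\det D$ is even --- as it is for the extended Harper cocycle, which is why the extrapolation in Sections \ref{sec_asympt}--\ref{sec_whatwecansay} survives. If you want to keep the integer-valued conclusion, you must either add the even-winding hypothesis or restate the theorem with $\tfrac12\mathbb{Z}$.
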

Using Lemma \ref{lem_root}, we reduce to the original result for SL(2,$\mathbb{C})$-cocycles as stated in \cite{B}, noticing that
\begin{equation} \label{eq_qa1}
L(\beta, D) = L(\beta, D^\prime) + \frac{1}{2} \int_{\mathbb{T}} \log \abs{\det(D)} \ud x ~\mbox{.}
\end{equation}

What is left to analyze is the second term on the right hand side of (\ref{eq_qa1}). We mention that this integral can be recast as the LE of a diagonal cocycle:
\begin{equation} \label{eq_qa2}
\frac{1}{2} \int_{\mathbb{T}} \log\abs{\det(D)} \ud x = L\left(\beta, \begin{pmatrix} \det D(z)  & 0 \\ 0 & \det D(z) \end{pmatrix}\right) ~\mbox{.} 
\end{equation}

In this case, however, the content of Theorem \ref{thm_quantacc} may easily be checked directly, which is the subject of the following Lemma:
\begin{lemma} \label{lem_quantaccbaby}
\begin{itemize}
\item[(i)] For $c \in \mathcal{C}_\delta^\omega(\mathbb{T},\mathbb{C})$ with $\min_{x\in\mathbb{T}_\delta} \abs{c(x)} > 0$ the function $I_\epsilon(c)$ defined in (\ref{eq_Ifun}) for $\epsilon \in [-\delta,\delta]$  is affine in $\epsilon$ with derivative in $2 \pi \mathbb{Z}$. 
\item[(ii)] If $c \in \mathcal{C}_\delta^\omega(\mathbb{T},\mathbb{C})\setminus \{0\}$ is not bounded away from zero, $I_\epsilon(c)$ is a piecewise linear, convex function in $\epsilon$ with right derivatives in $2 \pi \mathbb{Z}$. 
\end{itemize}
\end{lemma}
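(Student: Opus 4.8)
The plan is to reduce everything to the behavior of the integral $\int_{\mathbb{T}} \log\abs{c(x+i\epsilon)}\,\ud x$ as a function of the real parameter $\epsilon$, and to exploit that $\log\abs{c}$ is harmonic away from the zeros of $c$. For part (i), assume $c$ has no zeros on the strip $\mathbb{T}_\delta$. Then $c$ is a nonvanishing holomorphic function on a neighborhood of the closed strip, so $\log\abs{c(z)}$ is harmonic there. The function $\epsilon \mapsto I_\epsilon(c) = \int_{\mathbb{T}}\log\abs{c(x+i\epsilon)}\,\ud x$ is the average over the circle of a harmonic function restricted to horizontal lines; by the mean value property applied on thin horizontal rectangles (or simply by differentiating under the integral sign and using the Cauchy–Riemann equations, $\partial_\epsilon \log\abs{c(x+i\epsilon)} = -\partial_x \arg c(x+i\epsilon)$), one gets that $I_\epsilon(c)$ is affine in $\epsilon$. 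To identify the slope, I would use the argument principle: $\frac{\ud}{\ud\epsilon} I_\epsilon(c) = -\int_{\mathbb{T}} \partial_x \arg c(x+i\epsilon)\,\ud x$, which is $-2\pi$ times the winding number of $x \mapsto c(x+i\epsilon)$ around $0$, hence lies in $2\pi\mathbb{Z}$. (Equivalently, write $c$ via Lemma \ref{lem_root} or by factoring out $\mathrm{e}^{2\pi i n x}$ to make it have trivial winding, reducing to the constant case.)

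For part (ii), suppose $c \in \mathcal{C}_\delta^\omega(\mathbb{T},\mathbb{C})\setminus\{0\}$ has zeros in the strip. Since $c$ is holomorphic and not identically zero, its zeros in the compact closed strip $\mathbb{T}_{\delta'}$ (for any $\delta' < \delta$) are finite in number; let $\epsilon_1 < \dots < \epsilon_m$ be the (finitely many) distinct imaginary parts of these zeros. On each open subinterval of $[-\delta,\delta]$ between consecutive $\epsilon_j$'s, the horizontal line $\mathbb{T}+i\epsilon$ avoids all zeros of $c$, so the argument from part (i) shows $I_\epsilon(c)$ is affine there with slope in $2\pi\mathbb{Z}$. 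Continuity of $I_\epsilon(c)$ across each $\epsilon_j$ follows because $\log\abs{c}$ is locally integrable (a logarithmic singularity is integrable), so $I_\epsilon(c)$ is a continuous, piecewise-affine function of $\epsilon$ with all slopes in $2\pi\mathbb{Z}$. Convexity — equivalently, that the slope jumps upward at each $\epsilon_j$ — I would get either from the general convexity statement already quoted in the paper (Proposition \ref{prop_convlyap}, via (\ref{eq_qa2}) which exhibits $\frac12\int\log\abs{\det D}$ as a Lyapunov exponent of an analytic cocycle, applied with $\det D = c^2$ or directly to the diagonal cocycle $\mathrm{diag}(c,c)$), or directly: the slope change at $\epsilon_j$ equals the total multiplicity of zeros of $c$ on the line $\mathbb{T}+i\epsilon_j$, which is a nonnegative integer, and this is a standard consequence of Jensen's formula applied on a thin rectangle straddling the line $\{\im z = \epsilon_j\}$. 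Since the right derivative at any point equals the slope on the interval immediately to the right, it lies in $2\pi\mathbb{Z}$; convexity then gives that these right derivatives are nondecreasing.

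The main obstacle — really the only non-bookkeeping point — is pinning down that the slope is an \emph{integer} multiple of $2\pi$ and that the jumps are \emph{nonnegative} integers, which is exactly the argument-principle / Jensen's-formula input. Once one writes $\frac{\ud}{\ud\epsilon}\int_{\mathbb{T}}\log\abs{c(x+i\epsilon)}\,\ud x = -2\pi\,(\text{winding number of } c(\cdot+i\epsilon))$ and observes that this winding number can only change when the curve crosses a zero, with the change equal to the number of zeros crossed (counted with multiplicity), both assertions follow. Everything else — finiteness of the exceptional set of $\epsilon$'s, continuity through the logarithmic singularities, differentiation under the integral sign on the good intervals — is routine. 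I would present part (ii) by first establishing part (i) cleanly and then deducing part (ii) as the piecewise version, so that the argument-principle computation is done only once.
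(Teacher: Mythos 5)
Your proof is correct, but it reaches the integrality of the slope by a genuinely different route than the paper. For part (i) the paper first shows affineness by integrating the harmonicity identity $\partial_\epsilon^2\log\abs{c}=-\partial_x^2\log\abs{c}$ over the torus, and then obtains the slope in $2\pi\mathbb{Z}$ indirectly: it computes $I_\epsilon$ explicitly for trigonometric polynomials via Jensen's formula, approximates a general $c$ uniformly by such polynomials, invokes continuity of $I$ on $\mathcal{C}^\omega$ (Lemma 2.9 of \cite{CC}) and a convexity fact (Lemma \ref{fact_conv}) to pass the integrality of the slopes to the limit. Your Cauchy--Riemann computation $\frac{\ud}{\ud\epsilon}I_\epsilon(c)=-\int_{\mathbb{T}}\partial_x\arg c(x+i\epsilon)\,\ud x=-2\pi\,(\text{winding number})$ gives both affineness (the winding number is locally constant in $\epsilon$ on a zero-free strip) and integrality in one stroke, with no approximation argument; this is cleaner, at the cost of losing the explicit Jensen-type formula that the paper reuses elsewhere (e.g.\ in (\ref{eq_integral})). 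For part (ii) the two arguments essentially coincide: the paper factors $c=h\prod_j(\mathrm{e}^{2\pi i z}-\mathrm{e}^{2\pi i(x_j+i\epsilon_j)})^{n_j}$ and applies Jensen to each factor, getting the kink $-2\pi n_j\min\{\epsilon,\epsilon_j\}$, while you apply (i) between the finitely many critical heights and identify the (nonnegative) slope jump at each $\epsilon_j$ as $2\pi$ times the multiplicity of zeros on that line --- the same computation organized differently. The one place you are slightly too quick is continuity of $I_\epsilon(c)$ across an exceptional height: "$\log\abs{c}$ has integrable singularities" by itself does not give $\lim_{\epsilon\to\epsilon_j}I_\epsilon(c)=I_{\epsilon_j}(c)$; you do need the Jensen/factorization computation (or the cited continuity lemma from \cite{CC}) to justify it, which you do mention as a backup, so the gap is only presentational.
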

\begin{proof}
\begin{itemize}
\item[(i)]
Since $c(z)$ is holomorphic with no zeros on $\abs{\im{z}} \leq \delta$,  $\log{\abs{c(z)}}$ is
harmonic on the same strip. Thus one computes,
\begin{eqnarray} \label{eq_remark}
 \dfrac{\ud^{2}}{\ud \epsilon^{2}}\int_{\mathbb{T}} \log{\abs{c(x+i\epsilon)}} \ud x & = & \int_{\mathbb{T}} \dfrac{\partial^{2}}{\partial \epsilon^{2}}\log{\abs{c(x+i\epsilon)}} \ud x \nonumber \\
 & = & - \int_{\mathbb{T}} \dfrac{\partial^{2}}{\partial x^{2}}\log{\abs{c(x+i\epsilon)}} \ud x = 0 ~\mbox{,}
\end{eqnarray}
from which we conclude that $I_\epsilon(c)$ is affine on $\abs{\epsilon} \leq \delta$ as claimed.

To show that the derivative is in $2 \pi \mathbb{Z}$, we first consider the case when $c$ is a trigonometric polynomial, i.e. $c(z) = \sum_{k = -N}^{N}  c_k z^k$ where $z:=\mathrm{e}^{2 \pi i (x+i\epsilon)}$. Since,
\begin{equation}
\log\abs{c(z)} = -N \log \abs{z} + \log \left\vert \sum_{k=0}^{2N} a_{k-N} z^k \right\vert ~\mbox{,}
\end{equation}
it suffices to establish the claim for $c(z) = \sum_{k=0}^{N} c_k z^k$.

Denote by $\mathcal{N}(c; K)$ the number of zeros of $c$ (counting multiplicity) on a compact subset $K$ of $\mathbb{C}$ and let $\mathfrak{Z}(c; K)$ denote the associated zero set. Since $c$ is bounded from zero on $\mathbb{T}_\delta$, we have 
\begin{equation}
\mathcal{N}\left(c;\overline{D(0,\mathrm{e}^{-2 \pi \epsilon})}\right) = \mathcal{N}\left(c; \overline{D(0,\mathrm{e}^{-2\pi \delta})}\right) ~\mbox{,}
\end{equation}
whenever $0 \leq \abs{\epsilon} \leq \delta$.

Using Jensen's formula we thus conclude for $0 \leq \abs{\epsilon} \leq \delta$,
\begin{equation}
\int_{\mathbb{T}} \log\abs{c(x+i\epsilon)} \ud x = - 2 \pi \epsilon \mathcal{N}(c;\{0\}) + \sum_{z \in \mathfrak{Z}(c; D(0, \mathrm{e}^{-2 \pi \delta})\setminus \{0\})} n(z) 
\log \left\vert  \dfrac{\mathrm{e}^{-2\pi \epsilon}}{z}  \right\vert + d ~\mbox{,}
\end{equation}
where $d=d(c) \in \mathbb{C}$ and $n(z)$ is the multiplicity of $z$ if $z$ is a root of $c$ and defined zero otherwise.

For a general $c$, uniformly approximate $c$ on $\mathbb{T}_\delta$ by trigonometric polynomials $c_n$. As shown in \cite{CC} (see also Remark \ref{rem_contlemma}), $I_\epsilon(c_n) \to I_\epsilon(c)$ uniformly on $[-\delta,\delta]$. 

In summary we obtain lines, $I_\epsilon(c_n)$, with slopes in $2 \pi \mathbb{Z}$ converging uniformly on $[-\delta,\delta]$ to the line $I_\epsilon(c)$, whence forcing the derivative of the limit to be in $2 \pi \mathbb{Z}$ (see also Fact \ref{fact_conv}).
\item[(ii)] To prove statement (ii), factorize $c$ on $\mathbb{T}_\delta$ according to its roots
\begin{equation}
c(x) = h(x) \prod_{j=1}^{n} \left( \mathrm{e}^{2 \pi x} - \mathrm{e}^{2 \pi i (x_j + i \epsilon_j)} \right)^{n_j} ~\mbox{,}
\end{equation}
where $n_j$ is the multiplicity of the $j$th root and $h$ is zero-free and holomorphic on $\mathbb{T}_\delta$.

Thus, making use of (i),
\begin{eqnarray}
I(c) & = & \sum_{j=1}^{n} n_j \int_{\mathbb{T}} \log \abs{\mathrm{e}^{2\pi i (x+ i \epsilon)} - \mathrm{e}^{2 \pi i (x_j + i \epsilon_j)}} + I(h) \\
      &  = & \sum_{j=1}^{n} n_j \int_{\mathbb{T}} \log \abs{\mathrm{e}^{2\pi i (x+ i \epsilon)} - \mathrm{e}^{2 \pi i (x_j + i \epsilon_j)}} + 2 \pi N \epsilon ~\mbox{,} \label{eq_sum}
\end{eqnarray}
for some $N \in \mathbb{Z}$.

Applying Jensen's formula separately to each summand on the right hand side of (\ref{eq_sum}), we conclude
\begin{equation}
\int_{\mathbb{T}} \log \abs{\mathrm{e}^{2\pi i (x+ i \epsilon)} - \mathrm{e}^{2 \pi i (x_j + i \epsilon_j)}} = - 2 \pi \min\{\epsilon, \epsilon_j\} ~\mbox{.} \label{eq_sum1}
\end{equation}

Combining (\ref{eq_sum}) and (\ref{eq_sum1}) yields the claim.
\end{itemize}
\end{proof}

The following special case will turn out to be of relevance for our further development:
\begin{definition} 
\cite{B} If $L(\beta, D_{\epsilon})$ is affine about $\epsilon=0$, the associated cocycle $(\beta,D)$ is referred to as {\em{regular}}.
\end{definition}

Next we explore the relation between regularity and the dynamics induced by the cocycle $(\beta, D)$. To this end we define:
\begin{definition} \label{def_unifhyp}
An (analytic) SL(2,$\mathbb{C}$)-cocycle $(\beta,D)$ is called {\em{uniformly hyperbolic}} if there exist (analytic) maps $s,u: \mathbb{T} \to \mathbb{PC}^{2}$ such that 
\begin{itemize}
\item[(i)] $D(x) u(x) = u(x+\beta)$, $D(x) s(x) = s(x+\beta)$,
\item[(ii)] for $w \in \mathbb{C}^{2}$, $\norm{w}=1$: $\norm{D(x) w}>1$, if $\pi(w) = u(x)$, and $\norm{D(x) w}<1$, if $\pi(w) = s(x)$, $\forall x \in \mathbb{T}$. Here, $\pi$ denotes the canonical projection of $\mathbb{C}^2$ onto $\mathbb{PC}^2$.
\end{itemize}
\end{definition}

Using Lemma \ref{lem_root}, the following is obtained as a mere corollary of Theorem 6 in \cite{B}:
\begin{theorem} \label{thm_unifhyp}
Consider an analytic cocycle $(\beta,D)$ where $\beta$ is irrational and $\det{D}$ bounded away from zero on $\abs{\im{z}} \leq \delta$. 
Assume $L(\beta, D^{\prime}) > 0$. Then the cocycle $(\beta,D^{\prime})$ is regular if and only if it is uniformly hyperbolic.
\end{theorem}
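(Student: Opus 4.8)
The plan is to deduce Theorem~\ref{thm_unifhyp} from the corresponding statement for analytic $SL(2,\mathbb{C})$-cocycles, namely Theorem~6 of \cite{B}, which asserts that an analytic $SL(2,\mathbb{C})$-cocycle with positive Lyapunov exponent is regular if and only if it is uniformly hyperbolic. The object making this reduction possible is the renormalized cocycle $D' = D/\sqrt{\det D}$ from (\ref{eq_Dpr}). Since $\det D$ is bounded away from zero on the \emph{entire} strip $\mathbb{T}_\delta$ (not merely on $\mathbb{T}$), it has no zeros there, so $\sqrt{\det D}$ extends holomorphically to a neighborhood of $\mathbb{T}_\delta$, and by Lemma~\ref{lem_root} the square root may be taken $2$-periodic. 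Thus $D' \in \mathcal{C}^\omega(\mathbb{R}/2\mathbb{Z}; M_2(\mathbb{C}))$ with $\det D' \equiv 1$, its complexification $D'_\epsilon(x) = D'(x+i\epsilon)$ is well defined for $|\epsilon|\le\delta$, and hence both ``regular'' and ``uniformly hyperbolic'' make sense for $(\beta, D')$.

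The only genuine point to address is that $(\beta, D')$ lives over the circle $\mathbb{R}/2\mathbb{Z}$, whereas Avila's theory is phrased over $\mathbb{R}/\mathbb{Z}$. I would dispose of this by the obvious rescaling $x = 2y$: the rotation by $\beta$ on $\mathbb{R}/2\mathbb{Z}$ is conjugate, via $y \mapsto 2y$, to the rotation by the still-irrational $\beta/2$ on $\mathbb{R}/\mathbb{Z}$, and $\tilde D'(y) := D'(2y)$ is an analytic $SL(2,\mathbb{C})$-cocycle over $(\,\mathbb{R}/\mathbb{Z}\,,\, y \mapsto y+\beta/2\,)$ whose $n$-step products satisfy $\tilde D'^{(n)}(y) = D'^{(n)}(2y)$. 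Under this substitution the complexification transforms as $D'(x+i\epsilon) = \tilde D'(y + i\epsilon/2)$, i.e.\ $D'_\epsilon \leftrightarrow \tilde D'_{\epsilon/2}$, and, as recorded in the remark following (\ref{eq_Dpr}), the Lyapunov exponents of $(\beta, D')$ and $(\beta/2, \tilde D')$ agree up to a fixed multiplicative constant. Consequently $L(\beta, D') > 0$ if and only if $L(\beta/2, \tilde D') > 0$, and $\epsilon \mapsto L(\beta, D'_\epsilon)$ is affine near $\epsilon=0$ if and only if $\epsilon \mapsto L(\beta/2, \tilde D'_\epsilon)$ is; hence $(\beta, D')$ is regular precisely when $(\beta/2, \tilde D')$ is.

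Uniform hyperbolicity transfers in exactly the same way: analytic maps $s, u: \mathbb{R}/2\mathbb{Z} \to \mathbb{PC}^2$ satisfying the two conditions of Definition~\ref{def_unifhyp} for $(\beta, D')$ correspond, under $y \mapsto 2y$, to $\tilde s(y) = s(2y)$ and $\tilde u(y) = u(2y)$ satisfying them for $(\beta/2, \tilde D')$, because the invariance identities $D'(x)u(x) = u(x+\beta)$ and the norm expansion/contraction inequalities of Definition~\ref{def_unifhyp} are preserved verbatim by the substitution. Applying Theorem~6 of \cite{B} to the analytic $SL(2,\mathbb{C})$-cocycle $(\beta/2, \tilde D')$ --- which has positive Lyapunov exponent by the previous paragraph --- we conclude that it is regular if and only if it is uniformly hyperbolic; translating back along the conjugacy gives the same equivalence for $(\beta, D')$, which is Theorem~\ref{thm_unifhyp}. (This is the same reduction already invoked for Theorem~\ref{thm_quantacc}, cf.\ (\ref{eq_qa1}).)

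I do not expect a serious obstacle: all the mathematical content sits in \cite{B}, and everything above is bookkeeping. The one thing I would double-check is that Avila's global theory is genuinely insensitive to replacing the base circle $\mathbb{R}/\mathbb{Z}$ by a rescaled circle $\mathbb{R}/q\mathbb{Z}$ --- which it is, since every minimal irrational rotation of $\mathbb{R}/q\mathbb{Z}$ is analytically conjugate to one of $\mathbb{R}/\mathbb{Z}$ and all the relevant invariants (Lyapunov exponent, acceleration, complexification, and the stable/unstable directions) are natural under such conjugacies --- together with the already-noted fact that non-vanishing of $\det D$ on \emph{all} of $\mathbb{T}_\delta$, rather than just on $\mathbb{T}$, is exactly what guarantees $D'$ is holomorphic on a strip, so that the $\epsilon$-dependence underlying the notion of regularity is actually available.
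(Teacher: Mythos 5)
Your proposal is correct and follows exactly the paper's route: the paper states Theorem~\ref{thm_unifhyp} as a direct corollary of Theorem~6 of \cite{B} via Lemma~\ref{lem_root} and the renormalized cocycle $D'$ of (\ref{eq_Dpr}), relying on the same factor-of-two relation between the Lyapunov exponents on $\mathbb{R}/2\mathbb{Z}$ and $\mathbb{R}/\mathbb{Z}$ that you make explicit. You have simply written out the rescaling bookkeeping that the paper leaves implicit.
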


In particular, applying Theorem \ref{thm_unifhyp} to Jacobi operators, we obtain a statement which will be central to compute $L(\beta, B^{E})$ on the spectrum:
\begin{theorem} \label{coro_regspec}
Let $\beta$ irrational and assume $\det A^E(z)$ bounded away from zero for $\abs{\im z} \leq \delta$. If $E \in \Sigma$ with $L(B^{E},\beta) > 0$, the cocycles $(\beta, A^{E})$ and $(\beta, B^{E})$ cannot be regular.
\end{theorem}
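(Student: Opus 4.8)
The plan is to argue by contradiction: I will assume that one of $(\beta,A^E)$, $(\beta,B^E)$ is regular and deduce that $E$ lies in the resolvent set of $H_{\theta;\beta}$ for (spectrally) a.e.\ $\theta$, contradicting $E\in\Sigma$. The first step is to observe that regularity is a property shared by $(\beta,A^E)$, $(\beta,B^E)$ and the renormalized cocycle $(\beta,(A^E)')$. Since $\det A^E(z)=c(z)\overline c(z-\beta)$ is bounded away from zero on $\mathbb{T}_\delta$, the scalar function $c$ is bounded away from zero on $\mathbb{T}_\delta$ as well, so Lemma \ref{lem_quantaccbaby}(i) applies and $\epsilon\mapsto I_\epsilon(c)$ is affine near $\epsilon=0$. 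Combining (\ref{eq_relAB}) with (\ref{eq_qa1}) and (\ref{eq_Ifun}) shows that, near $\epsilon=0$,
\[
L(\beta,B_\epsilon^E)=L(\beta,A_\epsilon^E)-I_\epsilon(c),\qquad L(\beta,A_\epsilon^E)=L\big(\beta,(A^E)'_\epsilon\big)+I_\epsilon(c),
\]
so the three functions differ only by the affine terms $I_\epsilon(c)$, resp.\ $2I_\epsilon(c)$. Hence each of $(\beta,B^E)$, $(\beta,A^E)$, $(\beta,(A^E)')$ is regular iff the other two are. So it suffices to rule out regularity of $(\beta,(A^E)')$.

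Suppose $(\beta,(A^E)')$ is regular. Its Lyapunov exponent is positive: by (\ref{eq_qa1}), (\ref{eq_Ifun}) and (\ref{eq_relAB}) one has $L(\beta,(A^E)')=L(\beta,A^E)-I(c)=L(\beta,B^E)>0$. Therefore Theorem \ref{thm_unifhyp} applies to the $SL(2,\mathbb{C})$-cocycle $(\beta,(A^E)')$ and shows it is uniformly hyperbolic, with invariant directions $s,u\colon \mathbb{R}/2\mathbb{Z}\to\mathbb{PC}^2$. Next I would transfer uniform hyperbolicity to the transfer-matrix cocycle $(\beta,B^E)$: writing $B^E=\phi\cdot(A^E)'$ with $\phi:=c/\sqrt{\det A^E}$ analytic and nonvanishing on $\mathbb{T}_\delta$, the $n$-step products of $(\beta,B^E)$ and $(\beta,(A^E)')$ — both viewed as $2$-periodic cocycles, which only rescales Lyapunov exponents by a harmless factor — differ merely by the scalar $\phi^{(n)}(x)=\prod_{j=0}^{n-1}\phi(x+j\beta)$. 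This leaves $s,u$ invariant and leaves the ratio $\norm{(B^E)^{(n)}(x)w_u}/\norm{(B^E)^{(n)}(x)w_s}$ (for unit vectors spanning $u(x),s(x)$) unchanged, so $(\beta,B^E)$ admits an exponential dichotomy, i.e.\ is uniformly hyperbolic.

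Finally, uniform hyperbolicity of $(\beta,B^E)$ contradicts $E\in\Sigma$. Uniform hyperbolicity is an open condition in $E$, and whenever it holds every nonzero solution of $H_{\theta;\beta}\psi=E'\psi$ (any $\theta$, $E'$ near $E$) grows exponentially as $n\to+\infty$ or as $n\to-\infty$; hence no nonzero polynomially bounded generalized eigenfunction exists for such $E'$, and by Schnol's theorem the spectral measures of $H_{\theta;\beta}$ vanish on a neighborhood of $E$, so $E\notin\Sigma$ — the desired contradiction. (Equivalently, one may invoke the classical description, due to Johnson, of the spectrum of an ergodic Jacobi operator with nonvanishing off-diagonal term as the set of energies at which the transfer cocycle fails to be uniformly hyperbolic.) I expect the main obstacle to be precisely this bridge between Avila's global-theory object $(A^E)'$ and the operator-theoretic conclusion: one must carefully check that uniform hyperbolicity is genuinely insensitive to the scalar renormalization $D\mapsto D/\sqrt{\det D}$ (and to the passage from $1$- to $2$-periodicity), so that hyperbolicity of $(\beta,(A^E)')$ really yields hyperbolicity of the physical transfer cocycle $(\beta,B^E)$, before the standard spectral dictionary can be applied.
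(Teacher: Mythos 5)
Your proposal is correct and follows essentially the same route as the paper: the regularity equivalence via the affine behavior of $I_\epsilon(c)$ is the paper's Lemma \ref{lemma_1}, the application of Theorem \ref{thm_unifhyp} is identical, and your final step is the paper's Lemma \ref{lemma_regunif} (proved in Appendix \ref{app_3} via generalized eigenvalues and openness of uniform hyperbolicity rather than by citing Johnson's theorem). The scalar-renormalization issue you flag is genuine but benign here: since $\abs{\phi(x)}^2=\abs{c(x)}/\abs{c(x-\beta)}$, the product $\abs{\phi^{(n)}(x)}$ telescopes and is uniformly bounded above and below, which is exactly what the paper's identity (\ref{eq_lem1}) supplies.
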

We note that 
\begin{equation}
L(\beta, (A_\epsilon^E)^\prime) = L(\beta, B_\epsilon^E) ~\mbox{.}
\end{equation}
The proof of Theorem \ref{coro_regspec} is a consequence of the following two Lemmas and Theorem \ref{thm_unifhyp}. 

\begin{lemma} \label{lemma_1}
Let $\beta$ irrational and assume $\det A^E(z)$ bounded away from zero for $\abs{\im z} \leq \delta$. Then,  $(\beta, (A^{E})^{\prime})$ is regular if
and only if $\left(\beta,A^{E}\right)$ is regular.
\end{lemma}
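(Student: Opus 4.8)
The plan is to reduce the statement to the scalar identity \eqref{eq_qa1} combined with the affine structure of $I_\epsilon(c)$ supplied by Lemma \ref{lem_quantaccbaby}(i). The point is that $(A^E)'$ differs from $A^E$ only by a scalar factor whose complexified $\log|\cdot|$-integral is affine in $\epsilon$, and adding an affine function cannot create or destroy affineness near $\epsilon=0$.

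First I would record the needed preliminaries. Since $\det A^E(z)=\overline{c}(z-\beta)\,c(z)$ is bounded away from zero on $\mathbb{T}_\delta$, and each of the two factors is continuous, hence bounded, on the compact closure of $\mathbb{T}_\delta$, both $c$ and $z\mapsto \overline{c}(z-\beta)$ are themselves bounded away from zero on $\mathbb{T}_\delta$. In particular Lemma \ref{lem_root} applies to $f=\det A^E$, producing a zero-free $g\in\mathcal{C}^\omega(\mathbb{R}/2\mathbb{Z};\mathbb{C})$, analytic on $\mathbb{T}_\delta$, with $g^2=\det A^E$, so that $(A^E)'=A^E/g$ is a well-defined analytic (in general $2$-periodic) cocycle on $\mathbb{T}_\delta$. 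Next I would note that complexification commutes with the renormalization $D\mapsto D'$: for $|\epsilon|$ small, $g_\epsilon(x):=g(x+i\epsilon)$ satisfies $g_\epsilon^2=\det A^E_\epsilon$, whence $(A^E)'_\epsilon = A^E_\epsilon/g_\epsilon = (A^E_\epsilon)'$ up to an irrelevant global sign. Therefore ``$(\beta,(A^E)')$ is regular'' means precisely that $\epsilon\mapsto L(\beta,(A^E_\epsilon)')$ is affine about $\epsilon=0$.

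Then I apply \eqref{eq_qa1} to the analytic cocycle $A^E_\epsilon$ for each $\epsilon$ in a neighborhood of $0$ — legitimate since $\det A^E_\epsilon$ is still bounded away from zero on a strip — to get
\begin{equation*}
L(\beta, A^E_\epsilon) = L(\beta, (A^E_\epsilon)') + \tfrac12\int_{\mathbb{T}}\log\abs{\det A^E_\epsilon(x)}\,\ud x = L(\beta, (A^E_\epsilon)') + I_\epsilon(c),
\end{equation*}
the last equality being \eqref{eq_Ifun}. By Lemma \ref{lem_quantaccbaby}(i), $I_\epsilon(c)$ is affine on $[-\delta,\delta]$ (this is exactly where we use that $c$ is bounded away from zero on the whole strip, not merely the statement (ii)). Hence $L(\beta,A^E_\epsilon)$ and $L(\beta,(A^E_\epsilon)')$ differ by an affine function of $\epsilon$ in a neighborhood of $0$, so one is affine about $\epsilon=0$ if and only if the other is; that is, $(\beta,A^E)$ is regular iff $(\beta,(A^E)')$ is regular.

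I do not anticipate a genuine obstacle in this argument; it is essentially bookkeeping around \eqref{eq_qa1}. The only two points deserving a word of care are (a) that the individual factor $c$ — not just the product $\det A^E$ — is bounded away from zero on $\mathbb{T}_\delta$, which is what upgrades Lemma \ref{lem_quantaccbaby} from the merely piecewise-linear case (ii) to the affine case (i); and (b) the commutation $(A^E_\epsilon)'=(A^E)'_\epsilon$, so that regularity of the $2$-periodic cocycle $(A^E)'$ is indeed governed by the function $\epsilon\mapsto L(\beta,(A^E_\epsilon)')$. Both are routine.
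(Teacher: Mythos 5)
Your proof is correct and follows essentially the same route as the paper: both deduce from Lemma \ref{lem_quantaccbaby}(i) (via \eqref{eq_qa1}) that $L(\beta,(A^E_\epsilon)')$ and $L(\beta,A^E_\epsilon)$ differ by an affine function of $\epsilon$ on the strip, so affineness about $\epsilon=0$ transfers between the two. Your extra remarks --- that the individual factor $c$ is bounded away from zero and that complexification commutes with the renormalization --- are correct pieces of bookkeeping the paper leaves implicit.
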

\begin{proof}
Lemma \ref{lem_quantaccbaby} implies that for $\abs{\epsilon} \leq \delta$ we have
\begin{equation}
L(\beta, (A_{\epsilon}^{E})^{\prime}) = L(\beta, A_{\epsilon}^{E}) + 2 \pi N \epsilon + \Gamma  ~\mbox{,}
\end{equation}
some $\Gamma \in \mathbb{R}$ and $N \in \mathbb{Z}$.
Since an analytic cocycle is not regular if and only if its acceleration has a jump discontinuity at $\epsilon = 0$, we obtain the claim.
\end{proof}

\begin{lemma} \label{lemma_regunif}
Let $\beta$ irrational and assume $\det A^E(z)$ bounded away from zero for $\abs{\im z} \leq \delta$. If $E \in \Sigma$, the cocycles $(\beta, (A_{\epsilon}^{E})^{\prime})$ and $(\beta, B^{E})$ cannot be uniformly hyperbolic.
\end{lemma}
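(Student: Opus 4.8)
The plan is to argue by contradiction, exploiting the connection between uniform hyperbolicity and the spectrum. Recall that the cocycle $(\beta, B^E)$ is the $SL(2,\mathbb{C})$ transfer-matrix cocycle of the Jacobi operator $H_{\theta;\beta}$; more precisely, since $\det A^E$ is bounded away from zero on the strip, we have $(A^E)^\prime = B^E / (\text{unimodular scalar})$ up to the factor $\sqrt{\det A^E}/c$, which is zero-free, so that the $SL(2,\mathbb{C})$-renormalization $(A^E)^\prime$ and $B^E$ differ only by conjugation/scaling by a zero-free analytic function and in particular are uniformly hyperbolic simultaneously (this is the content of the displayed identity $L(\beta,(A_\epsilon^E)^\prime)=L(\beta,B_\epsilon^E)$, and the argument for equality of the invariant directions $s,u$ is the same). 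So it suffices to show that $(\beta, B^E)$ cannot be uniformly hyperbolic when $E \in \Sigma$.

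The key step is the standard fact, adapted to the Jacobi setting, that uniform hyperbolicity of the transfer cocycle $(\beta, B^E)$ at energy $E$ is equivalent to $E \notin \Sigma$. First I would recall why uniform hyperbolicity is an open condition in $E$ (the invariant sections $s(x),u(x)$ and the contraction/expansion bounds in Definition \ref{def_unifhyp} persist under small analytic perturbations of the cocycle, and $E \mapsto B^E$ is analytic since $c$ is zero-free on $\mathbb{T}$), so that the set of $E$ for which $(\beta,B^E)$ is uniformly hyperbolic is open. Then I would invoke the dynamical characterization of the resolvent set: for $E$ in the resolvent set of $H_{\theta;\beta}$ one constructs exponentially decaying solutions at $\pm\infty$ from the Combes–Thomas / exponential decay of the Green's function, which provide the stable and unstable directions and yield uniform hyperbolicity; conversely, if $(\beta,B^E)$ is uniformly hyperbolic, the exponential dichotomy gives, for every $\theta$, a two-sided exponential decay structure that forbids generalized eigenfunctions, and a Schnol-type argument (or directly the construction of the resolvent via the dichotomy projections) shows $E\notin\Sigma$. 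This equivalence is classical for Schrödinger cocycles (Johnson); for Jacobi cocycles with $c$ bounded away from zero on $\mathbb{T}$ the same proof works verbatim since $B^E\in SL(2,\mathbb{C})$ and $\|B^E\|,\|(B^E)^{-1}\|$ are uniformly bounded on $\mathbb{T}$. Hence $E \in \Sigma$ immediately rules out uniform hyperbolicity of $(\beta,B^E)$, and therefore of $(\beta,(A_\epsilon^E)^\prime)$ at $\epsilon=0$; for general $\epsilon$ with $|\epsilon|\le\delta$ one notes that $(A_\epsilon^E)^\prime$ is the renormalized transfer cocycle of the (non-self-adjoint) Jacobi operator with complexified coefficients $v(x+i\epsilon)$, $c(x+i\epsilon)$, and $E\in\Sigma$ need not lie outside its spectrum — so one should be careful here and likely the intended statement is about $\epsilon=0$, or one uses that $\Sigma$ at real $\epsilon=0$ already forces non-uniform-hyperbolicity for the $\epsilon=0$ cocycle, which is what feeds into Theorem \ref{coro_regspec} via Lemma \ref{lemma_1} and Theorem \ref{thm_unifhyp}.

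The main obstacle I anticipate is pinning down the Jacobi analogue of the Johnson-type characterization with the correct hypotheses — in particular making sure that the presence of the overline (complex conjugation) in $\overline{c}(x-\beta)$ inside $A^E$, and the fact that $c$ need not be real, does not break the equivalence "uniformly hyperbolic $\iff$ $E$ in resolvent set." The cleanest route is to avoid re-deriving this: cite the known equivalence for bounded Jacobi operators with non-vanishing off-diagonal (e.g. via the transfer matrices in $SL(2,\mathbb{C})$ after the $B^E$ normalization, using that $\Sigma$ is independent of $\theta$ by minimality of the irrational rotation), and then the lemma follows in one line. A secondary subtlety, flagged above, is the exact meaning of the statement for $\epsilon\neq 0$; I would either restrict the assertion to $\epsilon=0$ (which is all that Theorem \ref{coro_regspec} needs, given Lemma \ref{lemma_1}) or note that the complexified operator's spectrum contains $\Sigma$ is false in general, so the $\epsilon$ in the statement is presumably a typo for $\epsilon=0$ and I would prove it as such.
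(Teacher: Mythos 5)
Your proposal follows essentially the same route as the paper's Appendix C proof: both rest on the classical fact that uniform hyperbolicity produces an exponential dichotomy for the solutions of $H_x\psi=E\psi$, which (together with openness of uniform hyperbolicity and the characterization of $\Sigma$ via generalized eigenvalues) forces $E$ into the resolvent set. The one step you state too loosely is the transfer between $(\beta,B^E)$ and $(\beta,(A^E)^\prime)$: these differ by the scalar $c(x)/\sqrt{\det A^E(x)}=\sqrt{c(x)/\overline{c}(x-\beta)}$, which is zero-free but \emph{not} unimodular, and multiplication by a non-unimodular scalar does not preserve uniform hyperbolicity in the pointwise sense of Definition \ref{def_unifhyp}, nor a priori the exponential growth rates. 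What saves the argument --- and what the paper makes explicit --- is that $\int_\mathbb{T}\log\abs{c}=\tfrac12\int_\mathbb{T}\log\abs{\det A^E}$, so the Birkhoff averages of $\log$ of this scalar tend to zero; hence the $n$-step growth rates of the two cocycles agree up to $o(n)$ and the dichotomy of one forces exponential growth of every solution of the other. This ergodicity step is the only genuinely new ingredient beyond the ``well known'' statement for $(\beta,B^E)$, and you should supply it rather than assert simultaneity of uniform hyperbolicity. Your observation about the subscript $\epsilon$ is consistent with the paper: the proof in Appendix C treats only $\epsilon=0$, which is all that Theorem \ref{coro_regspec} requires via Lemma \ref{lemma_1} and Theorem \ref{thm_unifhyp}.
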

The statement is well known for $(\beta, B^{E})$. The claim for $(\beta, (A_{\epsilon}^{E})^{\prime})$ is obtained by similar 
means, in addition making use of ergodicity. We give a proof in Appendix \ref{app_3}.

Theorems \ref{thm_quantacc} and \ref{coro_regspec} form the core of our method to determine the Lyapunov exponent of extended Harper's model. 
They characterize the $\epsilon$-dependence of the Lyapunov exponent of the complexified cocycles $(\beta,A_{\epsilon}^{E})$ and $(\beta, B_{\epsilon}^{E})$ for energies in the spectrum.

\subsection{Continuity of the LE for singular analytic cocycles} \label{sec_cont}

Before applying the results of the previous section to extended Harper's equation, we present a surprisingly simple proof of the continuous dependence of the LE on the cocycle upon variaton over the analytic category, further illustrating the use of complexified cocycles. 

First recall the following basic, however very useful fact, even valid for {\em{continuous}} $M_2(\mathbb{C})$-cocycles:
\begin{theorem} \label{thm_uscle}
$L: \mathbb{T} \times \mathcal{C}^{0}(\mathbb{T}, M_2(\mathbb{C})) \to \mathbb{R} \cup \{-\infty\}$ is upper-semicontinuous. 
\end{theorem}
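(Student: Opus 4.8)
The plan is to prove upper-semicontinuity of $L$ at an arbitrary point $(\beta_0, D_0) \in \mathbb{T} \times \mathcal{C}^0(\mathbb{T}, M_2(\mathbb{C}))$ by exploiting the infimum formula for the Lyapunov exponent. The essential observation is that, by the subadditive ergodic theorem (or directly from subadditivity of $n \mapsto \int_\mathbb{T} \log\norm{D^{(n)}(x)}\,\ud x$), one has $L(\beta, D) = \inf_{n \in \mathbb{N}} \tfrac{1}{n} \int_\mathbb{T} \log\norm{D^{(n)}(x)}\,\ud x$ when $\beta$ is irrational, and $L(\tfrac{p}{q}, D) = \tfrac{1}{q}\int_\mathbb{T} \log\rho(D^{(q)}(x))\,\ud x$ in the rational case by (\ref{eq_le_rat}). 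In both cases $L$ is an infimum of a family of functions of $(\beta,D)$, so the strategy is to show each member of the family is upper-semicontinuous (indeed continuous where finite), and then use that an infimum of upper-semicontinuous functions is upper-semicontinuous.

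First I would handle the irrational case. Fix $n$ and consider $F_n(\beta, D) := \tfrac{1}{n}\int_\mathbb{T} \log\norm{D^{(n)}(x)}\,\ud x$, where $D^{(n)}(x) = D(x+(n-1)\beta)\cdots D(x)$. If $(\beta_k, D_k) \to (\beta_0, D_0)$ in $\mathbb{T} \times \mathcal{C}^0$, then $D_k^{(n)} \to D_0^{(n)}$ uniformly on $\mathbb{T}$ (finite products of uniformly convergent continuous matrix-valued functions, with the rotation argument varying continuously too), hence $\log\norm{D_k^{(n)}(x)} \to \log\norm{D_0^{(n)}(x)}$ pointwise wherever $\norm{D_0^{(n)}(x)} \neq 0$; and since $\log\norm{\,\cdot\,}$ is bounded above on a neighborhood of the compact set $\{D_0^{(n)}(x) : x \in \mathbb{T}\}$ in $M_2(\mathbb{C})$, Fatou's lemma (applied to $C - \log\norm{D_k^{(n)}(x)}$ for a suitable constant $C$) gives $\limsup_k F_n(\beta_k, D_k) \le F_n(\beta_0, D_0)$. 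Therefore each $F_n$ is upper-semicontinuous on $\mathbb{T}_{\mathrm{irr}} \times \mathcal{C}^0$, and so is $L = \inf_n F_n$ restricted to irrational $\beta$.

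Next I would treat the rational case and, more importantly, the mixed situation where $\beta_k$ is irrational but approaches a rational $\beta_0 = p/q$ (or vice versa), since upper-semicontinuity as a statement on all of $\mathbb{T} \times \mathcal{C}^0$ must cover sequences crossing between rationals and irrationals. The clean way: for \emph{any} $\beta$ and any $m$, one has $L(\beta, D) \le \tfrac{1}{m}\int_\mathbb{T} \log\norm{D^{(m)}(x)}\,\ud x$ — this holds by subadditivity for irrational $\beta$, and for rational $\beta = p/q$ it holds because $\rho(A) \le \norm{A^k}^{1/k}$ for all $k$ so $L(p/q, D) = \tfrac{1}{q}\int \log\rho(D^{(q)}) \le \tfrac{1}{mq}\int\log\norm{(D^{(q)})^{(m)}_{\text{appropriately reindexed}}}$; more directly, $L(p/q,D)=\inf_m \tfrac1m\int_\mathbb{T}\log\norm{D^{(m)}(x)}\,\ud x$ as well (this is the definition via (\ref{eq_ledef1})). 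Thus on \emph{all} of $\mathbb{T} \times \mathcal{C}^0$ we have $L = \inf_m F_m$, where $F_m(\beta,D) = \tfrac1m\int_\mathbb{T}\log\norm{D^{(m)}(x)}\,\ud x$, and the argument of the previous paragraph shows each $F_m$ is upper-semicontinuous on all of $\mathbb{T} \times \mathcal{C}^0$ (the continuity of $D^{(m)}$ in $(\beta,D)$ uses only continuity of $D$ and of the rotation, with no irrationality needed). Hence $L$, being an infimum of upper-semicontinuous functions, is upper-semicontinuous everywhere, with values in $\mathbb{R} \cup \{-\infty\}$.

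The main obstacle is the handling of the value $-\infty$ and the non-integrability issues: when $\norm{D_0^{(m)}(x)}$ vanishes on a positive-measure set the integral $F_m(\beta_0,D_0)$ may be $-\infty$, and one must make sure the Fatou argument is applied correctly (subtracting a \emph{uniform upper bound} $C$ valid on a neighborhood, which exists by continuity and compactness, so that $C - \log\norm{D_k^{(m)}(x)} \ge 0$ and Fatou applies; the $-\infty$ value is then permitted as the $\liminf$ of the integrals). A secondary subtlety is verifying that $L(p/q,D) = \inf_m F_m(p/q,D)$ rather than merely $\le$; this follows from Fekete's subadditivity lemma applied to the sequence $a_m = \int_\mathbb{T}\log\norm{D^{(m)}(x)}\,\ud x$ together with (\ref{eq_le_rat}), and I would only need the $\le$ direction anyway for upper-semicontinuity. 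Everything else is routine.
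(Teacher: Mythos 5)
Your proposal is correct and follows essentially the same route as the paper: both express $L$ as $\inf_n \frac{1}{n}\int_{\mathbb{T}}\log\norm{D^{(n)}(x)}\,\ud x$ (valid for rational and irrational $\beta$ alike by the paper's definition (\ref{eq_ledef1})), show each such functional is upper-semicontinuous via uniform convergence of $D^{(n)}$ in $(\beta,D)$ together with Fatou's lemma, and conclude since an infimum of upper-semicontinuous functions is upper-semicontinuous. Your device of subtracting a uniform local upper bound $C$ before applying Fatou is just a repackaging of the paper's split into the continuous $\log_+$ part and the Fatou-controlled $\log_-$ part.
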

\begin{remark}
As an immediate corollary, we have that $L: \mathbb{T} \times \mathcal{C}^{0}(\mathbb{T}, M_2(\mathbb{C})) \to \mathbb{R} \cup \{-\infty\}$ is continuous at every $(\beta,D)$ with $L(\beta, D)= -\infty$. By the same reason, $L: \mathbb{T} \times \mathcal{C}^{0}(\mathbb{T}, SL_2(\mathbb{C})) \to \mathbb{R}_{0+}$ is continuous at every $(\beta,D)$ with $L(\beta, D)= 0$. Here, $\mathbb{R}_{0+}:=\{x \in \mathbb{R}: x \geq 0\}$.
\end{remark}
Because of its usefulness and since the argument is very short, for the reader's convenience, we include a proof of Theorem \ref{thm_uscle} in Appendix \ref{app_uscle}.

Given $\delta>0$, the following subsets of $\mathcal{C}_\delta^\omega(\mathbb{T},M_2(\mathbb{C}))$ will be of interest in the subsequent discussion:
\begin{equation} \label{eq_space1}
\mathcal{A}_\delta^\omega(\mathbb{T},M_2(\mathbb{C}))
:=\{ D \in \mathcal{C}_\delta^\omega(\mathbb{T},M_2(\mathbb{C})): \det D(x) \not \equiv 0 \} ~\mbox{,}
\end{equation}
\begin{equation}
\mathcal{B}_\delta^\omega(\mathbb{T},M_2(\mathbb{C})) := \left\{ D \in \mathcal{C}_\delta^\omega(\mathbb{T},M_2(\mathbb{C})) : \mathcal{N}(\det D; \mathbb{T})=0 \right\} ~\mbox{.}
\end{equation}

We note that $\mathcal{A}_\delta^\omega(\mathbb{T},M_2(\mathbb{C}))$ is open in the Banach space $\mathcal{C}_\delta^\omega(\mathbb{T},M_2(\mathbb{C}))$. As before, it is useful to consider $\mathcal{A}^\omega(\mathbb{T},M_2(\mathbb{C})):=\cup_{\delta>0} \mathcal{A}_\delta^\omega(\mathbb{T},M_2(\mathbb{C}))$ which is open in $\mathcal{C}^\omega(\mathbb{T},M_2(\mathbb{C}))$ relative to the inductive limit topology introduced at the beginning of Sec. \ref{sec_jacobico}. Analogously, one defines $\mathcal{B}^\omega(\mathbb{T},M_2(\mathbb{C}))$, which is open in $\mathcal{A}^\omega(\mathbb{T},M_2(\mathbb{C}))$.

In \cite{CC}, we proved that for a given Diophantine $\beta$, $L(\beta, .)$ is continuous on $\mathcal{A}_\delta^\omega(\mathbb{T},M_2(\mathbb{C}))$. We recall that $\beta$ is called Diophantine if there exists $0 < b(\beta)$ and $1 < r(\beta)< + \infty$ s.t. for all $j \in \mathbb{Z}\setminus \{0\}$
\begin{equation} \label{eq_diophantine}
\abs{\sin(2\pi j \beta)} > \dfrac{b(\beta)}{\abs{j}^{r(\beta)}} ~\mbox{.}
\end{equation}
The Diophantine condition was imposed in order to deal with singularities of the cocycle. Even though it was speculated in \cite{CC} (and in an earlier, related result \cite{A}) that the theorem should hold true for all irrational $\beta$, actual proof did not follow from the method of \cite{CC}. 

In particular, the theorem proven in \cite{CC} does not imply continuity in the frequency $\beta$. For practical purposes, however, continuity upon rational approximation is most desirable since the LE of rational approximates takes a simple form due to periodicity of the cocycle (see (\ref{eq_le_rat})). The main result of this section, Theorem \ref{thm_conti_sing}, also implies continuity in this sense.

Questions of continuity of the LE have been actively studied in the recent past; for a brief survey we refer the reader to the introduction of \cite{CC,DD}. For the present development we will make use of the continuity statement obtained for {\em{non-singular}} cocycles as proven in  \cite{C}; for further use we state this result here: \footnote{In \cite{C}, Theorem \ref{thm_bj} was stated and proven for the Schr\"odinger, $SL(2,\mathbb{R})$ case; strictly speaking, the extension to $SL(2,\mathbb{C})$ follows from \cite{A}. The obvious generalization to non-singular cocycles has been carried out explicitly in Sec. 4 of \cite{DD}.}
\begin{theorem}[\cite{C}] \label{thm_bj}
The Lyapunov exponent $L(\beta+., .): \mathbb{T} \times \mathcal{B}^\omega(\mathbb{T},M_2(\mathbb{C})) \to \mathbb{R}$ is jointly continuous at every {\em{irrational}} $\beta$. 
\end{theorem}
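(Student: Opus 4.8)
The plan is to reduce the non-singular $M_2(\mathbb{C})$ statement to the $SL(2,\mathbb{C})$ case and then establish joint continuity at an irrational frequency by a finite-scale (``avalanche'') telescoping argument exploiting analyticity together with the arithmetic of the frequency. First I would fix $\beta_0$ irrational and $D_0\in\mathcal{B}^\omega(\mathbb{T},M_2(\mathbb{C}))$ and pass to $SL(2,\mathbb{C})$: since $\det D_0$ has no zero on $\mathbb{T}$ and extends analytically to some $\mathbb{T}_\delta$, it is bounded away from $0$ on a smaller strip, and so is $\det D$ for every $D$ in a suitable $\mathcal{C}_\delta^\omega$-neighborhood; by Lemma \ref{lem_root} the function $\sqrt{\det D}\in\mathcal{C}^\omega(\mathbb{R}/2\mathbb{Z};\mathbb{C})$ then exists and depends continuously on $D$, so $D':=D/\sqrt{\det D}$ is an analytic $SL(2,\mathbb{C})$-cocycle (of period $1$ or $2$) depending continuously on $D$. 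By (\ref{eq_qa1}), $L(\gamma,D)=L(\gamma,D')+\tfrac12\int_{\mathbb{T}}\log|\det D|\,\ud x$, where the last term is independent of the frequency and, by the continuity of $I_\epsilon$ (hence of $D\mapsto I_0(\det D)$) proved in \cite{CC}, continuous in $D$. So it suffices to prove joint continuity of $L$ at every irrational frequency for analytic $SL(2,\mathbb{C})$-cocycles; from now on $D$ denotes such a cocycle and $D_0$ the center.

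Next I would dispose of the easy half. By Theorem \ref{thm_uscle}, $L$ is jointly upper semicontinuous, so only lower semicontinuity at $(\beta_0,D_0)$ remains; and since $L\ge0$ for $SL(2,\mathbb{C})$-cocycles, upper semicontinuity already gives continuity wherever $L(\beta_0,D_0)=0$. So I may assume $L_0:=L(\beta_0,D_0)>0$ and fix $\epsilon\in(0,L_0)$.

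For the lower bound I would run a finite-scale argument on a fixed strip $\mathbb{T}_\eta$ on which $D_0$ and all nearby cocycles are uniformly bounded. The finite-scale averages $u_n^{D}(z):=\tfrac1n\int_{\mathbb{T}}\log\|D^{(n)}(x+z)\|\,\ud x$ are uniformly bounded subharmonic functions on a slightly smaller strip, and a Cartan-type estimate for such functions (the analytic large-deviation bound of Goldstein--Schlag) shows that at each scale $n$ the set of $x$ where $\tfrac1n\log\|D^{(n)}(x)\|$ deviates from its mean $u_n^{D}(0)$ by more than a fixed amount has exponentially small measure, uniformly over the neighborhood. I would then pick $N$ so large that $L_N(\beta_0,D_0)=u_N^{D_0}(0)>L_0-\epsilon$; since $L_N(\gamma,D)=\tfrac1N\int_{\mathbb{T}}\log\|D^{(N)}(x)\|\,\ud x$ is continuous in $(\gamma,D)$, this persists on a neighborhood. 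Applying the Avalanche Principle of Goldstein--Schlag to the $N$-block decomposition $D^{(kN)}(x)=D^{(N)}(x+(k-1)N\gamma)\cdots D^{(N)}(x)$ — whose hypotheses (most consecutive blocks of norm $\approx e^{NL_N}$ and not nearly collinear, for most $x$) follow from the large-deviation estimate since the block maps are orbit translates of $D^{(N)}$, and are stable under small perturbations of $(\gamma,D)$ because $N$ is fixed — telescopes $\tfrac1{kN}\log\|D^{(kN)}(x)\|$ into a sum of $N$-block contributions each within $o_N(1)$ of $L_N$, with an error controlled by the small fraction of bad blocks. Integrating in $x$ (using $\log\|D^{(kN)}\|\ge0$ on the exceptional set), letting $k\to\infty$, and using subadditivity yields $L(\gamma,D)\ge L_N(\gamma,D)-o_N(1)>L_0-\epsilon-o_N(1)$ throughout the neighborhood; hence $\liminf_{(\gamma,D)\to(\beta_0,D_0)}L(\gamma,D)\ge L_0-\epsilon-o_N(1)$ for every $\epsilon>0$, and $\epsilon\downarrow0$ gives lower semicontinuity. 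Together with Theorem \ref{thm_uscle} this proves the theorem.

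The hard part will be the uniformity in the frequency hidden in the previous paragraph: the large-deviation set at scale $N$ must be small \emph{uniformly} for $\gamma$ in a full neighborhood of $\beta_0$, equivalently $L_N(\gamma)\to L(\gamma)$ with enough uniformity to be transported across that neighborhood. This is exactly where $\beta_0\notin\mathbb{Q}$ is needed and the continued-fraction structure of $\beta_0$ enters: the estimate is proved along the scales $N=q_n$, the denominators of the convergents of $\beta_0$, and any $\gamma$ sufficiently close to $\beta_0$ shares those convergents up to scale $q_n$. At a rational $\beta_0=p/q$ the argument collapses — $D^{(q)}$ is periodic, the set where $|\operatorname{tr}D^{(q)}(x)|\le2$ has positive measure and does not shrink, and $L$ can drop discontinuously under a small perturbation of $\beta_0$ (Remark 5 of \cite{B}). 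Carrying out this uniform large-deviation analysis is the substance of the Schr\"odinger result \cite{C}; its transcription to $SL(2,\mathbb{C})$ is \cite{A}, and the extension to non-singular $M_2(\mathbb{C})$-cocycles — the reduction made in the first paragraph — is done explicitly in \cite{DD}.
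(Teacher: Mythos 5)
Your proposal is correct and follows essentially the route the paper itself indicates: the paper does not prove Theorem \ref{thm_bj} but imports it, noting in a footnote that the $SL(2,\mathbb{R})$ case is \cite{C}, the $SL(2,\mathbb{C})$ extension is \cite{A}, and the reduction of non-singular $M_2(\mathbb{C})$-cocycles to that case is Sec.~4 of \cite{DD} --- which is exactly your first paragraph, carried out with the same tools the paper uses elsewhere (Lemma \ref{lem_root}, the renormalization (\ref{eq_Dpr}) and the identity (\ref{eq_qa1})). Your deferral of the analytic core --- the large-deviation estimate uniform over frequencies near the irrational $\beta_0$ together with the avalanche principle --- to \cite{C} is consistent with the paper's own treatment, and you correctly identify both why $L\ge 0$ plus upper semicontinuity (Theorem \ref{thm_uscle}) disposes of the zero-exponent case and why the argument must fail at rational frequencies.
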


Let us postpone for a moment the proof of Theorem \ref{thm_conti_sing} and first show that continuity in the analytic category is the best one can expect (Theorem \ref{thm_conti_sing_opt}). Recalling (\ref{eq_qa2}), it is enough to proof the following statement:
\begin{prop} \label{prop_contile_top}\begin{enumerate}
\item
The function $I: \mathcal{C}^\infty(\mathbb{T},\mathbb{C}) \to \mathbb{R} \cup \{-\infty\}$, defined by
\begin{equation}
I(c):=\int_\mathbb{T} \log \abs{c(x)} \ud x ~\mbox{,} ~c \in \mathcal{C}^\infty(\mathbb{T},\mathbb{C}) ~\mbox{,}
\end{equation}
is discontinuous at 
\begin{equation}
c(x)=\begin{cases} \mathrm{e}^{-1/\sqrt{\vert\sin(2 \pi x)\vert}} & \mbox{, if} ~x>0 ~\mbox{,} \\ 0 & \mbox{, otherwise.} \end{cases}
\end{equation}
Here, $\mathcal{C}^\infty(\mathbb{T},\mathbb{C})$ is topologized in analogy to (\ref{eq_metricsck}).
\item For $k\in \mathbb{N}_0,$ $I: \mathcal{C}^\omega(\mathbb{T},\mathbb{C}) \to \mathbb{R}$ is discontinuous
at $c(x)=\sin^{2k}(2\pi x)$ in the topology of $\mathcal{C}^k(\mathbb{T},\mathbb{C}).$ 
\end{enumerate}
\end{prop}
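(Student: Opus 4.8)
The plan is to treat both parts with one explicit device. For $n\in\mathbb{N}$ set $h_n:=1/n$ and
\[
g_n(x):=\exp\!\Bigl(-\,\frac{h_n}{\sin^{2}(\pi x)+h_n^{2}}\Bigr).
\]
Since $\sin^{2}(\pi z)+h_n^{2}$ vanishes only when $\sin(\pi z)=\pm i h_n$, i.e.\ at $z=m\pm\tfrac{i}{\pi}\operatorname{arcsinh}(h_n)$, the function $g_n$ is holomorphic on $\abs{\im z}<\tfrac1\pi\operatorname{arcsinh}(h_n)$ and $1$-periodic, so $g_n\in\mathcal{C}^{\omega}(\mathbb{T},\mathbb{C})$; moreover $0<g_n\le 1$ on $\mathbb{T}$. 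Conceptually $g_n$ carves a \emph{thin deep well} at $x=0$ (on a scale $\sim h_n$, where $g_n$ is super-exponentially small) while $g_n\to 1$ with all derivatives, uniformly on $\{\abs{x}\ge\eta\}$ for every fixed $\eta>0$. The key point is that the well survives in the logarithmic integral: using $\int_{0}^{1}(\sin^{2}(\pi x)+a^{2})^{-1}\,\ud x=(a\sqrt{1+a^{2}})^{-1}$,
\[
\int_{\mathbb{T}}\log g_n(x)\,\ud x=-h_n\int_{0}^{1}\frac{\ud x}{\sin^{2}(\pi x)+h_n^{2}}=-\frac{1}{\sqrt{1+h_n^{2}}}\ \xrightarrow[n\to\infty]{}\ -1 .
\]

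For part (2) with $k\ge1$ I would take $c_n:=\sin^{2k}(2\pi x)\,g_n(x)\in\mathcal{C}^{\omega}(\mathbb{T},\mathbb{C})$. Since $\log\abs{c_n}=2k\log\abs{\sin(2\pi x)}+\log g_n$ with both summands in $L^{1}(\mathbb{T})$, and $\int_{0}^{1}\log\abs{\sin(2\pi x)}\,\ud x=-\log 2$, we get $I(c_n)=-2k\log 2-(1+h_n^{2})^{-1/2}\to -2k\log 2-1\ne -2k\log 2=I\!\bigl(\sin^{2k}(2\pi\,\cdot\,)\bigr)$. So everything reduces to the convergence $c_n\to\sin^{2k}(2\pi\,\cdot\,)$ in $\mathcal{C}^{k}$, i.e.\ $\bigl\|\partial^{j}[\sin^{2k}(2\pi\,\cdot\,)(g_n-1)]\bigr\|_{\infty}\to 0$ for $0\le j\le k$, and this is precisely where the exponent $2k$ versus the regularity $k$ matters. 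Expanding by Leibniz, each term is $\partial^{a}\sin^{2k}(2\pi x)\cdot\partial^{b}g_n$ with $a+b=j\le k$: away from $x=0$ the factor $\partial^{b}g_n$ (or $g_n-1$ when $b=0$) tends to $0$ uniformly, so these are harmless; near $x=0$ one combines $\abs{\partial^{a}\sin^{2k}(2\pi x)}\le C\abs{x}^{2k-a}$ with $2k-a\ge k\ge1$ (here $a\le j\le k<2k$ is used) with the Fa\`a di Bruno estimate $\abs{\partial^{b}g_n(x)}\le C_{b}\,g_n(x)\bigl(h_n t^{-(b+2)/2}+h_n^{b}t^{-3b/2}\bigr)$, $t:=\sin^{2}(\pi x)+h_n^{2}$, and $\abs{x}\le C t^{1/2}$; bounding $g_n\le 1$ in the first summand and $g_n=e^{-h_n/t}$ in the second, every such product is $\le C_k h_n^{\gamma_k}$ uniformly in $x$, for some $\gamma_k>0$. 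Part (1) is identical in spirit: with $c_{*}$ the stated function (which is $\mathcal{C}^\infty$, vanishes to \emph{infinite} order at $x\in\{0,\tfrac12\}$, and has $I(c_{*})=-\int_{0}^{1}\abs{\sin(2\pi x)}^{-1/2}\ud x$ finite), put $c_n:=c_{*}g_n\in\mathcal{C}^{\infty}(\mathbb{T},\mathbb{C})$; then $I(c_n)=I(c_{*})-(1+h_n^{2})^{-1/2}\to I(c_{*})-1\ne I(c_{*})$, and for $\mathcal{C}^{\infty}$-convergence one runs the same Leibniz expansion, the infinite flatness of $c_{*}$ now doing the work ($\abs{\partial^{a}c_{*}(x)}\le C_{a,N}\abs{x}^{N}$ near $0$ for every $N$, since $\partial^{a}c_{*}$ is a rational function of $\abs{\sin 2\pi x}^{1/2}$ times $c_{*}$). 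A more pedestrian alternative for part (1): simply replace $c_{*}$ on $[-1/n,1/n]$ by a smooth function dipping to $e^{-n^{2}}$ and back, matched to infinite order at $\pm1/n$; then $\abs{c_{*}(\pm1/n)}\le e^{-c\sqrt n}$ makes the $\mathcal{C}^{\infty}$-estimate immediate, while $\int_{-1/n}^{1/n}\log\abs{c_n}\le -n$ forces $I(c_n)\to-\infty$.

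Two remarks. First, the literal $k=0$ case of part (2) — discontinuity at $c\equiv1$ in the $\mathcal{C}^{0}$-topology — is \emph{not} accessible this way, because $I|_{\mathcal{C}^{\omega}}$ is actually continuous at $c\equiv1$ along $\mathcal{C}^{0}$-convergence ($c_n\to1$ uniformly forces $\log\abs{c_n}\to 0$ uniformly); the $k=0$ case of Theorem~\ref{thm_conti_sing_opt} is the classical $\mathcal{C}^{0}$-phenomenon recalled in the Remark after it (\cite{GG},\cite{GGG}), so I would simply state the Proposition's second item for $k\ge1$. Second, the only genuine work is the uniform derivative bound $\bigl\|\partial^{j}[\sin^{2k}(2\pi\,\cdot\,)(g_n-1)]\bigr\|_{\infty}\to0$ (and its $\mathcal{C}^{\infty}$ analogue): once one records how fast the derivatives of $g_n$ blow up on each dyadic scale $\abs{x}\sim 2^{-\ell}$ relative to $h_n$, the Leibniz/Fa\`a di Bruno bookkeeping is routine, but it is genuinely borderline in $\mathcal{C}^{k}$ — it is exactly the vanishing of the prefactor to order $>k$ at the well that cancels the well, and that cancellation is the crux of the argument.
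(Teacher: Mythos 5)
Your construction is correct, and it takes a genuinely different route from the paper's. The paper perturbs $c$ \emph{additively}: it cuts $c$ off near its real zeros with rescaled bump functions $\phi(4nx)$, replaces it there by a small constant, approximates the resulting correction $f_n$ by trigonometric polynomials $p_n$ with $\norm{f_n-p_n}_k<2^{-n}$, and sets $c_n=c+p_n$; the $\mathcal{C}^k$- (resp.\ $\mathcal{C}^\infty$-) smallness of $f_n$ is extracted from the flatness of $c$ at its zeros by exactly the kind of Leibniz bookkeeping you carry out, and the failure of convergence of $I(c_n)$ is then read off from $\int_{J_n}\log\abs{c_n}$ over the shrinking intervals $J_n$. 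You instead perturb \emph{multiplicatively} by the explicit analytic well $g_n=\exp\bigl(-h_n/(\sin^2(\pi x)+h_n^2)\bigr)$, so that the drop $I(cg_n)-I(c)=\int_{\mathbb{T}}\log g_n=-(1+h_n^2)^{-1/2}\to-1$ is a closed-form computation and the only real work is the borderline estimate $\norm{\partial^j[\sin^{2k}(2\pi\cdot)(g_n-1)]}_\infty\to0$ for $j\le k$, which your Fa\`a di Bruno bounds (with the factor $g_n\le \mathrm{e}^{-h_n/t}$ absorbing the blow-up when $t\lesssim h_n^{3/2}$) do close. What your route buys is that the approximants are manifestly analytic with no polynomial-approximation step and the discontinuity gap is an explicit, stable number; it also sidesteps the delicate balance in the additive construction between the size of the replacement value on $J_n$ and $\abs{J_n}=1/n$, which is the one place where the paper's final inequality $I(c_n)\le-\log 2+\int_{\mathbb{T}\setminus J_n}\log\abs{c_n}$ requires the replacement constant to be taken far smaller than the value $c(\tfrac{1}{4n})$ appearing in the displayed formula for $f_n$ (with $c(\tfrac{1}{4n})$ itself, $\int_{J_n}\log\abs{c_n}\to0$). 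Your two side remarks are also well taken: the pedestrian variant for part (1) works, and item (2) cannot hold at $k=0$ as stated, since $c\equiv1$ is bounded away from zero and Remark (ii) following the Proposition already gives continuity there; that case of Theorem \ref{thm_conti_sing_opt} is in any event covered by the classical $\mathcal{C}^0$ results cited in the paper, so restricting item (2) to $k\ge1$ is the right call.
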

\begin{remark} 
\begin{itemize}
\item[(i)] Based on Jensen's formula, we showed in \cite{CC} (see Lemma 2.9 therein) that $I$ is continuous when defined on $\mathcal{C}^\omega(\mathbb{T},\mathbb{C})$ (in inductive limit topology). \label{rem_contlemma}
\item[(ii)] Concavity of the $\log$ and bounded convergence implies that for any $k \in \mathbb{N}_0 \cup \{\infty\}$, $I$ is continuous at any $c$ bounded away from zero. 
\label{rem_counter}
\end{itemize}
\end{remark}

\begin{proof}
Let $\phi(x)$ be the standard $\mathcal{C}^\infty$ bump-function with  $\mathrm{supp} \phi \in [-2,2]$ and $\phi(x) = 1$ for $x \in [-1,1]$. 
Set $\tilde{\phi}(x):=1-\phi(x)$ and let $a_k= \Vert \phi^{(k)} \Vert_{[-2,2]}.$

Consider the sequence of $\mathcal{C}^\infty$ functions on $\mathbb{T}$ given by
\begin{eqnarray}\label{fn}
- f_n:= & c(x) \left[\phi(4n (x-\frac{1}{2})) + \phi(4n x) + \phi(4n (x-1))\right] \\
& + c(\frac{1}{4n}) \left[\tilde{\phi}(4n (x-\frac{1}{2})) + \tilde{\phi}(4n x) + \tilde{\phi}(4n (x-1))\right]  ~\mbox{.}
\end{eqnarray}

\begin{claim}
$\rho_\infty(f_n,0) \to 0$ as $n\to \infty$.
\end{claim}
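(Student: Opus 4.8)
The plan is to show that the sequence $f_n$ defined in (\ref{fn}) is supported on a shrinking family of intervals around the three singularities $x \in \{0, \tfrac14, \tfrac12, 1\}$ of $c$ — wait, more precisely around $x = 0, \tfrac12, 1$ — and that on each such interval $f_n$ is the sum of a ``$c$-part'' (which is tiny because $c$ itself is tiny and flat near these points, being $C^\infty$ with all derivatives vanishing there) and a constant-cutoff part whose scale is $c(\tfrac{1}{4n})$, which tends to $0$ together with all relevant rescaled derivatives. So first I would fix $n$ and note that away from $O(1/n)$-neighborhoods of $\{0,\tfrac12,1\}$ we have $\phi(4nx)=\phi(4n(x-\tfrac12))=\phi(4n(x-1))=0$ and $\tilde\phi(\cdot)=1$, so that $-f_n = 3c(\tfrac1{4n})$ there; but actually one wants $f_n$ small \emph{everywhere}, so I will instead split the estimate region-by-region.

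The key steps, in order: (1) Observe $\mathrm{supp}\,\phi(4n\cdot) \subset [-\tfrac{1}{2n},\tfrac{1}{2n}]$ and similarly for the shifted copies, so these bump regions are disjoint for large $n$ and it suffices to bound $f_n$ (and its derivatives) on one of them, say $|x| \le \tfrac1{2n}$, and on the complement. (2) On $|x| \le \tfrac1{2n}$: there $-f_n(x) = c(x)\phi(4nx) + c(\tfrac1{4n})\tilde\phi(4nx)$. By the chain rule, $\partial^j[\phi(4nx)] $ has sup-norm $\le (4n)^j a_j$, so $\partial^j f_n$ is a finite sum of terms of the form $\binom{j}{\ell}\partial^\ell c(x)\cdot (4n)^{j-\ell}a_{j-\ell}$ plus $c(\tfrac1{4n})(4n)^j a_j$. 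Now the crucial input is that $c$, being $C^\infty$ on $\mathbb{T}$ with $c\equiv 0$ on a neighborhood of $0$ from one side and $c(x)=e^{-1/\sqrt{|\sin 2\pi x|}}$ from the other, satisfies: for every $m$, $\sup_{|x|\le \tfrac1{2n}} |\partial^\ell c(x)| = o(n^{-m})$ as $n\to\infty$ — indeed $c$ and all its derivatives vanish at $x=0$ faster than any polynomial because of the $e^{-1/\sqrt{t}}$-type flatness. Hence each term $\partial^\ell c(x)(4n)^{j-\ell}$ is $o(1)$, and $c(\tfrac1{4n})(4n)^j = e^{-1/\sqrt{\sin(\pi/2n)}}(4n)^j \to 0$ since the exponential beats any power. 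Thus $\max_{|x|\le\tfrac1{2n}}\|\partial^j f_n\| \to 0$ for each fixed $j$. (3) On the complementary region, near $x=\tfrac14$ say, or anywhere $\phi$-terms vanish, $-f_n \equiv 3c(\tfrac1{4n})$ locally (a constant), so all derivatives vanish and the $C^0$-norm is $3c(\tfrac1{4n})\to 0$; between regions one does the same bump analysis. (4) Conclude $\rho_j(f_n,0)\to 0$ for every $j\in\mathbb{N}_0$, hence by definition of $\rho_\infty$ in (\ref{eq_metricsck}) — a weighted sum $\sum 2^{-n}\rho_n/(1+\rho_n)$, dominated by the tail — that $\rho_\infty(f_n,0)\to 0$.

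I would phrase step (4) carefully: since $0 \le \frac{\rho_j}{1+\rho_j}\le 1$ always, for any $\varepsilon>0$ pick $J$ with $\sum_{j>J}2^{-j}<\varepsilon/2$; then for $j\le J$ use $\rho_j(f_n,0)\to 0$ to make $\sum_{j\le J}2^{-j}\frac{\rho_j(f_n,0)}{1+\rho_j(f_n,0)}<\varepsilon/2$ for $n$ large. This gives $\rho_\infty(f_n,0)<\varepsilon$.

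The main obstacle is the derivative bookkeeping in step (2): one must verify uniformly in $n$ that $\sup_{|x|\le 1/2n}|\partial^\ell c(x)|$ decays super-polynomially. This is where the specific choice $c(x)=e^{-1/\sqrt{|\sin 2\pi x|}}$ (with the one-sided vanishing) matters — near $0^+$, writing $t=\sin 2\pi x\sim 2\pi x$, the function $e^{-1/\sqrt t}$ and each of its derivatives (which are $e^{-1/\sqrt t}$ times rational functions of $t^{1/2}$) tend to $0$ as $t\to 0^+$ faster than any power of $t$, hence than any power of $x$, hence than any power of $1/n$; on the other side $c\equiv 0$, and the match at $x=0$ is $C^\infty$. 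Granting this flatness, everything else is the routine Leibniz/chain-rule estimate sketched above. I would state the flatness as a short lemma (``$c\in C^\infty(\mathbb{T})$ and $\partial^\ell c(x)=o(|x|^m)$ as $x\to 0$, for all $\ell,m$'') and cite it in the proof rather than grinding through the explicit derivative formulas.
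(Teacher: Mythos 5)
Your proposal is correct and follows essentially the same route as the paper: the Leibniz-rule bookkeeping on the bump intervals, the key input that all derivatives of $c$ decay super-polynomially near the zeros of $\sin 2\pi x$ (which the paper establishes via the explicit inductive formulas $c^{(2k)}(x)=\mathrm{e}^{-1/\sqrt{|\sin 2\pi x|}}|\sin 2\pi x|^{-3k}p_{6k-1}(\sqrt{|\sin 2\pi x|})$, etc., and you state as a flatness lemma), and the truncation of the series defining $\rho_\infty$. The only cosmetic difference is that you treat the $c(\tfrac{1}{4n})\tilde\phi$ terms and the off-bump region explicitly, where the paper dismisses them as ``similar.''
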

\begin{proof}
Since the remaining terms in (\ref{fn}) can be treated similarly, we will focus on showing that
\begin{equation} \label{eq_wtsderiv}
\rho_\infty(\phi(4n x) c(x),0) \to 0 ~\mbox{, as} ~n\to \infty ~\mbox{.}
\end{equation}

By induction, for $k \in \mathbb{N}$ we have for $x\not= 0$
\begin{eqnarray}
c^{(2k)}(x)=\mathrm{e}^{-\frac{1}{\sqrt{|\sin 2\pi x|}}} |\sin 2\pi x|^{-3k} p_{6k-1}(\sqrt{|\sin 2\pi x|}),\\
c^{(2k+1)}(x)=\mathrm{e}^{-\frac{1}{\sqrt{|\sin 2\pi x|}}} |\sin 2\pi x|^{-3(k+1/2)}p_{6k}(\sqrt{|\sin 2\pi x|})\cos 2\pi x, 
\end{eqnarray}
and $c^{(k)}(0)=0,$ where $p_k(x)$ is a polynomial of degree $k.$
Thus $\forall d,k \in \mathbb{N}$,
\begin{equation} \label{eq_estimderiv3}
n^d \norm{c^{(k)}(x)}_{[-\frac{1}{2 n}, \frac{1}{2n}]} =: n^d b_k^n \to 0 ~\mbox{, as} ~n \to \infty ~\mbox{.}
\end{equation}

Given $\epsilon>0$, choose $K \in \mathbb{N}$ such that $\sum_{k > K} 2^{-k} < \epsilon$. The claim follows if we can argue that
\begin{equation}
\sum_{k=1}^{K} 2^{-k} \dfrac{\rho_k(\phi(4n x) c(x),0)}{1 + \rho_k(\phi(4n x) c(x),0)} \leq \sum_{k=1}^{K} 2^{-k} \rho_k(\phi(4n x) c(x),0) \to 0 ~\mbox{, } ~n \to \infty ~\mbox{.}
\end{equation}

Indeed, by (\ref{eq_estimderiv3}) for each $k\in \mathbb{N}$ we have
\begin{equation}
\Vert \partial^k \left( c(x) \phi(4n x) \right) \Vert \leq \sum_{p=0}^{k} \binom{k}{p} a_p (4n)^p b^n_{k-p} \to 0 ~\mbox{, } ~n\to \infty ~\mbox{.}
\end{equation}
\end{proof}

Approximate $f_n$ by trigonometric polynomials $p_n$ satisfying
$\|f_n-p_n\|_k < (\frac12)^n$ and set $c_n=c+p_n.$ 
For $n\in \mathbb{N}$, let $J_n:= [0, \frac{1}{4n}) \cup [\frac{1}{2} - \frac{1}{4n}, \frac{1}{2} + \frac{1}{4n}) \cup [1-\frac{1}{4n},1)$. 

We estimate:
\begin{equation} \label{eq_counter1}
I(c_n) \leq -  \log(2) + \int_{\mathbb{T} \setminus J_n} \log \abs{c_n} \ud x   ~\mbox{,}
\end{equation}
for all $n \in \mathbb{N}$. Since the integral on the right hand side of (\ref{eq_counter1}) converges as $n\to\infty$ to $I(c)>-\infty$, we have $I(c_n) \nrightarrow I(c)$.

The second statement follows in the same way with $c(x)=\sin^{2k}(2 \pi x).$ 


\end{proof}

\begin{proof}[Proof of Theorem \ref{thm_conti_sing}]
The key observation, based on the idea of Artur Avila,  is that by convexity of the LE in $\epsilon$, it suffices to show continuity of the LE {\em{away}} from $\mathbb{T}$, i.e. on  $0<\abs{\epsilon}\leq \delta$ for some small $\delta>0$. 

To see that this indeed is sufficient, suppose for a moment the LE was continuous away from $\mathbb{T}$. Then for any $\delta^*$ with $0<\delta^*<\delta$, $L(\beta + r, B_\epsilon) \to L(\beta, A_\epsilon)$ uniformly on $\delta^* \leq \abs{\epsilon} \leq \delta$ as $r \to 0$ and $\norm{B-A}_\delta \to 0$. 
\begin{lemma} \label{fact_conv}
Consider $\mathcal{F}:=\{f:[0,1] \to \mathbb{R}  ~\mbox{convex}\}$ as a closed subset of $(\mathcal{C}([0,1],\mathbb{R}),\Vert . \Vert_{[0,1]})$. For $x\in (0,1)$ denote by $\mathrm{D}_\pm(f)(x)$ the right$(+)$ and left$(-)$ derivative of $f \in \mathcal{F}$, respectively. Given $K \subset (0,1)$ compact, define $T_K^{\pm}(f):= \sup_{x \in K} \abs{\mathrm{D}_\pm(f)(x)}$ for $f \in \mathcal{F}$. Then, $T_K^{\pm}$ is locally bounded.
\end{lemma}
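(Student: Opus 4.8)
The plan is to prove the stronger statement that $T_K^{\pm}$ is bounded on every norm-bounded subset of $\mathcal{F}$ (which immediately yields local boundedness), by combining the standard monotonicity/secant estimates for one-sided derivatives of convex functions with the fact that a compact $K \subset (0,1)$ stays uniformly away from the endpoints $0$ and $1$.

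First I would record the elementary convexity facts needed. For $f \in \mathcal{F}$ and points $u < v < w$ in $[0,1]$, the three-slopes inequality gives
\[
\frac{f(v)-f(u)}{v-u} \;\le\; \frac{f(w)-f(u)}{w-u} \;\le\; \frac{f(w)-f(v)}{w-v} ~\mbox{;}
\]
consequently $\mathrm{D}_{\pm}(f)(x)$ exist at every interior point, both $x \mapsto \mathrm{D}_+(f)(x)$ and $x \mapsto \mathrm{D}_-(f)(x)$ are non-decreasing, $\mathrm{D}_-(f)(x) \le \mathrm{D}_+(f)(x)$, and for $u < v$ one has the squeeze $\mathrm{D}_+(f)(u) \le \tfrac{f(v)-f(u)}{v-u} \le \mathrm{D}_-(f)(v)$.

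Next, fix $K \subset (0,1)$ compact and choose reals $0 < a < c \le d < b < 1$ with $K \subset [c,d]$; this is possible precisely because $K$ is a compact subset of the open interval. For $x \in K$, I would bound $\mathrm{D}_+(f)$ from above using the secant of $f$ over $[d,b]$: by monotonicity of $\mathrm{D}_+(f)$ together with the squeeze,
\[
\mathrm{D}_+(f)(x) \;\le\; \mathrm{D}_+(f)(d) \;\le\; \frac{f(b)-f(d)}{b-d} \;\le\; \frac{2\Vert f\Vert_{[0,1]}}{b-d} ~\mbox{,}
\]
and $\mathrm{D}_-(f)$ from below using the secant over $[a,c]$:
\[
\mathrm{D}_-(f)(x) \;\ge\; \mathrm{D}_-(f)(c) \;\ge\; \frac{f(c)-f(a)}{c-a} \;\ge\; -\frac{2\Vert f\Vert_{[0,1]}}{c-a} ~\mbox{.}
\]
Since $\mathrm{D}_-(f)(x) \le \mathrm{D}_+(f)(x)$ for every $x$, both one-sided derivatives lie in $\bigl[-\tfrac{2\Vert f\Vert_{[0,1]}}{c-a}, \tfrac{2\Vert f\Vert_{[0,1]}}{b-d}\bigr]$ on all of $K$. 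Setting $C_K := 2\max\{(c-a)^{-1}, (b-d)^{-1}\}$, which depends only on $K$, we obtain $T_K^{\pm}(f) \le C_K \Vert f\Vert_{[0,1]}$ for every $f \in \mathcal{F}$.

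Local boundedness then follows at once: around any $f_0 \in \mathcal{F}$, on the ball $\{f \in \mathcal{F}: \Vert f - f_0\Vert_{[0,1]} < 1\}$ one has $T_K^{\pm}(f) \le C_K(\Vert f_0\Vert_{[0,1]} + 1)$. I do not anticipate a genuine obstacle here; the only point requiring attention is keeping $a, c, d, b$ strictly inside $(0,1)$ — the estimate, and indeed the lemma itself, fails if $K$ is allowed to touch an endpoint, where a convex function may have an infinite one-sided slope, and compactness of $K$ in the open interval is exactly what excludes this.
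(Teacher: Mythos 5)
Your proof is correct and rests on the same two ingredients as the paper's: monotonicity of the one-sided derivatives of a convex function (reducing everything to the extreme points of $K$) together with secant-slope estimates controlled by the sup norm. The only difference is cosmetic — by placing the comparison points $a,b$ strictly outside $[\,\min K,\max K]$ you obtain the explicit uniform bound $T_K^{\pm}(f)\leq C_K\Vert f\Vert_{[0,1]}$ on norm-balls, whereas the paper phrases the same estimate as a semicontinuity statement for $D_{+}$ at $\max K$ and $D_{-}$ at $\min K$; both yield local boundedness.
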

\begin{remark}
This is just a more elaborate version of the following basic fact for a sequence $f_n:(0,1) \to \mathbb{R}$ of convex functions: If $f_n \to f$ pointwise,  then
$D_{-}(f) \leq \liminf_{n \to\infty} D_{-}(f_n) \leq \limsup_{n \to \infty} D_{+}(f_n) \leq D_{+}(f)$ (e.g. \cite{EEE}). We prove Lemma  \ref{fact_conv} in Appendix \ref{app_convlemma}.
\end{remark}

Thus, the acceleration is locally bounded, i.e. there exist $\eta, \kappa, K>0$ such that whenever $\norm{B-A}_\delta < \eta$ and $\abs{r} < \kappa$ we have
\begin{equation}
\sup_{\delta^* \leq \abs{\epsilon} \leq \delta}  \abs{\omega(\beta+r, B;\epsilon)} \leq K ~\mbox{,}
\end{equation}
which since $\omega$ monotonically increases in $\epsilon$ yields
\begin{equation}
\sup_{\abs{\epsilon} \leq \delta}  \abs{\omega(\beta+r, B;\epsilon)} \leq K ~\mbox{.}
\end{equation}

Continuity of the LE then follows upon successive approximation, i.e. 
\begin{eqnarray} 
\left\vert L(\beta + r, B) - L(\beta, A) \right\vert \leq & \left\vert L(\beta + r, B) - L(\beta + r, B_{\delta^{*}}) \right\vert  \nonumber \\ \label{eq_conti1a}
& + \left\vert L(\beta + r, B_{\delta^{*}}) -L(\beta, A_{\delta^{*}}) \right\vert + \left\vert  L(\beta, A_{\delta^{*}}) - L(\beta, A) \right\vert \nonumber \\
\leq & 2 K \delta^{*} + \left\vert L(\beta + r, B_{\delta^{*}}) - L(\beta, A_{\delta^{*}}) \right\vert  ~\mbox{,} \label{eq_conti1}
\end{eqnarray}
for every $\norm{B - A} < \eta$ and $\abs{r} < \kappa$. 

Since $\delta^{*}$ was arbitrarily, Theorem \ref{thm_conti_sing} would follow provided that the second term in (\ref{eq_conti1}) can be made arbitrarily small.

For $\mathcal{A}^\omega(\mathbb{T},M_2(\mathbb{C}))$ this is an immediate consequence of Theorem \ref{thm_bj}, since for $\delta>0$ sufficiently small, $\det A(z) \neq 0$ if $0 < \abs{\im(z)} < \delta$. Thus,
\begin{theorem} \label{thm_contile_v1}
Given $\beta$ irrational, Theorem \ref{thm_conti_sing} holds for the restriction of $L(\beta+.,.)$ to $\mathbb{T} \times \mathcal{A}^\omega(\mathbb{T},M_2(\mathbb{C}))$.
\end{theorem}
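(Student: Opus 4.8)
The plan is to elaborate the sentence ``this is an immediate consequence of Theorem~\ref{thm_bj}'' by checking that the convexity/acceleration scheme of the (ongoing) proof of Theorem~\ref{thm_conti_sing} applies verbatim once its one missing input — continuity of $L$ on the region $0<\abs{\epsilon}\le\delta$ — is supplied. The key point is that an $\mathcal{A}^\omega$-cocycle becomes \emph{non-singular} after an arbitrarily small complex shift, so on that region it never leaves the class $\mathcal{B}^\omega(\mathbb{T},M_2(\mathbb{C}))$ on which Theorem~\ref{thm_bj} already gives joint continuity at irrational $\beta$.

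First I would fix $A\in\mathcal{A}^\omega(\mathbb{T},M_2(\mathbb{C}))$, say $A\in\mathcal{C}_{\delta_0}^\omega(\mathbb{T},M_2(\mathbb{C}))$, and use that $\det A(z)$ is holomorphic on a neighborhood of the compact strip $\mathbb{T}_{\delta_0}$ and, by definition of $\mathcal{A}^\omega$, not identically $0$. Hence $\det A$ has only finitely many zeros in $\mathbb{T}_{\delta_0}$, and those that do not lie on $\mathbb{T}$ are at a positive distance from $\mathbb{T}$. Choosing $\delta>0$ sufficiently small (smaller than $\delta_0/2$ and than half that distance) then guarantees $\det A(z)\neq 0$ whenever $0<\abs{\im z}\le 2\delta$; in particular $A_\epsilon\in\mathcal{B}_{\delta}^\omega(\mathbb{T},M_2(\mathbb{C}))$ for every $\epsilon$ with $0<\abs{\epsilon}\le\delta$. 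Since ``$\det D$ non-vanishing on $\mathbb{T}_\delta$'' is an open condition in $\mathcal{C}_\delta^\omega(\mathbb{T},M_2(\mathbb{C}))$ — the map $D\mapsto\det D$ into $\mathcal{C}_\delta^\omega(\mathbb{T},\mathbb{C})$ is continuous, and a function with $\min_{\mathbb{T}_\delta}\abs{\,\cdot\,}>0$ stays so under a sufficiently small perturbation — the same holds for $B_\epsilon$ once $\norm{B-A}_{\delta_0}$ is small, and then $B_\epsilon\to A_\epsilon$ in $\mathcal{C}_\delta^\omega(\mathbb{T},M_2(\mathbb{C}))$.

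Next I would feed this into the scheme. For $\delta^*\le\abs{\epsilon}\le\delta$ the shifted cocycles $A_\epsilon$ and the nearby $B_\epsilon$ all lie in one fixed Banach space $\mathcal{B}_{\delta^*/2}^\omega(\mathbb{T},M_2(\mathbb{C}))$: restricting the strip, $\abs{\det A}$ is bounded below on the corresponding region by compactness, hence so is $\abs{\det B}$ for $B$ close to $A$. By Theorem~\ref{thm_bj}, $L(\beta+.,.)$ is jointly continuous on $\mathbb{T}\times\mathcal{B}_{\delta^*/2}^\omega(\mathbb{T},M_2(\mathbb{C}))$ at the irrational $\beta$; since $\{A_\epsilon:\delta^*\le\abs{\epsilon}\le\delta\}$ is compact in that Banach space (being the continuous image of a compact parameter set under $\epsilon\mapsto A_\epsilon$), a standard compactness/uniform-continuity argument upgrades this to $L(\beta+r,B_\epsilon)\to L(\beta,A_\epsilon)$ \emph{uniformly} on $\{\delta^*\le\abs{\epsilon}\le\delta\}$ as $r\to0$, $\norm{B-A}_{\delta_0}\to0$. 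This is precisely what Lemma~\ref{fact_conv} together with convexity of $\epsilon\mapsto L(\beta+r,B_\epsilon)$ requires in order to produce the local acceleration bound $\sup_{\abs{\epsilon}\le\delta}\abs{\omega(\beta+r,B;\epsilon)}\le K$, after which estimate (\ref{eq_conti1}) gives $\abs{L(\beta+r,B)-L(\beta,A)}\le 2K\delta^*+o(1)$ as $r\to0$, $\norm{B-A}_{\delta_0}\to0$; letting the $o(1)$-term vanish and then $\delta^*\to0$ concludes.

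The only genuinely delicate bookkeeping is in the last paragraph: one must commit to a \emph{single} strip width on which \emph{all} the cocycles $A_\epsilon$ with $\delta^*\le\abs{\epsilon}\le\delta$, together with their perturbations $B_\epsilon$, are simultaneously analytic and uniformly non-singular, so that Theorem~\ref{thm_bj} and the compactness step can be applied inside one Banach space rather than an inductive limit. Once that is arranged, the argument is a direct transcription of the scheme already in place, which is why the excerpt can call it immediate.
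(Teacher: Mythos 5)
Your proposal is correct and follows essentially the same route as the paper: the paper's entire justification is that $\det A(z)\neq 0$ for $0<\abs{\im z}<\delta$ with $\delta$ small (since $\det A\not\equiv 0$ is analytic), so Theorem~\ref{thm_bj} supplies continuity away from $\mathbb{T}$, which is then fed into the convexity/acceleration scheme of the proof of Theorem~\ref{thm_conti_sing}. Your additional bookkeeping — fixing a single strip width, using openness of non-singularity, and upgrading to uniform convergence on $\delta^*\leq\abs{\epsilon}\leq\delta$ by compactness — is exactly the content the paper leaves implicit in the word ``immediate.''
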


What is left is to consider is the case when the determinant vanishes {\em{identically}} on $\mathbb{T}$. From a dynamical point of view this situation is analogous to the case of uniformly hyperbolic cocycles (with obvious abuse of terminology since the determinant is identically zero), where it is known that the LE behaves continuosly. 

That this reasoning may be used to extend above results on continuity to the case of identically vanishing determinant, was also pointed out to us by Artur Avila which we gratefully acknowledge. Following we give the details to this line of argument, which combined with Theorem \ref{thm_contile_v1} completes the proof of Theorem \ref{thm_conti_sing}.

Fix a cocycle $(\beta, A)$ with irrational $\beta$ and $\det A \equiv 0$ on $\mathbb{T}$. Clearly, this automatically implies $\det A \equiv 0$ on $\mathbb{T}_\delta$, the domain of holomorphicity of $A$. 

Consider first the case of  non-trivial $A$ (i.e. $A \not \equiv 0$ on $\mathbb{T}$). Continuity for trivial $A$ is dealt with in Lemma \ref{lem_contleconst}. 

Without loss of generality we may then assume that the first row of $A$ does not vanish identically on $\mathbb{T}$. Thus, there exists $0<\delta^*<\delta$ such that for all $z \in \mathbb{T}_{\delta^*} \setminus \mathbb{T}$, $(A_{11}(z), A_{12}(z)) \neq 0$, whence $\dim \mathrm{Ker} A(z) = \dim \mathrm{Ran} A(z) = 1$ on $\mathbb{T}_{\delta^*} \setminus \mathbb{T}$.

Letting, 
\begin{equation}
u(z):=\begin{pmatrix}  -A_{12}(z) \\ A_{11}(z)  \end{pmatrix} ~\mbox{,} ~v(z) := A(z) \begin{pmatrix} A_{11}(z) \\ A_{12}(z) \end{pmatrix} ~\mbox{,}
\end{equation}
we thus obtain analytic functions defined on $\mathbb{T}_{\delta^*}\setminus \mathbb{T}$ with $\mathrm{Span}\{u(z)\} =  \mathrm{Ker} A(z)$ and $\mathrm{Span} \{v(z)\} = \mathrm{Ran} A(z)$.

%
%

For $z \in \mathbb{T}_{\delta^*} \setminus \mathbb{T}$, define $C(z):= \left(v(z-\beta) \vert u(z) \right)$. We distinguish the following two cases:
\begin{description}
\item[Case 1] $\det C(z) \not \equiv 0$ on $\mathbb{T}_{\delta^*} \setminus \mathbb{T}$. Then, making $\delta^*>0$ sufficiently small, $\mathrm{Ker}  A(z)$ and $\mathrm{Ran} A(z-\beta)$ are transversal i.e. $C(z) \in GL(2,\mathbb{C})$ for all $0<\abs{\im{z}} < \delta^*$.

For $z \in \mathbb{T}_{\delta^*} \setminus \mathbb{T}$ we thus obtain an analytic conjugacy of the cocycle $(\beta, A)$ given by
\begin{equation} \label{eq_conti_singidvanish_2}
C(z+\beta)^{-1} A(z) C(z) = \begin{pmatrix} c(z) & 0 \\ 0 & 0  \end{pmatrix} =:D(z) ~\mbox{.}
\end{equation}
Here, $c(z)$
 is holomorphic and non-zero on $\mathbb{T}_{\delta^*} \setminus \mathbb{T}$. 

We thus obtain,
\begin{equation} \label{eq_conti_singidvanish_1}
L(\beta, A_\epsilon) = L(\beta, D_{\epsilon}) = \int_{\mathbb{T}} \log\abs{c(x+i\epsilon)} \ud x ~\mbox{,}
\end{equation}
for $0 < \abs{\epsilon} < \delta^*$.
From above construction, it is clear that $c(z)$ depends continuously on both $A$ and $\beta$. 

\item[Case 2] If $\det C(z) \equiv 0$ on $\mathbb{T}_{\delta^*} \setminus \mathbb{T}$, $\mathrm{Ker}  A(z)$ and $\mathrm{Ran} A(z-\beta)$ coincide on $\mathbb{T}_{\delta^*} \setminus \mathbb{T}$. In this case $A(z+\beta)A(z)\equiv 0$ on $\mathbb{T}_{\delta^*}$. 
Thus $L(\beta,A_\epsilon)=-\infty$, $\abs{\epsilon} \leq \delta^*$ which directly leads to continuity by Theorem \ref{thm_uscle}.
\end{description}

Since  in Case 1 ($c(z) \neq 0$ on $0<\abs{\im z} \leq \delta^*$),
\begin{equation}
L(\beta, A_\epsilon) = L(\beta, \begin{pmatrix} c(x+i\epsilon) & 0 \\ 0 & 0 \end{pmatrix}) = \int_\mathbb{T} \log \abs{c(x+i\epsilon)} \ud x + L(\beta, \begin{pmatrix}  1 & 0 \\ 0 & 0 \end{pmatrix}) ~\mbox{,}
\end{equation}
we conclude that for $0 < \abs{\epsilon} \leq \delta^*$, stability of the LE under {\em{any}} analytic perturbation of $(\beta, A)$ reduces to showing continuity at the {\em{constant}} cocycle $A =
\begin{pmatrix} 1 & 0 \\ 0 & 0 \end{pmatrix}
$. More generally, the following is true:

\begin{lemma} \label{lem_contleconst}
The Lyapunov exponent $L( ., .): \mathbb{T} \times \mathcal{C}^0(\mathbb{T}, M_2(\mathbb{C})) \to \mathbb{R}\cup\{-\infty\}$ is (jointly) continuous at any constant cocycle. 
\end{lemma}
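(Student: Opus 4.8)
The plan is to prove joint continuity of $L$ at an arbitrary constant cocycle $(\beta_0, A_0)$ with $A_0 \in M_2(\mathbb{C})$ fixed, by reducing to an explicit computation of the spectral radius of nearby iterates. The key structural fact is that for a \emph{constant} matrix $A_0$ the iterate $A_0^{(n)} = A_0^n$ is independent of $\beta$, so $\frac1n \log\norm{A_0^n} \to \log\rho(A_0) = L(\beta_0, A_0)$ regardless of $\beta_0$. Upper semicontinuity (Theorem \ref{thm_uscle}) already gives $\limsup L(\beta+r, D) \le L(\beta_0, A_0)$ as $r \to 0$ and $D \to A_0$ in $\mathcal{C}^0$, so the entire content of the lemma is the matching lower bound $\liminf L(\beta+r, D) \ge \log\rho(A_0)$. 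Here I would split into cases according to $\rho(A_0)$.

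If $\rho(A_0) = 0$ then $L(\beta_0, A_0) = -\infty$, and continuity is immediate from upper semicontinuity (Theorem \ref{thm_uscle}) together with the remark following it that $L$ is continuous at every point where it equals $-\infty$. So assume $\rho := \rho(A_0) > 0$. The plan is to produce, for any $\mu$ with $0 < \mu < \log\rho$, a neighborhood of $(\beta_0, A_0)$ on which $L \ge \mu$. Choose $N$ large with $\frac1N \log\norm{A_0^N} > \mu + \sigma$ for some margin $\sigma>0$ (possible since $\frac1N\log\norm{A_0^N}\to\log\rho$). Now for a cocycle $(\beta, D)$ with $D$ close to $A_0$ and $\beta$ close to $\beta_0$, the $N$-step transfer matrix $D^{(N)}(x) = D(x+(N-1)\beta)\cdots D(x)$ is, by continuity of matrix multiplication and uniform closeness $\norm{D(y) - A_0} < \varepsilon$ for all $y$, uniformly close to $A_0^N$: indeed $\norm{D^{(N)}(x) - A_0^N} \le C(N, \norm{A_0})\,\varepsilon$ uniformly in $x$ and in $\beta$. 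Hence $\log\norm{D^{(N)}(x)} \ge \log\norm{A_0^N} - 1 > N(\mu + \sigma/2)$ for all $x$, once $\varepsilon$ is small enough. Then by the infimum formula (\ref{eq_ledef1}) applied along the subsequence of multiples of $N$ — more precisely $L(\beta, D) = \inf_m \frac1m \int \log\norm{D^{(m)}}$, and using subadditivity to compare $D^{(kN)}$ with a product of $k$ blocks $D^{(N)}(\cdot)$ — one gets $L(\beta, D) \ge \frac1N \int_\mathbb{T} \log\norm{D^{(N)}(x)}\,\ud x \ge \mu$. Actually the cleanest route is: $\frac1{kN}\log\norm{D^{(kN)}(x)} \ge \frac1{kN}\sum_{j=0}^{k-1}\log\norm{D^{(N)}(x + jN\beta)} - o(1)$ is false in general (norm is submultiplicative the wrong way), so instead I use that $D^{(kN)}(x)$ \emph{is} literally the product $D^{(N)}(x+(k-1)N\beta)\cdots D^{(N)}(x)$, and then invoke the standard fact that a product of matrices each within $\varepsilon$ of an invertible $M$ with $\norm{M^{-1}}$ controlled has norm at least $(\norm{M^{-1}}^{-1} - \varepsilon)^k$; but $A_0^N$ need not be invertible. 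To sidestep invertibility I instead pick a unit vector $w$ with $\norm{A_0^N w} = \norm{A_0^N} =: e^{N\alpha}$, $\alpha > \mu$, and track the vector $D^{(kN)}(x) w$: at each block it is multiplied by something $\varepsilon$-close to $A_0^N$, and one shows inductively that its direction stays near the $A_0^N$-maximizing cone and its norm grows like $e^{(\alpha - C\varepsilon)}$ per block, giving $\frac1{kN}\log\norm{D^{(kN)}(x) w} \ge \alpha - C\varepsilon - o_k(1) \ge \mu$ for $\varepsilon$ small. Taking $k \to \infty$ and integrating (or using (\ref{eq_ledef2})) yields $L(\beta, D) \ge \mu$.

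I expect the main obstacle to be handling the non-invertibility of $A_0^N$ cleanly in the lower bound: the naive submultiplicative estimates go the wrong way, and the "cone-tracking" argument, while standard for hyperbolic or for invertible model matrices, needs a small amount of care when $A_0$ is, say, nilpotent-plus-scalar or has a nontrivial kernel. The honest fix is to observe that it suffices to bound $\norm{D^{(kN)}(x)}$ from below, and $\norm{D^{(kN)}(x)} \ge \norm{D^{(kN)}(x) w}/\norm{w}$ for the single fixed maximizing vector $w$ of $A_0^N$; since applying a matrix $\varepsilon$-close to $A_0^N$ to any vector $u$ within angle $\theta$ of $w$ produces a vector of norm $\ge (e^{N\alpha}\cos\theta - \varepsilon\norm{u})$ whose angle to $w$ is controlled by $\theta$, $\varepsilon$, and the spectral gap data of $A_0^N$ at $w$, a standard fixed-point/invariant-cone argument closes the induction provided $\varepsilon$ is chosen after $N$. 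Alternatively — and this may be the shortest write-up — one can replace $A_0$ by an invertible matrix $\tilde A_0$ with $\norm{\tilde A_0 - A_0}$ arbitrarily small and $\rho(\tilde A_0)$ arbitrarily close to $\rho(A_0)$ (perturb the zero eigenvalues slightly), apply the classical continuity statement for invertible constant cocycles (a special case of Theorem \ref{thm_bj}, or elementary directly), and then use upper semicontinuity plus $L(\beta, \tilde A_0) \ge \mu$ together with the triangle-type estimate for $L$ under small perturbations — but this last step again requires the lower-bound estimate, so there is no genuine shortcut; the cone argument is the core. Once $\mu \uparrow \log\rho(A_0)$ we conclude $\liminf_{(\beta+r, D)\to(\beta_0, A_0)} L(\beta+r, D) \ge \log\rho(A_0) = L(\beta_0, A_0)$, which together with Theorem \ref{thm_uscle} gives joint continuity at $(\beta_0, A_0)$, completing the proof of Lemma \ref{lem_contleconst} and hence, via the discussion preceding it, the proof of Theorem \ref{thm_conti_sing}.
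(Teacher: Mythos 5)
Your reduction is the right one (upper semicontinuity handles $\limsup$, so only the lower bound $\liminf L \geq \log\rho(A_0)$ is at stake, and $\rho(A_0)=0$ is free), but the core cone-tracking induction has a genuine gap, and it sits exactly in the case you wave at as "nilpotent-plus-scalar". You anchor the induction at a \emph{top singular vector} $w$ of $A_0^N$ and claim the orbit's angle to $w$ stays controlled while the norm gains a factor $\approx \norm{A_0^N}$ per block. This fails whenever $A_0^N$ does not approximately fix its own maximizing direction, i.e.\ precisely when $\abs{\lambda_1}=\abs{\lambda_2}\neq 0$ with a nontrivial Jordan block. Take $A_0=\left(\begin{smallmatrix}1&1\\0&1\end{smallmatrix}\right)$: then $\norm{A_0^N}\sim N$, the top singular vector of $A_0^N$ is nearly orthogonal to the unique eigendirection $(1,0)^T$, and $A_0^N$ maps it essentially onto $(1,0)^T$ --- so after one block the orbit has left every small cone around $w$, and the claimed per-block gain is impossible anyway since $\norm{A_0^{2N}}\sim 2N\neq\norm{A_0^N}^2$. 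There is no invariant cone with expansion rate $\norm{A_0^N}^{1/N}$ here; the only robust invariant cone sits around the eigendirection and expands at rate $\rho(A_0)=1$, which is a different object from the one your induction uses. An invariant-cone argument anchored at the \emph{dominant eigendirection} does work, but only under a spectral gap $\abs{\lambda_1}>\abs{\lambda_2}$; in the equal-modulus case (including $\mathrm{diag}(1,e^{i\phi})$, where no single attracting direction exists) you need a different mechanism.

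The fix is short, and it is essentially what the paper does by splitting into three cases. For $\abs{\lambda_1}\neq\abs{\lambda_2}$ there is a dominated splitting and the (robust, perturbation-stable) cone argument around the top eigendirection applies; the paper simply cites Ruelle \cite{FFF}, noting his proof gives joint continuity in $(\beta,D)$. For $\lambda_1=\lambda_2=0$ one has $L=-\infty$ and upper semicontinuity suffices, as you say. For $\abs{\lambda_1}=\abs{\lambda_2}\neq0$ the clean lower bound is not a cone at all but the determinant: $\norm{M}\geq\abs{\det M}^{1/2}$ gives $L(\beta,D)\geq\frac12\int_{\mathbb{T}}\log\abs{\det D(x)}\,\ud x$, and since $\det A_0\neq 0$ this integral is continuous at $D=A_0$ in $\mathcal{C}^0$ with value $\frac12\log\abs{\det A_0}=\log\rho(A_0)$, which is exactly the needed lower bound. (The paper phrases this as: renormalize to $D/\sqrt{\abs{\det D}}$, where the constant limit has zero LE and hence the LE is continuous there by the remark to Theorem \ref{thm_uscle}, then add back the determinant integral, continuous by Remark (ii).) Your closing "alternative" via an invertible perturbation $\tilde A_0$ is, as you concede, circular. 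So: right skeleton, but the induction you propose does not close in the equal-modulus non-normal case and must be replaced there by the determinant bound or an equivalent.
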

\begin{remark}
Note that similarly to the continuity of the LE over the uniformly hyperbolic cocycles, the frequency does not have to be irrational.
\end{remark}
\begin{proof}
Let $(\beta, A)$ be a constant cocycle. We distinguish three cases depending on the relative magnitude of the eigenvalues of $\lambda_1, \lambda_2$ (counting multiplicities) of $A$.

The case $\abs{\lambda_1} \neq \abs{\lambda_2}$, was examined (in greater generality) by Ruelle in \cite{FFF}. We mention that Ruelle's formulation only states continuity (in fact even real-analyticity) in the matrix-valued function, however it is clear from his proof that the same strategy shows joint continuity in $(\beta, B)$.

If $\lambda_1 = \lambda_2 = 0$, $L(\beta, A) = -\infty$ whence continuity follows trivially from upper-semicontinuity, Theorem \ref{thm_uscle}. 

Finally, if $\abs{\lambda_1} = \abs{\lambda_2} \neq 0$, $\det A \neq 0$, the remark to Theorem \ref{thm_uscle} immediately implies continuity of the LE for the ``renormalized'' cocycles $(\beta, B/\sqrt{\abs{\det B}})$ at $(\beta, A/\sqrt{\abs{\det(A)}})$. This, in turn implies the claim by Remark \ref{rem_counter} (ii).
\end{proof}
\end{proof}

\section{Extended Harper's model} \label{sec_ehm}

To simplify notation, throughout the paper we shall suppress some of the dependencies of $H_{\lambda,\theta;\beta}$ if the context permits. Following we summarize some statements necessary to apply above results for general Jacobi cocycles to extended Harper's model.

First we mention:
\begin{obs} \label{obs_cfun}
Letting $z=x+i \epsilon$, $c(z)$ has at most two zeros on the strip $0 \leq x < 1$; the zeros occur if at least one of the following conditions
is satisfied: 
\begin{eqnarray}
\lambda_{1} \mathrm{e}^{-2 \pi \epsilon} & = & \lambda_{3} \mathrm{e}^{2 \pi \epsilon} ~\mbox{,}\\
\lambda_{1} \mathrm{e}^{-2 \pi \epsilon} + \lambda_{3} \mathrm{e}^{2 \pi \epsilon} & = & \pm \lambda_{2} ~\mbox{.}
\end{eqnarray}
Necessary conditions for real roots are $\lambda_{1} = \lambda_{3}$ or $\lambda_{1} + \lambda_{3} = \lambda_{2}$.
Moreover, for $\lambda_{1} = \lambda_{3}$, $c(z)$ has real roots if and only if $2\lambda_{3} \geq \lambda_{2}$.  If  $\lambda_1 \neq \lambda_3$ real roots  occur if and only if $\lambda_1 + \lambda_3 = \lambda_2$.
\end{obs}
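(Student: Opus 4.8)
The plan is to convert the trigonometric equation $c_\lambda(z)=0$ into a polynomial one. With $z=x+i\epsilon$ and the substitution $w=\mathrm{e}^{2\pi i(z+\beta/2)}$, the map $x\mapsto w$ is a bijection of $[0,1)$ onto the circle $\{\abs{w}=\mathrm{e}^{-2\pi\epsilon}\}$, and, multiplying through by $w\neq 0$, the equation $c_\lambda(z)=0$ becomes $q(w):=\lambda_1 w^2+\lambda_2 w+\lambda_3=0$. Since $q$ is a quadratic it has at most two roots, each lying over exactly one $x\in[0,1)$ once $\epsilon$ is fixed, which gives the bound of two zeros; the degenerate cases ($\lambda_1=0$, where $q$ is linear, or $\lambda_1=\lambda_3=0$, where $c_\lambda\equiv\lambda_2\neq 0$) are handled separately and only decrease the count.

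Next I would reformulate ``$c_\lambda$ vanishes somewhere at height $\epsilon$'' as ``$q$ has a root of modulus exactly $\mathrm{e}^{-2\pi\epsilon}$'' and split on the discriminant $\Delta:=\lambda_2^2-4\lambda_1\lambda_3$. If $\Delta\leq 0$ (so $\lambda_1\lambda_3>0$ when $\Delta<0$), the roots of $q$ are complex conjugates, or a single double real root when $\Delta=0$; either way they share the modulus $\sqrt{\lambda_3/\lambda_1}$, since by Vieta the product of the roots is $\lambda_3/\lambda_1$. A zero of $c_\lambda$ then exists iff $\sqrt{\lambda_3/\lambda_1}=\mathrm{e}^{-2\pi\epsilon}$, which rearranges to $\lambda_1\mathrm{e}^{-2\pi\epsilon}=\lambda_3\mathrm{e}^{2\pi\epsilon}$. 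If $\Delta>0$ the roots are real, and a real number of modulus $\mathrm{e}^{-2\pi\epsilon}$ must equal $\pm\mathrm{e}^{-2\pi\epsilon}$; substituting $w=\pm\mathrm{e}^{-2\pi\epsilon}$ into $q$ and multiplying by $\mathrm{e}^{2\pi\epsilon}$ produces exactly $\lambda_1\mathrm{e}^{-2\pi\epsilon}+\lambda_3\mathrm{e}^{2\pi\epsilon}=\mp\lambda_2$. Reading each of the two displayed conditions in its own discriminant regime, this identifies precisely the $\epsilon$ for which a zero occurs.

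For the real-root statements I set $\epsilon=0$. The first condition becomes $\lambda_1=\lambda_3$, the second $\lambda_1+\lambda_3=\pm\lambda_2$; the normalization $\lambda_2\geq 0$, $\lambda_1+\lambda_3\geq 0$ (with not all $\lambda_i$ zero) rules out the minus sign, leaving the necessary condition $\lambda_1=\lambda_3$ or $\lambda_1+\lambda_3=\lambda_2$. If $\lambda_1=\lambda_3$ then $\lambda_3\geq 0$ and $c_\lambda(x)=2\lambda_3\cos(2\pi(x+\beta/2))+\lambda_2$, which plainly has a real zero iff $\lambda_2\leq 2\lambda_3$, i.e. $2\lambda_3\geq\lambda_2$. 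If $\lambda_1\neq\lambda_3$ the first condition is impossible, so a real zero forces $\lambda_1+\lambda_3=\lambda_2$; conversely $\lambda_1+\lambda_3=\lambda_2$ is exactly $q(-1)=0$, and $w=-1$ corresponds to the real solution $x+\beta/2\equiv\frac{1}{2}\pmod 1$.

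I expect the only real difficulty to be bookkeeping rather than anything conceptual: keeping track of the genuinely degenerate parameter values ($\lambda_1=0$, and the possibility that $\lambda_1$ or $\lambda_3$ is negative while $\lambda_1+\lambda_3\geq 0$), and being careful that condition one is the relevant criterion exactly in the regime $\Delta\leq 0$ while condition two governs $\Delta>0$, so that the disjunction of the two displayed conditions genuinely describes the occurrence of zeros. Everything else is a one-variable complex-analysis computation organized around the auxiliary quadratic $q$ and Vieta's relations.
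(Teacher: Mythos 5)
Your argument is correct and is essentially the computation the paper leaves implicit (the Observation is stated there without proof): the substitution $w=\mathrm{e}^{2\pi i(z+\beta/2)}$ turning $c_\lambda(z)=0$ into the quadratic $\lambda_1w^2+\lambda_2w+\lambda_3=0$ on the circle $\abs{w}=\mathrm{e}^{-2\pi\epsilon}$, followed by Vieta and a discriminant split, is the intended route, and your iff characterizations of the real roots are right. The only point requiring the care you already flag is that the bare disjunction of the two displayed conditions is merely \emph{necessary} for a zero at a given $\epsilon$ (for instance $\lambda_1=\lambda_3$ with $\lambda_2>2\lambda_3$ satisfies the first condition at $\epsilon=0$ yet $c$ has no real zero there), so the equivalence holds only when each condition is read in its own discriminant regime, exactly as you do.
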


Application of Jensen's formula yields  (see \cite{E} for $\epsilon =0$),
\begin{equation} \label{eq_integral}
I_{\epsilon}(\lambda) = \begin{cases} \log \lambda_{3} + 2 \pi \epsilon & \mbox{if} ~\lambda_{3} \mathrm{e}^{2 \pi \epsilon} \geq  \lambda_{1} \mathrm{e}^{- 2 \pi \epsilon} \geq 0 \\ & ~\mbox{and} ~\lambda_{1} \mathrm{e}^{-2 \pi \epsilon} + \lambda_{3} \mathrm{e}^{2 \pi \epsilon} \geq \lambda_{2} \geq 0 ~\mbox{,} \\
\log{\lambda_{1}} - 2 \pi \epsilon & \mbox{if} ~\lambda_{1}  \mathrm{e}^{-2 \pi \epsilon} \geq \lambda_{3} \mathrm{e}^{2 \pi \epsilon} \geq 0  \\& ~\mbox{and} ~\lambda_{1} \mathrm{e}^{ - 2 \pi \epsilon} + \lambda_{3} \mathrm{e}^{2 \pi \epsilon} \geq \lambda_{2} \geq 0 ~\mbox{,} \\
\log \left \vert \dfrac{2\lambda_{1}\lambda_{3}}{-\lambda_{2}+\sqrt{\lambda_{2}^2 - 4\lambda_{1}\lambda_{3}}} \right \vert & \mbox{if} ~\lambda_{1} \mathrm{e}^{- 2 \pi \epsilon} +\lambda_{3} \mathrm{e}^{2 \pi \epsilon} \leq \lambda_{2} ~\mbox{and} ~ \lambda_{1},\lambda_{3} \neq 0 ~\mbox{,}\\
\log \lambda_{2} & \mbox{if} ~\lambda_{1} \mathrm{e}^{-2 \pi \epsilon} + \lambda_{3} \mathrm{e}^{2 \pi \epsilon} \leq \lambda_{2} ~\mbox{,} ~ \lambda_{1} ~\mbox{or}  ~\lambda_{3} = 0 ~\mbox{,} \end{cases}
\end{equation}
where we set $I(c_\lambda) =:I(\lambda)$ in (\ref{eq_Ifun}).

We note that (\ref{eq_integral}) explicitly shows that $I_{\epsilon}(\lambda)$ is jointly continuous in $\epsilon$ and $\lambda$.

Application of Theorem \ref{thm_conti_sing} to the model under consideration results in 
\begin{theorem} \label{thm_lana}
For fixed irrational $\beta$, the Lyapunov exponents $L(\beta+r;B_{\epsilon}^{E},\lambda)$ and $L(\beta+r;A_{\epsilon}^{E},\lambda)$ are jointly
continuous in $\epsilon$, $ \lambda$, $E$ and $r$. 
\end{theorem}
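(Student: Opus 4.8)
The plan is to deduce Theorem \ref{thm_lana} directly from Theorem \ref{thm_conti_sing} by checking that the parametrized family of cocycles appearing in extended Harper's model defines a continuous map into $\mathcal{C}^\omega(\mathbb{T},M_2(\mathbb{C}))$ with its inductive limit topology, and then composing. First I would observe that for fixed $\delta \in (0,\delta_0)$ (with $\delta_0$ the band of analyticity of $c_\lambda$ and $v$), the map
\begin{equation}
(E,\lambda,\epsilon) \mapsto A_\epsilon^E(\cdot) = \begin{pmatrix} E - v(\cdot + i\epsilon) & -\overline{c}_\lambda(\cdot - \beta + i\epsilon) \\ c_\lambda(\cdot + i\epsilon) & 0 \end{pmatrix}
\end{equation}
is continuous from $\mathbb{C} \times \mathbb{R}^3 \times (-\delta_0+\delta, \delta_0-\delta)$ into the Banach space $\mathcal{C}_\delta^\omega(\mathbb{T},M_2(\mathbb{C}))$ with norm $\Vert \cdot \Vert_\delta$. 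This is elementary: $v$ and $c_\lambda$ are entire (in the case of $c_\lambda$, the dependence on $\lambda$ is affine and the dependence on the spatial variable is via $\mathrm{e}^{\pm 2\pi i z}$), so $\Vert v(\cdot + i\epsilon) - v(\cdot + i\epsilon') \Vert_\delta \to 0$ as $\epsilon \to \epsilon'$ by uniform continuity of $v$ on the compact band $|\im z| \leq \delta_0 - \delta/2$, and similarly for $c_\lambda$; the affine dependence on $E$ and $\lambda$ is trivially Lipschitz in $\Vert \cdot \Vert_\delta$. Since $\mathcal{C}_\delta^\omega(\mathbb{T},M_2(\mathbb{C})) \hookrightarrow \mathcal{C}^\omega(\mathbb{T},M_2(\mathbb{C}))$ is continuous by definition of the inductive limit topology, composing with $L(\beta + r, \cdot): \mathbb{T} \times \mathcal{C}^\omega(\mathbb{T},M_2(\mathbb{C})) \to \mathbb{R}$, which is jointly continuous by Theorem \ref{thm_conti_sing}, gives joint continuity of $(r,\epsilon,E,\lambda) \mapsto L(\beta + r, A_\epsilon^E)$.

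Next I would transfer the statement from $A_\epsilon^E$ to $B_\epsilon^E = A_\epsilon^E / c_\lambda(\cdot + i\epsilon)$. By the relation (\ref{eq_relAB}),
\begin{equation}
L(\beta + r, B_\epsilon^E, \lambda) = L(\beta + r, A_\epsilon^E, \lambda) - I_\epsilon(\lambda),
\end{equation}
so it suffices to note that $I_\epsilon(\lambda)$ is jointly continuous in $(\epsilon,\lambda)$, which the paper has already pointed out follows from the explicit formula (\ref{eq_integral}); alternatively, continuity of $I_\epsilon(\lambda)$ follows from Remark \ref{rem_contlemma}(i) (continuity of $I$ on $\mathcal{C}^\omega(\mathbb{T},\mathbb{C})$ in the inductive limit topology) together with the continuity of $(\epsilon,\lambda) \mapsto c_\lambda(\cdot + i\epsilon) \in \mathcal{C}_\delta^\omega(\mathbb{T},\mathbb{C})$ established as above. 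Here one should be mildly careful that $I_\epsilon(\lambda) = -\infty$ never occurs since $c_\lambda \not\equiv 0$ by the standing assumption that at least one of $\lambda_1,\lambda_2,\lambda_3$ is positive, so $L(\beta+r, B_\epsilon^E,\lambda)$ is genuinely real-valued and the subtraction is legitimate.

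The only genuine subtlety — and the step I expect to require the most care rather than real difficulty — is the uniformity of the band of analyticity: Theorem \ref{thm_conti_sing} concerns convergence in the inductive limit topology, which requires the approximating sequence to eventually live in a \emph{common} band $\mathbb{T}_\delta$. Since $v$ and $c_\lambda$ are in fact entire functions of the spatial variable, $A_\epsilon^E$ extends holomorphically to all of $\mathbb{C}$ for every fixed $(E,\lambda,\epsilon)$, so for any fixed $\delta > 0$ a whole neighborhood of $(E,\lambda,\epsilon)$ maps into $\mathcal{C}_\delta^\omega(\mathbb{T},M_2(\mathbb{C}))$ and the common-band requirement is automatic. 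Thus the argument goes through without the Diophantine restriction of \cite{CC}. I would close by noting $\beta + r$ ranges over a neighborhood of the irrational $\beta$, so Theorem \ref{thm_conti_sing}'s hypothesis that the base frequency be irrational is used precisely at $\beta$, and joint continuity in $r$ near $0$ is exactly the content of that theorem; no separate rational-approximation argument is needed here since it is already built into Theorem \ref{thm_conti_sing}.
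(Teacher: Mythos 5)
Your proposal is correct and follows essentially the same route as the paper, which obtains Theorem \ref{thm_lana} by applying Theorem \ref{thm_conti_sing} to the (continuously parametrized, entire in the spatial variable) family $A_\epsilon^E$ and then transferring to $B_\epsilon^E$ via (\ref{eq_relAB}) together with the joint continuity of $I_\epsilon(\lambda)$ visible from (\ref{eq_integral}). You merely spell out the elementary continuity of the map $(E,\lambda,\epsilon)\mapsto A_\epsilon^E$ into $\mathcal{C}_\delta^\omega(\mathbb{T},M_2(\mathbb{C}))$, which the paper leaves implicit.
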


We emphasize that Theorem \ref{thm_quantacc} and \ref{thm_unifhyp} require analytic cocycles with determinant bounded away from zero. As discussed in Observation \ref{obs_cfun},
for any $\lambda$ with $\lambda_{1} \neq \lambda_{3}$ or $\lambda_{2} \neq \lambda_{1}+\lambda_{3}$, $c(z)$ is non-zero on the strips
$\{\abs{\im{z}} < \epsilon_{1}\}$ and  $\{\abs{\im{z}} > \epsilon_{2}\}$, for some $0 < \epsilon_{1} < \epsilon_{2}$. 

Thus the hypotheses of Theorem \ref{thm_quantacc} and \ref{thm_unifhyp} apply to $(\beta, A_{\epsilon}^{E})$ for small and large $\epsilon$ (in magnitude) when excluding 
$\lambda_{1} = \lambda_{3}$ and $\lambda_{2} = \lambda_{1}+\lambda_{3}$. Let
\begin{equation}
\mathcal{R}:=\left\{\lambda:  \lambda_{1}\neq\lambda_{3}, ~\lambda_{2}\neq\lambda_{1}+\lambda_{3} \right\} ~\mbox{.}
\end{equation}
Notice that upon exclusion of the two planes described by $\mathcal{R}^{c}$, we do not loose any information; the derived statements about $L(\beta, B^{E})$ will extend to all values of $\lambda$ using Theorem \ref{thm_lana} as well as continuity of the spectrum in the Hausdorff metric. 

\section{Asymptotic analysis} \label{sec_asympt}

In this section we aim to obtain an expression for $L(\beta, A_{\epsilon}^{E})$ valid for large $\abs{\epsilon}$ and any choice of $\lambda$. We mention that this strategy can be employed generally to Jacobi cocyles consisting of trigonometric polynomials.

The basic idea is to reduce the non-trivial problem of computing the LE of a given non-constant cocyle to an ``almost constant'' cocycle by taking $\abs{\epsilon} \to \infty$. By Theorem \ref{thm_conti_sing} the LE is stable under small analytic perturbations, whence the task boils down to consideration of a {\em{constant}} cocyle whose LE can be computed trivially.  Finally, quantization of acceleration, convexity and the Theorems \ref{thm_quantacc} and \ref{coro_regspec} allow us extrapolate to $\epsilon=0$. 

Our strategy to determine the Lyapunov exponent was motivated by considerations of Avila's in \cite{B}. There, complexification is employed to prove the 
Aubry-Andr\`e formula (see Appendix A in \cite{B}). 

Since the present model specializes to the almost Mathieu equation when 
$\lambda_{1}=\lambda_{3}=0$,
as a warmup, we present a proof of the Aubrey-Andr\`e formula alternative to Avila's in that it replaces a geometric argument with one along the line of what is to come.

Setting $\lambda_{1}=\lambda_{3}=0$, we obtain the transfer matrix of the almost Mathieu equation
\begin{equation} \label{eq_app4_1}
B^{E}(x) = \begin{pmatrix} E - 2 \mu \cos(2 \pi x)  &  -1 \\  1 & 0 \end{pmatrix} ~\mbox{.}
\end{equation}
To simplify notation we put $\mu = \lambda_{2}^{-1}$. Here $A^E=B^E.$ 

Complexifying (\ref{eq_app4_1}), as $\epsilon \to +\infty$
\begin{equation} \label{d2} 
 B^{E}(x) = \mu \mathrm{e}^{-2\pi i x} \mathrm{e}^{2 \pi \epsilon} \begin{pmatrix} -1 + o(1) &  o(1) \\  o(1) & 0 \end{pmatrix}
              =: \mu \mathrm{e}^{-2\pi i x} \mathrm{e}^{2 \pi
                \epsilon} D_{\epsilon}(x)\end{equation}
uniformly in $x\in\mathbb{T}$. Here, the the $o(1)$-terms are perturbations by trigonometric polynomials with only non-negative harmonics. We mention that for $SL(2,\mathbb{R})$ cocycles the reflection principle implies that $L(\beta, B_{\epsilon}^{E})$ is an even function in $\epsilon$.

Applying Theorem \ref{thm_conti_sing} to (\ref{d2}) we conclude,
\begin{equation} \label{eq_app4_3}
L(\beta, B_{\epsilon}^{E}) = \log{\abs{\mu}} + 2 \pi \abs{\epsilon}, ~ \abs{\epsilon} > \epsilon^{\prime} ~\mbox{,}
\end{equation}
which gives the desired asymptotic formula for the complexified Lyapunov exponent of the almost Mathieu equation.

\begin{remark}
\begin{itemize}
Making use of Theorem \ref{thm_quantacc} and \ref{coro_regspec}, the asymptotic relation (\ref{eq_app4_3}) can easily be extrapolated to $\epsilon = 0$ producing
the complex Aubry-Andr\'e formula
\begin{equation} \label{eq_AubryAndre}
L(\beta, B_{\epsilon}^{E}) \geq \max\{0, \log{\abs{\mu}} + 2 \pi \abs{\epsilon}\} ~\mbox{,}
\end{equation}
with equality holding for $E \in \Sigma$. For details we refer to \cite{B}, Appendix A.
\end{itemize}
\end{remark}
 Following we shall focus on the case when at least one of 
$\lambda_{1}$, $\lambda_{3}$ is nonzero.

\begin{prop} \label{prop_asy}
For at least one of $\lambda_{1}$, $\lambda_{3}$ nonzero,
\begin{equation}
L(\beta, A_{\epsilon}^{E}) = \log\left\vert \dfrac{1+\sqrt{1-4 \lambda_{1} \lambda_{3}} }{2}\right\vert + 2 \pi \abs{\epsilon} ~\mbox{,} ~\abs{\epsilon} > \epsilon_{0} ~\mbox{,}
\end{equation}
some $\epsilon_{0} > 0$.
\end{prop}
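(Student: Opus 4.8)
The plan is to mimic the warm-up computation for the almost Mathieu case, this time keeping track of all three couplings. First I would complexify the analytic cocycle $(\beta, A^E)$ and examine the behavior of $A^E_\epsilon(x) = A^E(x+i\epsilon)$ as $\epsilon \to +\infty$. Writing $z = \mathrm{e}^{2\pi i(x+i\epsilon)}$, note that $c_\lambda(x+i\epsilon) = \lambda_3 \mathrm{e}^{-2\pi i(x+i\epsilon)}\mathrm{e}^{-\pi i \beta} + \lambda_2 + \lambda_1 \mathrm{e}^{2\pi i(x+i\epsilon)}\mathrm{e}^{\pi i \beta}$, and likewise $v(x+i\epsilon) = \mathrm{e}^{-2\pi i(x+i\epsilon)} + \mathrm{e}^{2\pi i(x+i\epsilon)}$. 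As $\epsilon \to +\infty$ the dominant term inside each entry of $A^E_\epsilon(x)$ is the one carrying the largest power of $\mathrm{e}^{2\pi\epsilon}$, namely the $\mathrm{e}^{-2\pi i(x+i\epsilon)}$-harmonic, which blows up like $\mathrm{e}^{2\pi\epsilon}$. Factoring out $\mathrm{e}^{-2\pi i x}\mathrm{e}^{2\pi\epsilon}$ I would obtain $A^E_\epsilon(x) = \mathrm{e}^{-2\pi i x}\mathrm{e}^{2\pi\epsilon}\bigl(\widehat A + o(1)\bigr)$ uniformly in $x \in \mathbb{T}$, where the $o(1)$ perturbation consists of trigonometric polynomials with only non-negative harmonics (these arise from the subleading $\lambda_2$, $\lambda_1$, and $v$-terms), and $\widehat A$ is the \emph{constant} matrix collecting the leading coefficients; concretely $\widehat A = \begin{pmatrix} -1 & -\lambda_1 \mathrm{e}^{\pi i \beta} \\ \lambda_3 \mathrm{e}^{-\pi i \beta} & 0 \end{pmatrix}$ (up to harmless phase bookkeeping — the precise constants are a routine check).

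Next, since $L$ is invariant under multiplying a cocycle by a scalar function $g(x)$ up to an additive $\int_\mathbb{T}\log|g|$ (which vanishes here because $\int_\mathbb{T}\log|\mathrm{e}^{-2\pi i x}\mathrm{e}^{2\pi\epsilon}|\,\ud x = 2\pi\epsilon$, the term I pull out explicitly), I would invoke Theorem \ref{thm_conti_sing}: the cocycle $(\beta, \widehat A + o(1))$ converges in $\mathcal{C}^\omega(\mathbb{T}, M_2(\mathbb{C}))$ (in fact in a fixed band) to the constant cocycle $(\beta, \widehat A)$ as $\epsilon \to +\infty$, so $L(\beta, \widehat A + o(1)) \to L(\beta, \widehat A)$. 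The Lyapunov exponent of a constant cocycle is $\log\rho(\widehat A)$, the log of its spectral radius. The eigenvalues of $\widehat A$ solve $t^2 + t + \lambda_1\lambda_3 = 0$ (the trace is $-1$ and the determinant is $\lambda_1\lambda_3$, the phases $\mathrm{e}^{\pm\pi i\beta}$ cancelling in the product), so they are $t_\pm = \tfrac{-1 \pm \sqrt{1 - 4\lambda_1\lambda_3}}{2}$, and the spectral radius is $|t_+| = \left|\tfrac{1 + \sqrt{1-4\lambda_1\lambda_3}}{2}\right|$ since at least one of $\lambda_1,\lambda_3$ is nonzero (so $\widehat A \neq 0$ and this root dominates — one checks this for both $\lambda_1\lambda_3 \le \tfrac14$ real and $\lambda_1\lambda_3 > \tfrac14$). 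Therefore $L(\beta, A^E_\epsilon) = 2\pi\epsilon + \log\left|\tfrac{1+\sqrt{1-4\lambda_1\lambda_3}}{2}\right| + o(1)$ as $\epsilon \to +\infty$.

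To upgrade the $o(1)$ to an exact identity for all large $\epsilon$, I would use that $L(\beta, A^E_\epsilon)$ is convex and piecewise linear in $\epsilon$ with slopes in $2\pi\mathbb{Z}$ on any strip where $\det A^E$ is bounded away from zero (Theorem \ref{thm_quantacc} together with the relation \eqref{eq_relAB} and Lemma \ref{lem_quantaccbaby}); by Observation \ref{obs_cfun}, for $|\epsilon|$ sufficiently large $c_\lambda(x+i\epsilon)$ has no zeros, so this applies on $\{|\epsilon| > \epsilon_2\}$. A convex piecewise-linear function with integer-multiple-of-$2\pi$ slopes that is asymptotic to the line $2\pi\epsilon + \mathrm{const}$ must \emph{equal} that line for all sufficiently large $\epsilon$ (the slope, being a nondecreasing integer multiple of $2\pi$ bounded by its asymptotic value $2\pi$, is eventually exactly $2\pi$, and then the intercept is pinned by the asymptotics). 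This gives the claimed formula for $\epsilon > \epsilon_0$; the case $\epsilon \to -\infty$ is symmetric (swap the roles of $\lambda_1$ and $\lambda_3$, equivalently let $\mathrm{e}^{2\pi i(x+i\epsilon)}$ dominate), yielding the same constant because the formula is symmetric in $\lambda_1 \leftrightarrow \lambda_3$, hence the absolute value $|\epsilon|$ in the statement.

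\textbf{Main obstacle.} The genuinely delicate point is not the limit itself but identifying the leading constant matrix $\widehat A$ correctly and verifying that its spectral radius is exactly $\left|\tfrac{1+\sqrt{1-4\lambda_1\lambda_3}}{2}\right|$ in \emph{all} parameter regimes — in particular when $\lambda_1\lambda_3 > 1/4$ so the eigenvalues are complex conjugate of equal modulus $\sqrt{\lambda_1\lambda_3}$, one must check this equals the stated expression (it does, since then $\sqrt{1-4\lambda_1\lambda_3} = i\sqrt{4\lambda_1\lambda_3-1}$ and $\left|\tfrac{1+i\sqrt{4\lambda_1\lambda_3-1}}{2}\right| = \sqrt{\lambda_1\lambda_3}$). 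A second subtlety is the boundary case where one of $\lambda_1,\lambda_3$ vanishes: then $\widehat A$ is triangular with eigenvalues $-1$ and $0$, spectral radius $1 = \left|\tfrac{1+\sqrt{1-0}}{2}\right|$, consistent; but one must ensure the factored-out leading power is still $\mathrm{e}^{2\pi\epsilon}$ and not something smaller, which holds precisely because the $E - v(x)$ entry always contributes an $\mathrm{e}^{-2\pi i(x+i\epsilon)}$ term. Everything else — the application of Theorems \ref{thm_conti_sing} and \ref{thm_quantacc}, and the convexity/quantization extrapolation — is then routine given the machinery already set up in Section \ref{sec_reglyap}.
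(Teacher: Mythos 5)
Your proposal is correct and follows essentially the same route as the paper: factor out the dominant harmonic $\mathrm{e}^{-2\pi i x}\mathrm{e}^{2\pi\epsilon}$, identify the limiting constant matrix (the paper's $M^{(0)}$, your $\widehat A$), apply continuity of the LE (Theorem \ref{thm_conti_sing}) to reduce to the constant cocycle, compute its spectral radius, and use quantization of acceleration plus convexity to turn the asymptotic into an exact identity for large $\abs{\epsilon}$. The only cosmetic difference is that you absorb the degenerate case $\lambda_1\lambda_3=0$ into the same spectral-radius formula, whereas the paper treats it as a separate case with a rank-one limit matrix; both give $L(\beta,\widehat A)=0$ there.
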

\begin{proof}
\begin{description}
\item[Case 1 - both $\lambda_{1}$ and $\lambda_{3}$ are nonzero]
Then, uniformly for $x \in \mathbb{T}$
\begin{eqnarray} \label{eq_asy1a}
A^{E}(x+i\epsilon) = \mathrm{e}^{2 \pi \epsilon} \mathrm{e}^{-2\pi i x} M_{+,\epsilon}(\lambda_{1}, \lambda_{3}) ~\mbox{,} \nonumber \\
M_{+,\epsilon}(\lambda_{1}, \lambda_{3}):=\begin{pmatrix} -1 + o(1)  &  -\lambda_{1} \mathrm{e}^{\pi i \beta} + o(1) \\ \lambda_{3} \mathrm{e}^{-\pi i \beta} + o(1) & 0   \end{pmatrix} ~\mbox{,} \label{eq_defm}
\end{eqnarray}
as $\epsilon \to +\infty$. In particular, for the LE this implies,
\begin{equation} \label{eq_asy1a1}
L(\beta, A^E) = 2 \pi \epsilon + L(\beta, M_{+,\epsilon}) ~\mbox{.}
\end{equation}

Note that $(\beta, M_{+,\epsilon})$ is an almost constant analytic cocycle,
\begin{equation}
M_{+,\epsilon}(x) = M^{(0)} + M_{+,\epsilon}^{(1)}(x), ~ M^{(0)} := \begin{pmatrix} -1 & -\lambda_1 \mathrm{e}^{i \pi  \beta} \\ \lambda_3 \mathrm{e}^{- i \pi \beta} & 0 \end{pmatrix} ~\mbox{,}
\end{equation}
as $\epsilon \to +\infty$.
Thus, by continuity of the LE, 
\begin{equation}
L(\beta, M_{+,\epsilon}) = L(\beta, M^{(0)}) + o(1) ~\mbox{,}
\end{equation}
as $\epsilon \to +\infty$.

In summary, (\ref{eq_asy1a1}) and quantization of acceleration yield
\begin{equation} \label{eq_asy1b}
L(\beta,A_{\epsilon}^{E}) = 2 \pi \epsilon  + L(\beta, M^{(0)}) ~\mbox{, all $\epsilon > 0$ sufficiently large.}
\end{equation}
Since the matrix $M$ in (\ref{eq_defm}) is independent of $x$, the problem is reduced to computing the Lyapunov exponent
of the {\em{constant}} cocycle $(\beta,M^{(0)})$. 
Solving the eigenvalue problem for $M^{(0)}$, we obtain as claimed
\begin{equation} \label{eq_Mmatrix}
L(\beta,M^{(0)})=\log\left\vert \dfrac{1+\sqrt{1-4 \lambda_{1} \lambda_{3}} }{2}\right\vert ~\mbox{.}
\end{equation}

Using above line of argumentation, the statement about $L(\beta,A_{\epsilon}^{E})$ for $\epsilon$ large and negative readily follows since then
\begin{eqnarray} 
A^{E}(x+i\epsilon) = \mathrm{e}^{-2 \pi \epsilon} \mathrm{e}^{2\pi i x} M_{-,\epsilon} ~\mbox{,} \\
M_{-,\epsilon} = \begin{pmatrix} -1+ o(1)  &  -\lambda_{3} \mathrm{e}^{-\pi i \beta} + o(1) \\ \lambda_{1} \mathrm{e}^{\pi i \beta} + o(1) & 0   \end{pmatrix} ~\mbox{,} 
\end{eqnarray}
and $L(\beta,M^{(0)}) = L(\beta, M_{-}^{(0)})$, where
\begin{equation}
M_{-}^{(0)} = \begin{pmatrix} -1 & -\lambda_{3} \mathrm{e}^{-\pi i \beta} \\ \lambda_{1} \mathrm{e}^{\pi i \beta} & 0   \end{pmatrix} ~\mbox{.}
\end{equation}

\item[Case 2 - one of $\lambda_{1},\lambda_{3} = 0$]
Without loss, we assume $\lambda_{3} = 0$.

Taking $\epsilon  \to \pm \infty$ we obtain uniformly for $x \in \mathbb{T}$,
\begin{eqnarray}
& A^E(x+i\epsilon) = \mathrm{e}^{2 \pi i \epsilon} \mathrm{e}^{- 2 \pi i x} \begin{pmatrix} - 1 + o(1) &  -\lambda_{1} \mathrm{e}^{\pi i \beta} + o(1) \\ o(1) & 0  \end{pmatrix} ~\mbox{,} ~\epsilon \to +\infty ~\mbox{,} \nonumber \\
& A^E(x+i\epsilon) = \mathrm{e}^{- 2 \pi \epsilon} \mathrm{e}^{2 \pi i x} \begin{pmatrix} -1 + o(1) &  o(1) \\ \lambda_{1} \mathrm{e}^{\pi i \beta} + o(1) & 0  \end{pmatrix} ~\mbox{,} ~\epsilon \to -\infty ~\mbox{.}
\end{eqnarray}

The case when $\epsilon \to -\infty$ being similar, we focus on $\epsilon \to +\infty$. To this end write
\begin{equation} \label{eqn_defR1}
A^E(x+i \epsilon) =: \mathrm{e}^{2 \pi \epsilon} \mathrm{e}^{-2 \pi i x} R_{\epsilon}(x) ~\mbox{.}
\end{equation}
Applying Theorem \ref{thm_conti_sing}, we conclude $L(\beta, R_\epsilon) = 0$ for $\epsilon$ sufficiently large and positive. We note that $\epsilon \to +\infty$ in $R_\epsilon$, yields a constant cocycle with identically vanishing determinant. 

Making use of (\ref{eqn_defR1}), by quantization of acceleration one concludes
\begin{equation}
L(\beta, A_\epsilon^E) = 2 \pi \epsilon ~\mbox{, if} ~\epsilon>0  ~\mbox{sufficiently large.}
\end{equation}

\end{description}
\end{proof}
We amend that for any choice of $\lambda$ (i.e. in both of the above cases) the Lyapunov exponent of  $(\beta, A_\epsilon^{E})$ can asymptotically be written in terms of the matrix $M$,
\begin{equation}
L(\beta, A_\epsilon^E) = L(\beta, M) + 2 \pi \abs{\epsilon} ~\mbox{.}
\end{equation}
For later purposes, we note that the matrix $M$ does not depend on $\lambda_{2}$. 

\section{Extrapolation} \label{sec_whatwecansay}
In Proposition \ref{prop_asy} we proved an asymptotic formula for $L(\beta, A_{\epsilon}^{E})$ valid for large $\abs{\epsilon}$. Using convexity of $L(\beta, A_{\epsilon}^{E})$ in $\epsilon$, quantization of acceleration and (\ref{eq_relAB}), we obtain 
the following lower bound for  $L(\beta, B^{E})$,
\begin{equation} \label{eq_lowerbdLE}
L(\beta, B^{E}) \geq \max \{L(\beta, M) - I(\lambda), 0\} ~\mbox{.}
\end{equation} 
Here we also made use of $L(\beta, B^{E}) \geq 0$. We emphasize that in (\ref{eq_lowerbdLE}) we do not require $E$ to be in the spectrum.

For $E \in \Sigma$, Theorem \ref{coro_regspec} will allow us to improve on (\ref{eq_lowerbdLE}) and actually extrapolate $L(\beta, B_{\epsilon}^{E})$ to $\epsilon=0$. 
Returning to Proposition \ref{prop_asy}, the asymptotic expression for $L(\beta, A_{\epsilon}^{E})$ implies that for $\abs{\epsilon} > \epsilon_{0}$ the acceleration 
$\omega(\beta,A^{E};\epsilon)$ is $\pm 1$. 

Let $0<\epsilon^\prime_{1} <\epsilon_{0}$ such that the Theorems \ref{thm_quantacc} and \ref{coro_regspec} apply on $[-\epsilon^\prime_{1},\epsilon^\prime_{1}]$. 
Quantization of acceleration and convexity of $L(\beta, A_{\epsilon}^{E})$  require $\omega(\beta,A_\epsilon^{E})$
to increase on $[-\epsilon^\prime_{1},\epsilon^\prime_{1}]$ with possible values in $\{-1,0,1\}$. 

Moreover, for $E \in \Sigma$ with $L(\beta, B^{E})>0$, by 
Theorem \ref{coro_regspec} the acceleration is forced to jump when passing through $\epsilon=0$. This amounts to three possible situations, a jump 
in $\omega(\beta,A_{\epsilon}^{E})$ from $0 \to 1$, $-1 \to 0$, or $-1 \to 1$. Continuity of $L(\beta, A_{\epsilon}^{E})$ in $\epsilon$, 
rules out the former two cases resulting in $L(\beta, A_{\epsilon}^{E}) =  L(\beta, M) + 2 \pi \abs{\epsilon}$. In particular, (\ref{eq_lowerbdLE}) holds as {\em{equality}} for such $E$.

If, on the other hand, $L(\beta, B_\epsilon^E)=0$, (\ref{eq_lowerbdLE}) yields $L(\beta, A^E) = I(\lambda)$. Again convexity in
$\epsilon$ of $L(\beta, A_\epsilon^E)$ eliminates all possibilities except for
\begin{equation} \label{eq_complexLE_A}
L(\beta, A_\epsilon^E) = \max\left\{I(\lambda), L(\beta,M) + 2 \pi \abs{\epsilon} \right\} ~\mbox{.}
\end{equation} 
Since $L(\beta, M) > I(\lambda)$ whenever $E \in \Sigma$ with $L(\beta, B_\epsilon^E) > 0$, we conclude that the formula in (\ref{eq_complexLE_A}) is in fact valid for all $\lambda$.

We thus obtain the following expression for the Lyapunov exponent of the complexified extended Harper's model, 
\begin{theorem} \label{prop_LE}
For irrational $\beta$,
\begin{eqnarray} 
L(\beta, B_\epsilon^{E}) =  L(\beta, A_\epsilon^E)  - I_\epsilon(\lambda)  ~\mbox{,}  \label{eq_extrap1} \\
L(\beta, A_\epsilon^E) \geq \max\left\{I(\lambda), L(\beta,M) + 2 \pi \abs{\epsilon} \right\} ~\mbox{,} \label{eq_extrap2}
\end{eqnarray}
where equality holds for $E \in \Sigma$. $L(\beta,M)$ and $I_\epsilon(\lambda)$ are given in (\ref{eq_Mmatrix}) and (\ref{eq_integral}), respectively.
\end{theorem}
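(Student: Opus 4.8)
**The plan is to assemble Theorem \ref{prop_LE} from the pieces already in place: the relation (\ref{eq_relAB}) between $L(\beta,B^E_\epsilon)$ and $L(\beta,A^E_\epsilon)$, the asymptotic formula from Proposition \ref{prop_asy}, convexity and quantization of the acceleration, the regularity obstruction of Theorem \ref{coro_regspec}, and the continuity statement of Theorem \ref{thm_conti_sing} / Theorem \ref{thm_lana}.** Equation (\ref{eq_extrap1}) is nothing but (\ref{eq_relAB}), so the whole content is the lower bound (\ref{eq_extrap2}) together with the assertion that it is an equality on $\Sigma$. First I would record that $L(\beta,A^E_\epsilon)\ge 0$ trivially and that $L(\beta,A^E_\epsilon)\ge L(\beta,B^E_\epsilon)+I_\epsilon(\lambda)\ge I_\epsilon(\lambda)$ by (\ref{eq_relAB}) and non-negativity of $L(\beta,B^E_\epsilon)$; at $\epsilon=0$ this gives $L(\beta,A^E)\ge\max\{0,I(\lambda)\}$, but the point is to do better using the global structure in $\epsilon$.

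The core of the argument is to analyze the convex function $\epsilon\mapsto L(\beta,A^E_\epsilon)$. By Proposition \ref{prop_asy} it equals $L(\beta,M)+2\pi|\epsilon|$ for $|\epsilon|>\epsilon_0$, so the acceleration $\omega(\beta,A^E;\epsilon)$ (which exists by convexity, Proposition \ref{prop_convlyap}, and is monotone non-decreasing) equals $-1$ for $\epsilon<-\epsilon_0$ and $+1$ for $\epsilon>\epsilon_0$. Working on the strip $|\text{Im}\,z|\le\delta$ where $\lambda\in\mathcal R$ guarantees $\det A^E$ is bounded away from zero on $\{|\text{Im}\,z|\le\epsilon_1'\}$ for some $0<\epsilon_1'<\epsilon_0$, Theorem \ref{thm_quantacc} forces $\omega(\beta,A^E;\epsilon)\in\mathbb Z$ there, and monotonicity plus the boundary values $\pm1$ confine it to $\{-1,0,1\}$ on $[-\epsilon_1',\epsilon_1']$. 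Now split on whether $L(\beta,B^E)>0$: if $E\in\Sigma$ with $L(\beta,B^E)>0$, Theorem \ref{coro_regspec} says $(\beta,A^E)$ is not regular, i.e. $\omega$ jumps at $\epsilon=0$; the only monotone jumps consistent with $\{-1,0,1\}$ are $0\to1$, $-1\to0$, $-1\to1$, and continuity of $L$ in $\epsilon$ together with the already-established slopes $\pm1$ at $\pm\epsilon_0$ and convexity rule out the asymmetric cases, leaving $\omega\equiv-1$ on $[-\epsilon_0,0)$ and $\omega\equiv+1$ on $(0,\epsilon_0]$, hence $L(\beta,A^E_\epsilon)=L(\beta,M)+2\pi|\epsilon|$ for all $|\epsilon|\le\epsilon_0$ — in particular equality in (\ref{eq_extrap2}) at $\epsilon=0$, and since $L(\beta,M)>I(\lambda)$ in this case the $\max$ is realized by the second term. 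If instead $L(\beta,B^E)=0$, then (\ref{eq_relAB}) gives $L(\beta,A^E)=I(\lambda)$, and convexity together with the asymptotics $L(\beta,M)+2\pi|\epsilon|$ at infinity forces the global shape $L(\beta,A^E_\epsilon)=\max\{I(\lambda),L(\beta,M)+2\pi|\epsilon|\}$, again giving (\ref{eq_extrap2}) with equality. Finally I would remove the restriction $\lambda\in\mathcal R$: by Theorem \ref{thm_lana} the LE is jointly continuous in $\epsilon,\lambda,E$, and the right-hand sides of (\ref{eq_extrap1})–(\ref{eq_extrap2}) are continuous in $\lambda$ (using continuity of $I_\epsilon(\lambda)$ from (\ref{eq_integral}) and of $L(\beta,M)$ from (\ref{eq_Mmatrix})); the spectrum varies continuously in the Hausdorff metric, so both the inequality and the equality-on-$\Sigma$ persist on the two excluded planes $\mathcal R^c$.

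The main obstacle is the case analysis for the jump of $\omega$ at $\epsilon=0$: one must be sure that among the three allowed monotone jumps only the symmetric one $-1\to1$ is compatible with the boundary behavior. The cleanest way is to integrate: $L(\beta,A^E_{\epsilon_0})-L(\beta,A^E_0)=2\pi\int_0^{\epsilon_0}\omega(\beta,A^E;t)\,dt$ and similarly on the negative side; knowing $L(\beta,A^E_{\pm\epsilon_0})=L(\beta,M)+2\pi\epsilon_0$, that $\omega$ takes values in $\{-1,0,1\}$ with the right limits at $\pm\epsilon_0$, and that $L(\beta,A^E_0)$ equals $\max\{0,I(\lambda)\}$-type bounds forces $\omega\equiv\pm1$ on each side once a jump is present. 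A secondary subtlety is that Theorem \ref{coro_regspec} is stated only on a strip where $\det A^E$ is bounded away from zero, which is exactly why the preliminary reduction to $\lambda\in\mathcal R$ (so that such a strip $|\text{Im}\,z|\le\epsilon_1'$ exists) is needed before the final continuity/Hausdorff argument extends everything to all $\lambda$.
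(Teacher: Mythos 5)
Your proposal is correct and follows essentially the same route as the paper's Section \ref{sec_whatwecansay}: identity (\ref{eq_extrap1}) is just (\ref{eq_relAB}), the lower bound comes from convexity, quantization and the asymptotics of Proposition \ref{prop_asy}, and equality on $\Sigma$ is forced by the same $\{-1,0,1\}$ jump analysis at $\epsilon=0$ via Theorem \ref{coro_regspec}, with the extension from $\mathcal{R}$ to all $\lambda$ by Theorem \ref{thm_lana} and Hausdorff continuity of the spectrum. The integral-of-the-acceleration argument you sketch for ruling out the asymmetric jumps is exactly the content behind the paper's appeal to continuity of $L(\beta,A_\epsilon^E)$ in $\epsilon$.
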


Finally,  taking $\epsilon = 0$ in Theorem \ref{prop_LE}, Theorem \ref{thm_mainresult} is obtained using the following fact, verified by a straightforward computation.
\begin{obs} \label{thm_anadelta}
Denote $\Delta := L(\beta, M) - I(\lambda)$. Then,
\begin{equation}
\Delta = \begin{cases}\log \left \vert  \dfrac{1+\sqrt{1 - 4\lambda_{1} \lambda_{3}}}{2 \lambda_{1}}\right \vert & \mbox{, if} ~\lambda_{1} \geq \lambda_{3}, ~\lambda_{2} \leq \lambda_{3} + \lambda_{1} ~\mbox{,}  \\
\log \left\vert  \dfrac{1+\sqrt{1 - 4\lambda_{1} \lambda_{3}}}{2 \lambda_{3}}\right\vert & \mbox{, if} ~\lambda_{3} \geq \lambda_{1}, ~\lambda_{2} \leq \lambda_{3} + \lambda_{1} ~\mbox{,}  \\
\log \left\vert  \dfrac{1+\sqrt{1 - 4\lambda_{1} \lambda_{3}}}{\lambda_{2} + \sqrt{\lambda_{2}^{2} - 4 \lambda_{1} \lambda_{3}}}\right\vert & ~\mbox{, if} ~\lambda_{2} \geq \lambda_{3} + \lambda_{1} ~\mbox{.} 
\end{cases}
\end{equation}
$\Delta > 0$ is positive in the interior of region I and $\Delta \leq 0$ for $\lambda \in II \cup III$. 

Moreover,
\begin{itemize}
\item[(i)] For $\lambda_1 \neq \lambda_3$, $\Delta=0$ if and only if $\lambda$ lies on the line segments $\{\lambda_1 + \lambda_3 = 1, \lambda_2 \leq 1\}$ or
$\{\lambda_1 + \lambda_3 \leq 1, \lambda_2 = 1\}$.
\item[(ii)] For  $\lambda_1 = \lambda_3$, $\Delta=0$ if and only if $\lambda \in III$ or $\{2 \lambda_1 \leq 1, \lambda_2 = 1\}$.
\end{itemize}
\end{obs}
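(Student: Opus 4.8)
The plan is to treat Observation~\ref{thm_anadelta} as a direct computation, inserting the closed form $L(\beta,M)=\log\bigl|\tfrac{1+\sqrt{1-4\lambda_1\lambda_3}}{2}\bigr|$ from (\ref{eq_Mmatrix}) into the four-case expression for $I(\lambda)=I_0(\lambda)$ in (\ref{eq_integral}) and subtracting. First, in the regime $\lambda_1+\lambda_3\ge\lambda_2\ge0$ formula (\ref{eq_integral}) gives $I(\lambda)=\log\max\{\lambda_1,\lambda_3\}$ (with $\max\{\lambda_1,\lambda_3\}>0$ since not all couplings vanish), so $\Delta=\log\bigl|\tfrac{1+\sqrt{1-4\lambda_1\lambda_3}}{2\max\{\lambda_1,\lambda_3\}}\bigr|$, which is exactly the first two lines of the claimed formula. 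In the complementary regime $\lambda_2\ge\lambda_1+\lambda_3$ with $\lambda_1\lambda_3\ne0$ I would apply the rationalization identity
\[
\left|\frac{2\lambda_1\lambda_3}{-\lambda_2+\sqrt{\lambda_2^2-4\lambda_1\lambda_3}}\right|=\left|\frac{\lambda_2+\sqrt{\lambda_2^2-4\lambda_1\lambda_3}}{2}\right|,
\]
valid as complex numbers irrespective of the sign of $\lambda_2^2-4\lambda_1\lambda_3$, to rewrite $I(\lambda)$ and read off the third line; the degenerate sub-case $\lambda_1\lambda_3=0$ is identical since then $L(\beta,M)=0$ and the third-line expression reduces to $\log(1/\lambda_2)$. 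A one-line check on the overlap $\lambda_2=\lambda_1+\lambda_3$, where $\sqrt{\lambda_2^2-4\lambda_1\lambda_3}=|\lambda_1-\lambda_3|$, shows the two descriptions agree, so $\Delta$ is well defined.

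For the sign statements I would introduce the auxiliary function $h(t):=\bigl|\,t+\sqrt{t^2-4\lambda_1\lambda_3}\,\bigr|$ on $t\ge0$. It is continuous and non-decreasing: when $\lambda_1\lambda_3\ge0$ it equals the constant $2\sqrt{\lambda_1\lambda_3}$ on $[0,2\sqrt{\lambda_1\lambda_3}]$ and is strictly increasing on $[2\sqrt{\lambda_1\lambda_3},\infty)$ (if $\lambda_1\lambda_3<0$ it is simply strictly increasing, an easier variant). Using $h(\lambda_1+\lambda_3)=2\max\{\lambda_1,\lambda_3\}$ (same square-root identity) and $L(\beta,M)=\log(h(1)/2)$, both regimes collapse into the single formula
\[
\Delta=\log\frac{h(1)}{h(m)},\qquad m:=\max\{\lambda_1+\lambda_3,\lambda_2\}.
\]
The sign of $\Delta$ is then read off by comparing $1$ and $m$ via monotonicity of $h$: if $m<1$, i.e.\ $\lambda_1+\lambda_3<1$ and $\lambda_2<1$ (which contains the interior of region~I), then $\lambda_1+\lambda_3\le m<1$ forces $\lambda_1\lambda_3\le(\lambda_1+\lambda_3)^2/4<1/4$, so $h$ is strictly increasing across $[m,1]$ and $\Delta>0$; if $m\ge1$, which holds throughout region~II ($m=\lambda_2\ge1$) and region~III ($m=\lambda_1+\lambda_3\ge1$), non-decreasingness of $h$ gives $\Delta\le0$.

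Finally, for the equality locus: $m=1$ always yields $\Delta=0$, and $\{m=1\}$ is exactly the union of the segments $\{\lambda_1+\lambda_3=1,\ \lambda_2\le1\}$ and $\{\lambda_1+\lambda_3\le1,\ \lambda_2=1\}$. If $m>1$, then $\Delta=0$ can hold only when $h$ is constant on $[1,m]$, i.e.\ $m\le2\sqrt{\lambda_1\lambda_3}$; since always $m\ge\lambda_1+\lambda_3\ge2\sqrt{\lambda_1\lambda_3}$, this requires $\lambda_1=\lambda_3$ and $m=2\lambda_1>1$, which is exactly region~III restricted to the diagonal $\lambda_1=\lambda_3$. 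Combining the two cases gives statement (i) when $\lambda_1\ne\lambda_3$ (only $\{m=1\}$ contributes) and statement (ii) when $\lambda_1=\lambda_3$ (the $m>1$ case then adds all of region~III). I expect the main—though minor—obstacle to be purely organizational: treating the square root uniformly when $4\lambda_1\lambda_3>1$ (handled by passing to $h$ and absolute values), verifying the case formulas agree on their shared boundary, and confirming that the degenerate endpoints ($2\lambda_1=1$ in (ii); the coordinate faces $\lambda_2=0$ or $\lambda_1+\lambda_3=0$ of region~I) fall on the correct side of each inequality.
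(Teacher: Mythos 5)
Your computation is correct and is exactly the ``straightforward computation'' the paper invokes without writing out: substitute $L(\beta,M)$ from (\ref{eq_Mmatrix}) and $I(\lambda)$ from (\ref{eq_integral}) at $\epsilon=0$ and compare. The auxiliary function $h(t)=\bigl|t+\sqrt{t^2-4\lambda_1\lambda_3}\bigr|$, with $\Delta=\log\bigl(h(1)/h(m)\bigr)$ and $m=\max\{\lambda_1+\lambda_3,\lambda_2\}$, is a clean way to unify the case analysis and to read off the sign and the equality locus, including the $\lambda_1=\lambda_3$ degeneracy in region III.
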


As a corollary to Theorem  \ref{prop_LE} we obtain a necessary and sufficient condition for regularity of the analytic cocycle $(\beta, A_\epsilon^E)$. By (\ref{eq_complexLE_A}), $(\beta, A_\epsilon^E)$ is regular if and only if $\Delta < 0$. 

For Schr\"odinger cocycles, non-regular behavior for zero Lyapunov
exponent proved to be interesting. In \cite{B}, Avila called such
cocycles {\em{critical}} and analyzed (lack of) their appearance for generic almost periodic one frequency Schr\"odinger operators \cite{B, O}. 

Motivation for these considerations was the almost Mathieu operator. There, the critical point corresponds to the square lattice, separating {\em{sub-critical}} (zero Lyapunov exponent and regular) from {\em{super-critical}} (positive Lyapunov exponent) behavior (see Eq.  (\ref{eq_AubryAndre})). 

Implications for spectral theory for general analytic Schr\"odinger cocyles are conjectured (almost reducibility conjecture, see Sec. \ref{sec_concl}).

In analogy to the Schr\"odinger case we make following defintion:
\begin{definition}
Consider a Jacobi matrix of the form (\ref{eq_hamiltonian}) with functions $c,v$ analytic and $c \not \equiv 0$. For given $\beta$, we call the associated analytic cocycle $(\beta, A_\epsilon^E)$ {\em{critical}} if the Lyapunov exponent $L(\beta, B^E) = 0$ and $(\beta, A_\epsilon^E)$ shows non regular behavior.
\end{definition}
For extended  Harper's equation Theorem  \ref{prop_LE} and Observation \ref{thm_anadelta} yield 
\begin{coro} \label{coro_regularity}
For extended Harper's equation with irrational frequency $\beta,$ critical behavior occurs
\begin{itemize}
\item[(i)] for $\lambda_1 \neq \lambda_3$ on the line segments $\{\lambda_1 + \lambda_3 = 1, \lambda_2 \leq 1\}$ and $\{\lambda_1 + \lambda_3 \leq 1, \lambda_2 = 1\}$,
\item[(ii)]  for  $\lambda_1 = \lambda_3$, in region III and along $\{2 \lambda_1 \leq 1, \lambda_2 = 1\}$.
\end{itemize}
\end{coro}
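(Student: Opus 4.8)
The plan is to reduce the entire statement to the sign of the single quantity $\Delta = L(\beta, M) - I(\lambda)$ from Observation \ref{thm_anadelta}, and then simply read off the answer from the explicit case analysis recorded there. Everything needed is already contained in Theorem \ref{prop_LE} and the regularity criterion stated right after (\ref{eq_complexLE_A}); the only work is careful bookkeeping over the regions I--III.

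First I would unpack the definition of criticality using Theorem \ref{prop_LE} at $\epsilon = 0$. For $E \in \Sigma$, equations (\ref{eq_extrap1})--(\ref{eq_extrap2}) give
\[
L(\beta, B^E) = L(\beta, A^E) - I(\lambda) = \max\{I(\lambda), L(\beta, M)\} - I(\lambda) = \max\{0, \Delta\},
\]
which does not depend on the particular $E \in \Sigma$. (For $E$ in the resolvent set the Jacobi cocycle is uniformly hyperbolic, so $L(\beta, B^E) > 0$ there and no critical behaviour is possible; hence "critical behaviour occurs" is really a statement about $\lambda$ alone, realized on all of $\Sigma$.) Consequently $L(\beta, B^E) = 0$ for $E \in \Sigma$ exactly when $\Delta \le 0$. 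Next I would use the regularity criterion noted after (\ref{eq_complexLE_A}): for $E \in \Sigma$ the cocycle $(\beta, A_\epsilon^E)$ is regular iff $\Delta < 0$. Indeed, by (\ref{eq_complexLE_A}) the convex map $\epsilon \mapsto L(\beta, A_\epsilon^E)$ equals the constant $I(\lambda)$ near $\epsilon = 0$ precisely when $\Delta < 0$, and otherwise equals $L(\beta, M) + 2\pi|\epsilon|$, which has a genuine corner at $\epsilon = 0$ and so is non-affine there. Combining the two equivalences, critical behaviour ($L(\beta, B^E) = 0$ together with $(\beta, A_\epsilon^E)$ non-regular) holds if and only if $\Delta = 0$.

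It then remains to identify the locus $\{\Delta = 0\}$, which is exactly Observation \ref{thm_anadelta}: comparing the three branches of $L(\beta, M)$ in (\ref{eq_Mmatrix}) with the four branches of $I(\lambda)$ in (\ref{eq_integral}) shows that $\Delta > 0$ in the interior of region I, $\Delta \le 0$ on $II \cup III$, and $\Delta = 0$ precisely on the sets listed in (i) and (ii) of the Corollary — noting that the $\lambda_1 = \lambda_3$ behaviour differs because on that plane $c_\lambda$ acquires real zeros whenever $2\lambda_1 \ge \lambda_2$ (Observation \ref{obs_cfun}). Feeding $\{\Delta = 0\}$ back through the equivalence of the previous paragraph yields the statement. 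The only genuinely delicate point, and the one I would spend time on, is this final case analysis at the boundaries between regions I--III and the two exceptional planes $\lambda_1 = \lambda_3$ and $\lambda_2 = \lambda_1 + \lambda_3$; the arithmetic there is routine but must be matched branch-by-branch against (\ref{eq_integral}) and (\ref{eq_Mmatrix}), and it is exactly this matching that produces the asymmetry between parts (i) and (ii).
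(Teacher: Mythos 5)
Your proposal is correct and follows essentially the same route as the paper: the paper derives the corollary directly from Theorem \ref{prop_LE} (which gives $L(\beta,B^E)=\max\{0,\Delta\}$ on the spectrum and the criterion that $(\beta,A_\epsilon^E)$ is regular iff $\Delta<0$) combined with the case analysis of $\{\Delta=0\}$ in Observation \ref{thm_anadelta}, so criticality is exactly $\Delta=0$. The only minor quibble is your parenthetical heuristic attributing the asymmetry between (i) and (ii) to real zeros of $c_\lambda$; the asymmetry is purely the arithmetic of comparing (\ref{eq_Mmatrix}) with (\ref{eq_integral}) (for $\lambda_1=\lambda_3$ in region III one has $|1+\sqrt{1-4\lambda_1^2}|/2=\lambda_1$, forcing $\Delta=0$ identically), but since you defer to Observation \ref{thm_anadelta} for that computation this does not affect the argument.
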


\section{An alternative proof of Theorem \ref{thm_mainresult}} \label{sec_alternative}

As pointed out in the end of Sec. \ref{sec_asympt}, asymptotically $L(\beta, A_\epsilon^E)$ is independent of $\lambda_2$. This interesting fact inspires an alternative proof for Theorem \ref{thm_mainresult} given here. First note that above mentioned symmetry is already implied by (\ref{eq_asy1a}) and (\ref{eq_asy1b}) without any further computations
\begin{footnote} {Using Theorem \ref{thm_lana} and continuity of the spectrum in the Hausdorff metric it suffices to establish Theorem \ref{thm_mainresult} for $\lambda_1 \mbox{,} ~\lambda_3$ both non-zero.} \end{footnote}.

The arguments of Sec. \ref{sec_whatwecansay} leading to Theorem \ref{prop_LE} are independent of the remainder of Sec. 
\ref{sec_asympt}; this yields a formal expression for the Lyapunov exponent given in (\ref{eq_extrap1}) and (\ref{eq_extrap2}). 

We will focus here on the most interesting region III. As we will argue, Theorem \ref{thm_mainresult} rests on the following key Lemma which shows that above mentioned asymptotic independence of $L(\beta, B_\epsilon^E)$ on $\lambda_2$ persists in the limit $\epsilon \to 0$.
\begin{lemma} \label{thm_keylemma}
Let $\beta$ be irrational. Then, on the spectrum the Lyapunov exponent in region III is {\em{independent}} of the coupling $\lambda_{2}$.
\end{lemma}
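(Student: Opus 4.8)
The plan is to exploit the asymptotic formula from Proposition \ref{prop_asy} together with the fact, noted at the end of Sec.~\ref{sec_asympt}, that the matrix $M=M^{(0)}$ governing the large-$\abs{\epsilon}$ behavior of $L(\beta,A_\epsilon^E)$ does not depend on $\lambda_2$. Concretely, fix $\lambda_1,\lambda_3$ with, say, both nonzero (the degenerate cases follow from Theorem~\ref{thm_lana} and Hausdorff-continuity of the spectrum, exactly as in the footnote), and let $\lambda_2$ vary inside region~III, i.e. $\lambda_2\le \lambda_1+\lambda_3$ with $\lambda_1+\lambda_3\ge 1$. The combination $(\beta,A_\epsilon^E)$ has, by Proposition~\ref{prop_asy}, $L(\beta,A_\epsilon^E)=L(\beta,M)+2\pi\abs{\epsilon}$ for $\abs{\epsilon}>\epsilon_0$, and the right-hand side is manifestly $\lambda_2$-independent. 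So the whole $\lambda_2$-dependence of $L(\beta,B^E)=L(\beta,A^E)-I(\lambda)$ is squeezed into the behavior of $L(\beta,A_\epsilon^E)$ on the compact interval $[-\epsilon_0,\epsilon_0]$, together with the term $I(\lambda)$.

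First I would record that in region~III one has $\lambda_1e^{-2\pi\epsilon}+\lambda_3e^{2\pi\epsilon}\ge\lambda_1+\lambda_3\ge\lambda_2$ for \emph{all} $\epsilon$ (minimizing the left side over $\epsilon$ when $\lambda_1=\lambda_3$, or noting the minimum is $2\sqrt{\lambda_1\lambda_3}$ which still dominates... one must be a bit careful here — see below), so by the formula (\ref{eq_integral}) the function $I_\epsilon(\lambda)$ is precisely the ``third case'', namely $I_\epsilon(\lambda)=\log\bigl|\tfrac{2\lambda_1\lambda_3}{-\lambda_2+\sqrt{\lambda_2^2-4\lambda_1\lambda_3}}\bigr|$, a constant independent of $\epsilon$ throughout region~III (and hence equal to its asymptotic value $I(\lambda)$). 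Next, combine this with Observation~\ref{thm_anadelta}, which gives $\Delta=L(\beta,M)-I(\lambda)\le 0$ on region~III. The extrapolation argument of Sec.~\ref{sec_whatwecansay} then applies verbatim: for $E\in\Sigma$, $L(\beta,A_\epsilon^E)=\max\{I(\lambda),L(\beta,M)+2\pi\abs{\epsilon}\}$ by (\ref{eq_complexLE_A}), so at $\epsilon=0$, since $\Delta\le 0$, we get $L(\beta,A^E)=I(\lambda)$ and therefore $L(\beta,B^E)=L(\beta,A^E)-I(\lambda)=0$ on the spectrum. A Lyapunov exponent that is identically $0$ on the spectrum is trivially independent of $\lambda_2$, which is the assertion.

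Alternatively — and this is the ``alternative proof'' spirit of the section — one can argue without ever invoking the closed form of $I(\lambda)$: since the governing matrix $M$ is $\lambda_2$-independent and $I_\epsilon(\lambda)$ is affine in $\epsilon$ with integer slope (Lemma~\ref{lem_quantaccbaby}), the quantization-of-acceleration plus convexity plus Theorem~\ref{coro_regspec} machinery pins down $L(\beta,A_\epsilon^E)$ on $[-\epsilon_0,\epsilon_0]$ entirely from its two asymptotic half-lines, which are $\lambda_2$-free; the only place $\lambda_2$ could re-enter is through whether the acceleration jumps at $\epsilon=0$ (i.e. whether $(\beta,A^E)$ is regular), and Theorem~\ref{coro_regspec} together with Lemma~\ref{lemma_regunif} forces a jump precisely when $L(\beta,B^E)>0$. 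So either $L(\beta,B^E)=0$ (no $\lambda_2$-dependence, trivially), or else the jump reconstructs $L(\beta,A^E)=L(\beta,M)+0$ which is again $\lambda_2$-free; in the latter case $L(\beta,B^E)=L(\beta,M)-I(\lambda)$ does depend on $\lambda_2$ through $I(\lambda)$, so one needs to rule this out in region~III — which is exactly the content $\Delta\le 0$, forcing the first alternative. I would phrase the proof to make this dichotomy explicit.

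The main obstacle I anticipate is the ``be careful here'' point above: showing that $I_\epsilon(\lambda)$ really sits in the region-III branch of (\ref{eq_integral}) for all $\epsilon$, not merely $\epsilon=0$. The condition defining that branch is $\lambda_1e^{-2\pi\epsilon}+\lambda_3e^{2\pi\epsilon}\le\lambda_2$, whose left side has minimum $2\sqrt{\lambda_1\lambda_3}$ (attained at a nonzero $\epsilon$ unless $\lambda_1=\lambda_3$), \emph{not} $\lambda_1+\lambda_3$; so the branch condition reads $2\sqrt{\lambda_1\lambda_3}\le\lambda_2$ — wait, it is the \emph{reverse} inequality that one wants, namely that the left side stays $\le\lambda_2$, i.e. $\max$ over the relevant $\epsilon$-range $\le\lambda_2$, which fails for large $\abs\epsilon$. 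Thus $I_\epsilon(\lambda)$ is genuinely piecewise linear and only constant near $\epsilon=0$; the correct statement is that for $\abs\epsilon$ small $I_\epsilon(\lambda)=I(\lambda)$ is locally constant, and this local statement is all the $\epsilon=0$ extrapolation needs. I would therefore be careful to invoke (\ref{eq_integral}) only in a neighborhood of $\epsilon=0$, where in region~III one does have $\lambda_1+\lambda_3\ge\lambda_2$ hence (by continuity, shrinking $\epsilon_1'$) the third-branch condition, and to let the already-established Theorem~\ref{prop_LE} / (\ref{eq_complexLE_A}) do the rest. Everything else is bookkeeping with the cases $\lambda_1=\lambda_3$ versus $\lambda_1\ne\lambda_3$ and the boundary of region~III, handled by Theorem~\ref{thm_lana} and Hausdorff continuity of $\Sigma$.
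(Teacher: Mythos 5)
Your argument has a concrete error at its core: you have read formula (\ref{eq_integral}) backwards in region III. The branch $I(\lambda)=\log\bigl\vert\tfrac{2\lambda_1\lambda_3}{-\lambda_2+\sqrt{\lambda_2^2-4\lambda_1\lambda_3}}\bigr\vert$ applies when $\lambda_1 e^{-2\pi\epsilon}+\lambda_3 e^{2\pi\epsilon}\le\lambda_2$, i.e.\ (at $\epsilon=0$) when $\lambda_1+\lambda_3\le\lambda_2$ --- that is the dual regime, not region III. In region III one has $\lambda_1+\lambda_3\ge\lambda_2$, so at $\epsilon=0$ the first or second branch applies and $I(\lambda)=\max\{\log\lambda_1,\log\lambda_3\}$. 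That is the whole point of the lemma: this value is manifestly independent of $\lambda_2$, and since $L(\beta,M)$ is also $\lambda_2$-free, Theorem \ref{prop_LE} gives $L(\beta,B^E)=\max\{0,\,L(\beta,M)-I(\lambda)\}$ on the spectrum, which is $\lambda_2$-free with no further input; that is the paper's three-line proof. Your misidentified branch saddles you with a $\lambda_2$-dependent $I(\lambda)$, which you then try to neutralize by invoking $\Delta\le0$ from Observation \ref{thm_anadelta}; but the $\Delta$ of that observation is computed with the correct $I(\lambda)$, and with your value the quantity $L(\beta,M)-I(\lambda)$ is in fact strictly positive in parts of region III (e.g.\ $\lambda_1=1$, $\lambda_3=0.1$, $\lambda_2=0.5$, where it equals $\log\tfrac{1+\sqrt{0.6}}{2\sqrt{0.1}}>0$), so the rescue is internally inconsistent. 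Your side-worry about whether $I_\epsilon(\lambda)$ stays in one branch for all $\epsilon$ is a red herring: only the value at $\epsilon=0$ enters the final formula, and in region III $I_\epsilon(\lambda)$ is in any case not locally constant near $\epsilon=0$ (its one-sided slopes there are $\pm2\pi$).

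There is also a structural problem with your main route. By deducing $L(\beta,B^E)=0$ from $\Delta\le0$ you are simply reproducing the extrapolation proof of Sec.~\ref{sec_whatwecansay}, i.e.\ you prove the lemma by first proving Theorem \ref{prop_regionIII}. The lemma exists precisely so that Sec.~\ref{sec_alternative} can establish $L=0$ in region III \emph{without} evaluating $L(\beta,M)$ and comparing it to $I(\lambda)$: one only needs that both quantities are $\lambda_2$-free, which pins $L(\beta,B^E)$ in all of region III to its value on the plane $\lambda_1+\lambda_3=\lambda_2$, whence (by continuity of the LE and of the spectrum) to region II and duality. Your second, ``dichotomy'' route is closer to this in spirit, but it too collapses to the correct one-line argument once the right branch of (\ref{eq_integral}) is in place: there is nothing to rule out, because $L(\beta,M)-I(\lambda)$ is $\lambda_2$-free in region III whatever its sign.
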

\begin{proof}
First note that for $\lambda_{1} + \lambda_{3} \geq 1$, 
(\ref{eq_integral}) implies $I(\lambda) = \max \left\{\log \abs{\lambda_{1}}, \log \abs{\lambda_{3}} \right\}$. Since the matrix $M$ is
independent of $\lambda_{2}$, within region III Proposition \ref{prop_LE} implies that for fixed $\lambda_{1}$ and $\lambda_{3}$, $L(\beta, B^{E})$ is constant w.r.t. $\lambda_{2}$.  In particular, in entire region III, $L(\beta, B^{E})$ is determined by its value on the plane $\{\lambda_{1} + \lambda_{3} = \lambda_{2} \geq 1\}$. 
\end{proof}

To see that Lemma \ref{thm_keylemma} already implies Theorem \ref{thm_mainresult} notice the following: As an immediate consequence of Lemma \ref{thm_keylemma} we obtain constancy of $L(\beta, B^E)$ on the spectrum along lines where $\lambda_{3}$ and $\lambda_{1}$ are {\em{fixed}} (see Figure \ref{figure_1}). Hence, in region III, $L(\beta, B^E)$ on the spectrum is determined by the limit when approaching the plane $\lambda_{1} + \lambda_{3} = \lambda_{2}$ along such lines from within region III. 

Using continuity of the spectrum in Hausdorff metric as well as joint continuity of the Lyapunov exponent in $(E,\lambda)$ (see Theorem \ref{thm_lana}), this limit can equally be determined by approaching the plane $\lambda_{1} + \lambda_{3} = \lambda_{2}$ from within region II. 

Based on a duality argument, in \cite{EE} we showed zero LE on the spectrum in region II for Diophantine $\beta$. For the readers convenience, we give a shortened, alternative proof of this fact in Appendix \ref{app_dualityconj}. 

In either case, the duality argument is based on \cite{E} which proves that the spectrum in the interior of region I is purely point with exponentially localized eigenfunctions. This however is established under the condition that $\beta$ is Diophantine. As far as the LE is concerned, we can remove the Diophantine condition using continuity in the frequency as asserted by Theorem \ref{thm_mainresult}.

\begin{theorem} \label{prop_regionIII}
Fix $\beta$ irrational. Then in region III, the Lyapunov exponent is zero on the spectrum.
\end{theorem}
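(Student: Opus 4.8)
The plan is to reduce the claim in region III to the known (or newly re-proven, in Appendix \ref{app_dualityconj}) fact that the Lyapunov exponent vanishes on the spectrum in region II, and then to transport this information across the boundary plane $\{\lambda_1+\lambda_3=\lambda_2\}$ using continuity. Concretely, I would argue as follows. By Theorem \ref{thm_lana} and continuity of the spectrum in the Hausdorff metric, it suffices to treat $\lambda$ with $\lambda_1,\lambda_3$ both nonzero. Fix such a point $\lambda^{(0)}$ in the interior of region III. By Lemma \ref{thm_keylemma}, $L(\beta,B^E)$ on the spectrum is independent of $\lambda_2$ once $\lambda_1,\lambda_3$ are fixed; hence the value at $\lambda^{(0)}$ equals the value obtained by decreasing $\lambda_2$ down to the boundary plane $\lambda_2=\lambda_1+\lambda_3$ (and, by the same token, the value along the whole half-line of $\lambda_2\ge \lambda_1+\lambda_3$ is governed by this boundary value).

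Next I would cross the boundary. On the plane $\{\lambda_1+\lambda_3=\lambda_2\}$ the cocycle is singular (real zeros of $c_\lambda$, by Observation \ref{obs_cfun}), so I do not want to work on the plane itself; instead I approach it from the region II side, where for $\lambda_1,\lambda_3$ fixed and $\lambda_2>\lambda_1+\lambda_3$ the point lies in region II. For Diophantine $\beta$, the duality argument of \cite{EE} (reproved in Appendix \ref{app_dualityconj}) gives $L(\beta,B^E)=0$ on $\Sigma$ throughout the interior of region II. Then, by joint continuity of $L$ in $(E,\lambda)$ (Theorem \ref{thm_lana}) together with Hausdorff continuity of $\Sigma$ in $\lambda$, the limiting value as one approaches the boundary plane from region II is $0$; and the same limit, approached from region III, must agree. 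Combining with the $\lambda_2$-independence from Lemma \ref{thm_keylemma} yields $L(\beta,B^E)=0$ on $\Sigma$ at $\lambda^{(0)}$, hence everywhere in region III.

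Finally, to remove the Diophantine restriction inherited from \cite{E,EE}: the localization input of \cite{E} (and hence the duality conclusion in region II) is only available for Diophantine $\beta$, but all I actually use about region II is the \emph{value} of the Lyapunov exponent, namely $0$. Since the set of Diophantine $\beta$ is dense (indeed of full measure) among irrationals, and $L(\beta+r,B^E)$ is continuous in $r$ at every irrational $\beta$ by Theorem \ref{thm_conti_sing} (equivalently Theorem \ref{thm_lana}), the identity $L(\beta,B^E)=0$ on $\Sigma$ for all $E$ passes from Diophantine $\beta$ to arbitrary irrational $\beta$ by approximating $\beta$ by Diophantine frequencies and using joint continuity in $(\beta,E)$ plus Hausdorff continuity of the spectrum in $\beta$.

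The main obstacle is the boundary plane itself: the cocycle is singular precisely on $\{\lambda_1+\lambda_3=\lambda_2\}$, so one cannot naively evaluate $L$ there or apply the non-singular tools (Theorems \ref{thm_quantacc}, \ref{thm_unifhyp}) on it. The resolution is exactly Theorem \ref{thm_conti_sing}: it guarantees $L$ is jointly continuous \emph{including} at singular cocycles and including under variation of $\beta$, so the limit along the boundary plane is well-defined and matches the region II and region III limits. A secondary point to be careful about is that ``approaching along lines with $\lambda_1,\lambda_3$ fixed'' keeps us in the regime $\mathcal{R}$ (both $\lambda_1,\lambda_3$ nonzero, $\lambda_2\neq\lambda_1+\lambda_3$ except in the limit), where Lemma \ref{thm_keylemma} was proved; the leftover planes $\mathcal{R}^c$ are then recovered by one further application of Theorem \ref{thm_lana} and Hausdorff continuity of $\Sigma$, as already noted at the end of Section \ref{sec_ehm}.
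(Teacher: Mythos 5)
Your proposal is correct and follows essentially the same route as the paper: reduce to $\lambda_1,\lambda_3$ nonzero, use Lemma \ref{thm_keylemma} to push the value of $L(\beta,B^E)$ on the spectrum in region III down to the boundary plane $\{\lambda_1+\lambda_3=\lambda_2\}$, identify that limit with the one taken from region II via Theorem \ref{thm_lana} and Hausdorff continuity of the spectrum, invoke the duality argument (Theorem \ref{conj_duality}) for zero Lyapunov exponent in region II, and finally remove the Diophantine restriction by continuity in the frequency. No substantive differences from the paper's argument.
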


Assembling the pieces, we finally obtain Theorem \ref{thm_mainresult}: The Theorems \ref{conj_duality} and \ref{prop_regionIII} establish zero Lyapunov exponent in region II and III. Let $\lambda \in I$. Invariance of the density of states under duality  (expressed in terms of the map $\sigma$, see (\ref{eq_inv}) in Appendix \ref{app_dualityconj}) and the Thouless formula for Jacobi operators imply \cite{D},
\begin{eqnarray} \label{eq_finalproof}
L(B_{\lambda}^{E},\beta) & = & - \int \log \abs{c_{\lambda}(x)} \ud x + \int \log\abs{E-E^{\prime}} \ud n_{\sigma(\lambda)}(E^{\prime}) \nonumber \\
 & = & \int \log \left \vert \frac{\lambda_{2} c_{\sigma(\lambda)}(x)}{c_{\lambda}(x)} \right \vert \ud x+ L(B_{\sigma(\lambda)}^{\lambda_2^{-1} E},\beta) ~\mbox{.}
\end{eqnarray}
Here, $c_{\lambda}$ denotes the function $c$ defined in (\ref{eq_hamiltonian1}), whereas in $c_{\sigma(\lambda)}$ we apply the map $\sigma$ to the coupling constants in $c$. Evaluating the integral in (\ref{eq_finalproof}) for $\lambda$ in region I (see Sec. 5 in \cite{D}) 
yields the formula for the Lyapunov exponent given in Theorem \ref{thm_mainresult}.

\section{Implications for general quasi-periodic, analytic one frequency Jacobi matrices} \label{sec_concl}

The example of extended Harper's equation not only illustrates a practicable method to obtain the Lyapunov exponent (or at least lower bounds) for analytic cocycles derived from a one-frequency almost periodic Jacobi matrix but also provides useful insights in how the theory originally developed  in \cite{B} for Schr\"odinger operators generalizes to the Jacobi case. 

In view of future applications to other Jacobi operators, in the present section we shall point out differences and peculiarities of the Jacobi case which relate to possible zeros of the determinant of the transfer matrix $A^E$ associated with the Jacobi operator. 

If not mentioned explicitly, in the following we consider a Jacobi operator on $l^2(\mathbb{Z})$ of the form (\ref{eq_hamiltonian}) with functions $c,v$ analytic on $\mathbb{T}_\delta$, $c \not \equiv 0$. 

With a given Jacobi matrix one can associate the two transfer matrices $A^E$ and $B^E$ defined in (\ref{eq_deftransfer}). The former gives rise to an analytic cocyle whereas the latter is defined only a.e. on $\mathbb{T}_\delta$ due to possible zeros of $c(z)$. In view of the relation between the associated Lyapunov exponents, Eq. (\ref{eq_relAB}), we recall Lemma \ref{lem_quantaccbaby}.

\begin{enumerate}
\item It is crucial for the extrapolation process discussed in Sec.  \ref{sec_whatwecansay} to understand the analytic properties of  $L(\beta, A_\epsilon^E)$ w.r.t. $\epsilon$. In particular, by Proposition \ref{prop_convlyap} this function is convex in $\epsilon$. Convexity however fails in general for the Lyapunov exponent associated with the cocycle $(\beta, B_\epsilon^E)$. 

This can be seen on the example of Extended Harper's equation when $\lambda_1 = 0$. Then for $\lambda_3 = 1$ and $0 < \lambda_2 < 1$,
\begin{equation}
I_\epsilon(\lambda) = \begin{cases}  2 \pi \epsilon & \mbox{,} ~ \mathrm{e}^{2 \pi \epsilon} \geq \lambda_2 ~\mbox{,} \\
\log{\lambda_2} & \mbox{,} ~\mathrm{e}^{2 \pi \epsilon} \leq \lambda_2 ~\mbox{.} \end{cases}
\end{equation}
By  Lemma \ref{lem_quantaccbaby}  and the analytic properties of $L(\beta, A_\epsilon^E)$, $L(\beta, B_\epsilon^E)$ is still piecewise affine, however, since $L(\beta, A_\epsilon^E) = 2 \pi \abs{\epsilon}$ (Theorem \ref{prop_LE}), convexity fails at $\epsilon = \frac{1}{2 \pi} \log{\lambda_2}$: the right derivative changes from -1 to -2 (with increasing $\epsilon$). It can be checked that at this value of $\epsilon$ a zero of $c$ occurs. We mention that in general by  Lemma \ref{lem_quantaccbaby} convexity of $L(\beta, B_\epsilon^E)$ may only fail at those $\epsilon$ where $c$ exhibits a zero.

\item In Theorem \ref{thm_unifhyp} we assumed the analytic cocycle $(\beta, D)$ such that $\det(D)$ is bounded away from zero. That this assumption 
is indeed necessary may be seen when considering $(\beta, A_\epsilon^E)$ for Extended Harper's equation. To this end recall that $L(\beta, B_\epsilon^E) = L(\beta, (A_\epsilon^E)^\prime)$. For $L(\beta, B_\epsilon^E) > 0$, regularity of $(\beta, (A_\epsilon^E)^\prime)$ does in general not imply uniform hyperbolicity if $c(z)$ has real zeros: 

Consider $\lambda_1 = \lambda_3$ and $1 > 2 \lambda_1 >
\lambda_2$. Then $I_\epsilon(\lambda) = \log \lambda_3 + 2
\pi \abs{\epsilon}$ in a neighborhood of $\epsilon = 0$. Thus, by
(\ref{eq_defm}) and Theorem \ref{prop_LE}, $\omega (\beta,  B^E; \epsilon) = 0$ locally about $\epsilon = 0$  even though $L(\beta, B_\epsilon^E) > 0$ in the interior of region $I$.

\item Analysis of the Lyapunov exponent serves as an important ingredient in spectral analysis. For almost periodic Schr\"odinger operators with analytic potentials the almost reducibility conjecture (ARC) \cite{B,V,W,X} claims that sub-critical behavior of the Schr\"odinger cocycle on the spectrum implies purely absolutely continuous spectrum.

Section \ref{sec_reglyap} and above remarks suggest to formulate the ARC for Jacobi cocycles in terms of the analytic cocycle $(\beta, A_\epsilon^E)$ (and not for $(\beta, B_\epsilon^E)$). However, even for the analytic cocycle $(\beta, A_\epsilon^E)$ the ARC is false in general. 

For Extended Harper's equation with $\lambda_1 = 0$ and $ \lambda_3 = \lambda_2 \geq 1 $, $L(\beta, A_\epsilon^E)$ is regular (Corollary \ref{coro_regularity}) even though the spectrum has no ac component. The latter follows from Observation \ref{obs_cfun} and the following simple Proposition. 

We have learnt only recently that this result originally goes back to \cite{FF} \footnote{We are grateful to Barry Simon for this reference.}. Since the argument is short we include a sketch below. For details we refer to \cite{FF}.

\begin{prop}
Let $\lambda$ such that $c(z)$ has a real root. Then the ac spectrum ($\Sigma_{ac}$) is empty.
\end{prop}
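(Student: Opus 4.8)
The plan is to show that a real zero of $c(z)$ forces the transfer matrix $B^E$ to be unbounded in $L^2$-averaged sense near that point in a way incompatible with the existence of absolutely continuous spectrum. Recall that by a standard characterization (going back to Last--Simon and Kotani theory), $\Sigma_{ac}$ is the essential support of the absolutely continuous part of the density of states, and for ergodic Jacobi operators it is governed by the set $\{E : L(\beta, B^E) = 0\}$ together with integrability properties of the transfer matrices. More precisely, by the Ishii--Pastur--Kotani theorem for Jacobi matrices, $\Sigma_{ac}$ is (up to sets of measure zero) contained in $\{E : L(\beta,B^E)=0\}$; but vanishing of the Lyapunov exponent is measured relative to $B^E = A^E/c$, and this is precisely where a real zero of $c$ enters.

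First I would note that $L(\beta, B^E) = L(\beta, A^E) - I(\lambda)$ by (\ref{eq_relAB}), and that when $c(x_0)=0$ for some $x_0 \in \mathbb{T}$, the quantity $I(\lambda) = \int_\mathbb{T} \log|c(x)|\,\ud x$ is still finite (the singularity is logarithmic hence integrable), so $L(\beta,B^E)$ is well-defined. The key point is that $c$ having a real zero is, by Observation \ref{obs_cfun}, a codimension-one phenomenon: it happens exactly on $\{\lambda_1=\lambda_3, 2\lambda_3 \geq \lambda_2\}$ or $\{\lambda_1+\lambda_3=\lambda_2\}$. Second, I would invoke Kotani theory in the form adapted to singular Jacobi cocycles (as developed e.g. in the references the paper cites for the Jacobi/Thouless formula): the absolutely continuous spectrum coincides with the essential closure of $\{E : L(\beta,B^E)=0\}$ only when the cocycle is non-singular; in the singular case, a real zero of $c$ produces an obstruction because the $m$-functions (Weyl solutions) fail the requisite boundedness. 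Concretely, one shows that if $c(x_0)=0$ then for a.e.\ phase $\theta$ the corresponding half-line operator has a spectral measure that is purely singular, because the transfer matrix cocycle $B^E$ has a genuine pole on the orbit closure, and Oseledets/Kotani regularity breaks down on a set of positive measure of phases — forcing $\Sigma_{ac}=\emptyset$ by the ergodic decomposition.

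The cleanest route, and the one I expect the paper takes, is via the Thouless/Kotani formula together with the observation that $\Sigma_{ac}$ being nonempty would require $L(\beta,B^E)=0$ on a positive-measure set of energies $E$, which by the already-established formula (Theorem \ref{prop_LE}, i.e.\ $L(\beta, B^E) = \max\{\Delta - \text{(something)}, 0\}$ via $L(\beta,A^E)=\max\{I(\lambda), L(\beta,M)\}$ on the spectrum) pins down where $L$ vanishes; but the real obstruction is that even where $L(\beta,B^E)=0$, the presence of a real singularity of $c$ means the cocycle is \emph{not} $L^2$-conjugate to a rotation (the conjugacy would have a pole), and Kotani theory then gives vanishing of $\Sigma_{ac}$. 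Alternatively — and perhaps more self-containedly — one uses the result of Simon that for Jacobi matrices, whenever the off-diagonal $c$ vanishes at a point reached by the orbit (which it is, for irrational $\beta$, by density), the resulting operator decouples infinitely often in a way that destroys absolutely continuous spectrum, a Gordon-type argument.

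The main obstacle will be making precise the link between a real zero of the analytic function $c$ and the failure of Kotani's criterion: one must argue that the set of phases $\theta$ for which the half-line spectral measure is singular has full measure. Since $\beta$ is irrational, the orbit $\{\theta + \beta k\}$ is dense but the zero $x_0$ of $c$ is hit only in the limit, not exactly; so the argument cannot be a naive "the coupling literally vanishes along the orbit" statement. Instead one needs the quantitative version: $\log|c(\theta+\beta k)|$ is not bounded below along the orbit for a.e.\ $\theta$ (indeed $\liminf_k |c(\theta+\beta k)| = 0$ a.e.), and this near-vanishing, occurring infinitely often with no Diophantine control on the rate, is enough to preclude the bounded-solution behavior that $\Sigma_{ac}$ requires — this is where a Gordon/Last-type two-block or trace argument does the work. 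I would structure the sketch around citing \cite{FF} for the detailed estimate and giving only this conceptual skeleton, exactly as the excerpt signals ("Since the argument is short we include a sketch below").
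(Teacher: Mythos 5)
There is a gap: you circle the correct neighborhood but never commit to, or correctly articulate, the one step that actually does the work. The paper's proof is a three-line trace-class argument: (1) a Jacobi matrix whose off-diagonal entries vanish along an infinite subsequence decouples into finite-dimensional blocks and hence has pure point spectrum; (2) if $c(x_0)=0$ for a real $x_0$, then for a.e.\ $\theta$ the orbit $\{\theta+\beta k\}$ returns arbitrarily close to $x_0$, so one can extract a sparse subsequence $(k_j)$ with $\sum_j \abs{c(\theta+\beta k_j)}<\infty$; setting those entries to zero is therefore a trace-class perturbation of $H_{\theta}$ by a decoupled block operator; (3) the Kato--Rosenblum theorem (invariance of the a.c.\ part under trace-class perturbations) gives $\Sigma_{ac}=\emptyset$. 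Your closing worry that the argument needs quantitative control because there is ``no Diophantine control on the rate'' is exactly the symptom of the missing idea: no rate is needed, since summability is achieved simply by passing to a sufficiently sparse subsequence of near-returns, which exists as soon as $\liminf_k\abs{c(\theta+\beta k)}=0$. You do write the words ``trace argument'' and cite \cite{FF}, but only as a deferral; the decisive mechanism (summable subsequence $\Rightarrow$ trace-class perturbation of a pure-point operator) is never stated.

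The routes you actually develop do not substitute for this. Kotani theory identifies $\Sigma_{ac}$ with the essential closure of $\{E: L(\beta,B^E)=0\}$, so to conclude $\Sigma_{ac}=\emptyset$ that way you would need $L>0$ for a.e.\ $E$ --- which is false precisely in the regimes of interest (regions II and III have $L=0$ on the spectrum), and the claim that a real zero of $c$ by itself breaks the Kotani deformation/$L^2$-conjugacy argument is asserted, not proved. Likewise, Gordon-type arguments exclude \emph{point} spectrum, not absolutely continuous spectrum, so they point in the wrong direction here. The correct skeleton is entirely elementary and needs none of this machinery.
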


\begin{proof}
We reduce to the following toy problem: Let 
\begin{equation*} 
(H \psi)_n := v_n \psi_{n} + c_n \psi_{n+1} + \overline{c_{n-1}} \psi_{n-1} ~\mbox{,} 
\end{equation*}
a Jacobi operator on $l^2(\mathbb{Z})$ and suppose $(c_n)_{n \in \mathbb{N}}$ is zero infinitely often. Then, the Jacobi matrix decouples into finite dimensional blocks whence the
spectrum is only pp. 

Suppose now $c(z)$ has a real zero, say at $x_0 \in \mathbb{T}$ and fix an irrational $\beta$. By ergodicity, for a.e. phase $\theta$ the rotational
trajectory comes arbitrarily close to $x_0$. Thus for such $\theta$ the operator  $H_{\lambda,\theta;\beta}$ can be considered a trace class perturbation of  above block Jacobi matrix. Invariance of the ac spectrum under trace class perturbations yields the claim.
\end{proof}

Using Observation \ref{obs_cfun}, we obtain a glimpse on the spectral theory for Extended Harper's equation:
\begin{coro}
Let $\beta$ be irrational. Then,
\begin{itemize}
\item[(i)] For $\lambda_3 \neq \lambda_1$, $\Sigma_{ac} = \emptyset$ whenever $\lambda_1 + \lambda_3 = \lambda_2$.
\item[(ii)] For $\lambda_3 = \lambda_1$, $\Sigma_{ac} = \emptyset$ along $2 \lambda_3 \geq\lambda_2$.
\end{itemize}
\end{coro}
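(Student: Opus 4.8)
The plan is to derive this corollary directly from the preceding Proposition together with Observation~\ref{obs_cfun}, which classifies exactly when the function $c_\lambda(z)$ has a real root. The Proposition asserts that whenever $c_\lambda(z)$ vanishes at some real $x_0\in\mathbb{T}$, the absolutely continuous spectrum $\Sigma_{ac}$ of $H_{\lambda,\theta;\beta}$ is empty for a.e.\ phase $\theta$. Hence all that remains is to translate the conditions ``$c_\lambda$ has a real root'' given in Observation~\ref{obs_cfun} into the two parameter regimes claimed in the corollary.

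First I would recall from Observation~\ref{obs_cfun} that the necessary conditions for $c(z)$ to possess a real zero are $\lambda_1=\lambda_3$ or $\lambda_1+\lambda_3=\lambda_2$, and that these become sufficient in the precise forms: if $\lambda_1\neq\lambda_3$, a real root occurs if and only if $\lambda_1+\lambda_3=\lambda_2$; if $\lambda_1=\lambda_3$, a real root occurs if and only if $2\lambda_3\geq\lambda_2$. Feeding the first of these into the Proposition gives statement (i): for $\lambda_3\neq\lambda_1$, $\Sigma_{ac}=\emptyset$ precisely on the plane $\lambda_1+\lambda_3=\lambda_2$. Feeding the second in gives statement (ii): for $\lambda_3=\lambda_1$, $\Sigma_{ac}=\emptyset$ along $2\lambda_3\geq\lambda_2$.

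One small point worth addressing carefully is the passage from ``$\Sigma_{ac}=\emptyset$ for a.e.\ $\theta$'' to a statement about the operator family as a whole: since the $H_{\lambda,\theta;\beta}$ form an ergodic family, the set $\Sigma_{ac}$ is a.s.\ a fixed deterministic set (independent of $\theta$ on a set of full measure), so the Proposition's ``a.e.\ $\theta$'' conclusion is already the statement about this common $\Sigma_{ac}$. I would invoke this standard ergodicity fact explicitly. No genuine obstacle arises here; the corollary is essentially a bookkeeping consequence, and the only thing to get right is the matching of cases between Observation~\ref{obs_cfun} and the hypotheses of the Proposition. The proof is therefore two or three lines: cite Observation~\ref{obs_cfun} for the location of real zeros of $c_\lambda$, cite the Proposition to conclude emptiness of $\Sigma_{ac}$, and split into the cases $\lambda_1\neq\lambda_3$ and $\lambda_1=\lambda_3$.
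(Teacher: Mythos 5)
Your proposal is correct and is exactly the paper's argument: the corollary is obtained by combining the preceding Proposition (real zero of $c_\lambda$ implies $\Sigma_{ac}=\emptyset$) with the case analysis of real roots in Observation \ref{obs_cfun}. The extra remark about ergodicity making $\Sigma_{ac}$ a deterministic set is standard and implicit in the paper's statement.
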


Above example shows that, in general, ARC fails if $c(z)$ has zeros. We do however believe that the conjecture is still true whenever $\det A_\epsilon^E(x)$ is bounded away from zero: 

\begin{conj}[ARC for Jacobi cocycles]
For an analytic almost periodic Jacobi cocycles with determinant bounded away from zero, sub-critical behavior of $(\beta, A_\epsilon^E)$ implies (purely) absolutely continuous spectrum.
\end{conj}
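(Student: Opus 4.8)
The plan is to reduce the Jacobi problem to the analytic $SL(2,\mathbb{C})$ setting, where the almost reducibility program has been carried out \cite{B,V,W,X}, and then transport the spectral conclusion back to $H_{\theta;\beta}$. Since $\det A_\epsilon^E(x)$ is bounded away from zero on a strip $\mathbb{T}_\delta$, Lemma \ref{lem_root} lets us form the renormalized cocycle $(A^E)'=A^E/\sqrt{\det A^E}\in\mathcal{C}^\omega(\mathbb{R}/2\mathbb{Z},SL(2,\mathbb{C}))$ of Sec. \ref{sec_jacobico}, which carries exactly the dynamics of $(\beta,B^E)$ since $L(\beta,(A_\epsilon^E)')=L(\beta,B_\epsilon^E)$. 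By Lemma \ref{lemma_1} and the affine relation in its proof, sub-criticality of $(\beta,A_\epsilon^E)$ — namely $L(\beta,B^E)=0$ together with regularity of $(\beta,A_\epsilon^E)$ near $\epsilon=0$ — is equivalent to $(\beta,(A^E)')$ having zero Lyapunov exponent and being regular, i.e.\ subcritical in the sense of \cite{B}. The only bookkeeping needed here is the harmless passage to the doubled torus $\mathbb{R}/2\mathbb{Z}$, which rescales the Lyapunov exponent by a factor $2$ (cf.\ the remark after (\ref{eq_Dpr})).

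First I would invoke Avila's theorem that a subcritical analytic $SL(2,\mathbb{C})$ cocycle over an irrational rotation is almost reducible, applied to $(\beta,(A^E)')$ at every $E\in\Sigma$ (where sub-criticality is assumed). This produces almost reductions controlled uniformly on compact energy windows, with the usual polynomial bounds on the conjugations over shrinking strips. Once the renormalization above is in place, this step is essentially a citation; one must only check that the uniformity of Avila's theory descends to continuous dependence of the almost reductions on $E\in\Sigma$ in the Jacobi normalization.

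The substantive work is to convert almost reducibility of $(\beta,(A^E)')$ into purely absolutely continuous spectrum of $H_{\theta;\beta}$. For Schr\"odinger operators this is the almost-reducibility-implies-absolutely-continuous-spectrum implication underlying the resolution of the ARC; the difficulty in the Jacobi case is that $(\beta,(A^E)')$ is not literally the transfer cocycle of $H_{\theta;\beta}$, but differs from the true transfer cocycle $B^E$ by the scalar factor $c(x)/\sqrt{\det A^E(x)}$. Since $\det A^E$, hence $c$ on a sub-strip, is bounded away from zero, this factor is analytic, non-vanishing and bounded, and $\log\abs{c}\in L^1(\mathbb{T})$ is in fact continuous. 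I would use this to gauge $H_{\theta;\beta}$ — via an analytic diagonal conjugacy together with the winding-number decomposition of the zero-free function $c$ on $\mathbb{T}$ (cf.\ Observation \ref{obs_cfun}) and a unitary gauge transformation on $l^2(\mathbb{Z})$ — into a (trivially rescaled) Jacobi operator whose transfer cocycle is almost reducible, and then develop the Jacobi analog of the absolutely-continuous-spectrum argument: estimate the spectral measures $\mu_\theta$ from the boundary behavior of the conjugated resolvent exactly as in the Schr\"odinger case, tracking the bounded scalar correction, which contributes only an a.e.-finite bounded term to all relevant integrals.

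The hard part will be this last step. At present there is no Jacobi version of the almost-reducibility-implies-absolutely-continuous-spectrum implication, and one must either adapt the KAM/renormalization scheme behind \cite{B} to $GL(2,\mathbb{C})$ cocycles of the form (non-vanishing scalar)$\,\times\,$(almost reducible $SL(2,\mathbb{C})$), or else set up a quantitative self-duality formalism for Jacobi operators paralleling \cite{D,E,EE} and run that argument through. A secondary obstacle is that sub-criticality is required (or must be verified) for a.e.\ $E\in\Sigma$ simultaneously, with the accompanying control uniform enough to integrate over the spectrum. These are precisely the gaps that prevent the statement from being a theorem, and any honest proof proposal has to isolate them rather than paper over them.
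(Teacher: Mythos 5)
The statement you are addressing is stated in the paper as a \emph{conjecture}, not a theorem: the paper offers no proof, and the surrounding discussion only establishes the negative direction (that the ARC \emph{fails} for Jacobi cocycles whose determinant vanishes on $\mathbb{T}$, via the extended Harper example $\lambda_1=0$, $\lambda_3=\lambda_2\geq 1$). So there is nothing in the paper to compare your argument against, and your own closing paragraph correctly diagnoses the situation: what you have written is a reduction scheme, not a proof.

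That said, two points in your outline deserve sharper scrutiny. First, your opening step invokes ``Avila's theorem that a subcritical analytic $SL(2,\mathbb{C})$ cocycle over an irrational rotation is almost reducible.'' Within the framework of this paper that is not an available theorem but is essentially the (Schr\"odinger) almost reducibility conjecture itself, cited in \cite{B,V,W,X}; so even the first half of your chain rests on an unproved input, not merely the last half. Second, the renormalized cocycle $(A^E)'=A^E/\sqrt{\det A^E}$ is genuinely $SL(2,\mathbb{C})$-valued and lives over the doubled circle $\mathbb{R}/2\mathbb{Z}$ (which changes the base rotation, so this is not merely ``harmless bookkeeping'' for reducibility statements), whereas the almost-reducibility-to-ac-spectrum machinery for Schr\"odinger operators uses the $SL(2,\mathbb{R})$/self-adjoint structure in an essential way (subordinacy, $m$-function estimates, Kotani theory). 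Your proposed gauge by the non-vanishing scalar $c/\sqrt{\det A^E}$ together with a winding-number decomposition is the right place to look for restoring a real structure, but as written it is an announcement of what must be done rather than an argument. In short: your reduction via Lemma \ref{lem_root} and Lemma \ref{lemma_1} is consistent with how the paper sets up the non-singular Jacobi case, but the statement remains open, and the gaps you flag are the entire content of the conjecture.
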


\end{enumerate}

\appendix

\section{} \label{app_1}

\subsection{Convexity of $L(\beta, D_{\epsilon})$ w.r.t. $\epsilon$}
\begin{prop} \label{prop_convlyap}
Let $(\beta,D)$ be an analytic cocycle in the sense of Definition \ref{def_anacoc}. Then, $L(\beta, D_{\epsilon})$ is convex in $\epsilon$.
\end{prop}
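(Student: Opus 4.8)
The plan is to establish convexity of $\epsilon \mapsto L(\beta, D_\epsilon)$ by exhibiting $L(\beta, D_\epsilon)$ as a limit of an infimum of affine functions of $\epsilon$, which is automatically convex. The starting point is the infimum formula in (\ref{eq_ledef1}),
\begin{equation}
L(\beta, D_\epsilon) = \inf_{n \in \mathbb{N}} \frac{1}{n} \int_{\mathbb{T}} \log \norm{D_\epsilon^{(n)}(x)} \ud x = \inf_{n \in \mathbb{N}} u_n(\epsilon) ~\mbox{,}
\end{equation}
where $u_n(\epsilon) := \frac{1}{n} \int_{\mathbb{T}} \log \norm{D^{(n)}(x + i\epsilon)} \ud x$. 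Since an infimum of convex functions need not be convex, this alone is not enough; instead I would show that \emph{each} $u_n$ is convex in $\epsilon$, and then use that a pointwise infimum of convex functions, while possibly non-convex, has the property that $L = \inf_n u_n$ combined with the fact that $L$ is itself the decreasing limit $L = \lim_n u_n$ (by subadditivity the sequence $n \mapsto n u_n$ is subadditive, so $u_n \to L$). Actually the cleanest route: each $u_n$ is convex, and $L(\beta, D_\epsilon) = \lim_{n\to\infty} u_n(\epsilon)$ pointwise; a pointwise limit of convex functions is convex. This handles the whole statement once convexity of $u_n$ is in hand.

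So the heart of the matter is convexity of $\epsilon \mapsto \int_{\mathbb{T}} \log \norm{D^{(n)}(x+i\epsilon)} \ud x$ for a fixed analytic matrix-valued function $G(z) := D^{(n)}(z)$ holomorphic on a strip $\mathbb{T}_\delta$ (with $G \not\equiv 0$; if $G \equiv 0$ the integral is $-\infty$ identically and convexity is trivial in the extended sense, or one notes $D \not\equiv 0$ forces $D^{(n)}\not\equiv 0$ is false in general — but then $L = -\infty$ constantly, still convex). The key fact is that for a holomorphic matrix-valued $G$, the function $z \mapsto \log \norm{G(z)}$ is subharmonic on $\mathbb{T}_\delta$: indeed $\log \norm{G(z)} = \sup_{\abs{v} = \abs{w} = 1} \log \abs{\langle w, G(z) v\rangle}$ (for the operator norm induced by the Euclidean norm, say), a supremum of functions $\log\abs{g(z)}$ with $g$ holomorphic, each of which is subharmonic; an upper-semicontinuous supremum of subharmonic functions is subharmonic. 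Integrating a subharmonic function over horizontal circles $x \mapsto x + i\epsilon$ yields a convex function of the height $\epsilon$ — this is the standard fact that $\epsilon \mapsto \int_{\mathbb{T}} \varphi(x+i\epsilon)\, \ud x$ is convex whenever $\varphi$ is subharmonic (one can see it from the sub-mean-value inequality, or from $\partial_x^2 + \partial_\epsilon^2 \geq 0$ in the distributional sense together with periodicity killing the $\partial_x^2$ contribution after integration, exactly as in the computation (\ref{eq_remark}) but with an inequality in place of equality).

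I expect the main obstacle to be a purely technical one: justifying the integration/differentiation and the subharmonicity argument in the presence of singularities, i.e. when $\det D$ (hence $\det D^{(n)}$) vanishes or when $D^{(n)}(z)$ itself has zeros on the strip, so that $\log\norm{D^{(n)}(x+i\epsilon)}$ can take the value $-\infty$ on a set of positive measure or the integral could a priori fail to be locally integrable in $\epsilon$. This is handled by noting that $\log\norm{G}$, being subharmonic and not identically $-\infty$, is locally integrable, so $\int_{\mathbb{T}} \log\norm{G(x+i\epsilon)}\,\ud x$ is a well-defined (possibly $-\infty$) convex function on the open interval of admissible $\epsilon$; the convexity inequality $u_n(t\epsilon_0 + (1-t)\epsilon_1) \leq t\, u_n(\epsilon_0) + (1-t)\, u_n(\epsilon_1)$ follows directly from applying the sub-mean-value property of the subharmonic function $\log \norm{G}$ on each vertical segment and Fubini, with no need for second derivatives. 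Passing to the limit $n \to \infty$ then gives convexity of $L(\beta, D_\epsilon)$, completing the proof.
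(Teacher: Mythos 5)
Your proposal is correct and follows essentially the same route as the paper: reduce to convexity of $\frac{1}{n}\int_{\mathbb{T}}\log\norm{D^{(n)}(x+i\epsilon)}\,\ud x$ for each fixed $n$, deduce this from subharmonicity of $\log\norm{G}$ for holomorphic matrix-valued $G$ together with the fact that horizontal means of subharmonic functions are convex in the height, and conclude since pointwise limits preserve convexity. The only divergence is cosmetic: you obtain subharmonicity via the representation of the operator norm as an upper-semicontinuous supremum of $\log\abs{\langle w, G(z)v\rangle}$, whereas the paper uses the Hilbert--Schmidt norm and verifies $\partial_{\overline{z}}\partial_z\log\bigl(\sum_j\abs{f_j}^2+\eta\bigr)\geq 0$ directly for a regularization $\eta>0$ before letting $\eta\to 0^+$.
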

\begin{proof}
The main point here is to convince the reader of convexity for {\em{singular}} analytic cocycles. For $D \in \mathcal{C}^\omega(\mathbb{T}, SL(2,\mathbb{C})$ (or more generally, for $D \in \mathcal{B}^\omega(\mathbb{T},M_2(\mathbb{C}))$), it is immediate that $L(\beta, D_{\epsilon})$ is subharmonic w.r.t. $\epsilon$ (by e.g. a Craig-Simon type argument \cite{CS}), in which case convexity easily follows, as pointed out in \cite{B}. We mention that the argument presented here can also be generalized to analytic cocycles on $\mathbb{T}^d$, $d \geq 1$.

Since convexity is preserved under pointwise limits, by the definition of the LE, (\ref{eq_defle_limit}), it suffices to verify that 
\begin{equation}
\frac{1}{n} \int_\mathbb{T} \log \norm{D^{(n)}(x+i\epsilon)} \ud x ~\mbox{,}
\end{equation}
is a convex function in $\epsilon$ for each $n \in \mathbb{N}$.

Considering $M_2(\mathbb{C})$ as equipped with Hilbert-Schmidt norm, this readily follows from:
\begin{lemma} \label{lem_subh}
Let $\{f_j\}_{j=1}^{N}$ be a finite sequence of functions holomorphic in some neighborhood of $\mathbb{S}^1$, and set $h_\infty:=\log \sum_{j=1}^{N} \abs{f_j}^2$.
Then, $J(\epsilon):=\int_\mathbb{T} h_\infty(\mathrm{e}^{2 \pi i x} \mathrm{e}^{-2\pi \epsilon}) \ud x$ is convex in $\epsilon$.
\end{lemma}
\begin{proof}
Follows by a simple approximation argument: For $\eta>0$, let \newline $h_\eta:=\log \left( \sum_{j=1}^{N} \abs{f_j}^2 + \eta \right)$ and correspondingly, $J_\eta(\epsilon):= \int_\mathbb{T} h_\eta(\mathrm{e}^{2 \pi i x} \mathrm{e}^{-2\pi i \epsilon}) \ud x$. Clearly, $J_\eta \searrow J_\infty$, as $\eta \to 0^+$, whence the claim is reduced to $J_\eta$. 

Direct computation shows that $h_\eta$ is a smooth, subharmonic function in a neighborhood of $\mathbb{S}^1$. 
\footnote{
If $\{g_j\}$ is a finite sequence of holomorphic functions with $\sum_j \abs{g_j(z)}^2 \neq 0$, one computes:
\begin{eqnarray}
\dfrac{\partial^2}{\partial \overline{z} \partial z} \log \sum_{j} \abs{g_j(z)}^2 & = & \frac{1}{\sum_{j=1}^{N} \abs{g_j}^2} \left\{\sum_{i \neq k} \abs{g_i^\prime}^2 \abs{g_k}^2 - 2 
\re \left[ \sum_{i < k} g_i^\prime \overline{g_i} g_k \overline{g_k^\prime} \right]\right\} \\
& \geq & \frac{1}{\sum_{j=1}^{N} \abs{g_j}^2} \sum_{i < k} \left( \abs{g_i^\prime} \abs{g_k} - \abs{g_k^\prime} \abs{g_i} \right)^2 \geq 0~\mbox{.}
\end{eqnarray}
}

Finally, considering $\epsilon$ as a {\em{complex}} variable, the identity
\begin{equation}
\dfrac{\partial^2}{\partial \overline{\epsilon} \partial \epsilon} = 4 \pi^2 \mathrm{e}^{-4\pi \re \epsilon}  \dfrac{\partial^2}{\partial \overline{z} \partial z} ~\mbox{,}
\end{equation}
implies that $J_\eta$ is a {\em{smooth}}, subharmonic function in $\epsilon$ which is constant in $\im \epsilon$, i.e. convex in $\re \epsilon$.
\end{proof}
\begin{remark}
Since subharmonicity is preserved under monotone decreasing limits, the proof of Lemma \ref{lem_subh} also shows that $J_\eta(\epsilon)$ and hence $L(\beta, D_{\epsilon})$ is subharmonic in $\epsilon$.
\end{remark}

\end{proof}

\section{Proof of Lemma \ref{lem_root}} \label{app_lemmaproof}
\begin{description}
\item[Step1] We first establish the statement for trigonometric polynomials, where $g$ can be computed explicitly. Since,
\begin{equation}
\sum_{j=-n}^{n} c_j \mathrm{e}^{2 \pi i j x} = \mathrm{e}^{- 2 \pi i n x} \sum_{j=0}^{2 n} c_{j-n} \mathrm{e}^{2 \pi i j x} ~\mbox{,}
\end{equation}
it suffices to consider trigonometric polynomials which are monic algebraic polynomials in the variable $\mathrm{e}^{2 \pi i x}$ of degree $r \geq1$. In particular, $f$ can be factorized according to its roots (counting multiplicity),
\begin{equation}
f =  \prod_{j=1}^{r} \left( \mathrm{e}^{2 \pi i x} - \mathrm{e}^{2 \pi i (x_j + i \epsilon_j)}   \right) ~\mbox{,}
\end{equation}
where for some $\delta>0$, $x_j \in [0,1)$ and $\abs{\epsilon_j} > \delta>0$, $1 \leq j \leq r$. 

Since $f$ is holomorphic and bounded from zero on the strip $\abs{\im z} \leq \delta$, there exists a holomorphic function $q$ defined on $\abs{\im z} < \delta$ such that 
\begin{equation} \label{eq_root3}
\mathrm{e}^q = f ~\mbox{;}
\end{equation}
in particular, $q$ satisfies
\begin{equation}
q^\prime(z) = \dfrac{f^\prime(z)}{f(z)} ~\mbox{,} ~ \abs{\im(z)} < \delta ~\mbox{.}
\end{equation}
On the other hand, for $x \in \mathbb{R}$,
\begin{equation} \label{eq_root1}
\dfrac{f^\prime(x)}{f(x)} = \sum_{j=1}^{r} \dfrac{ 2 \pi i \mathrm{e}^{2 \pi i x} }{\mathrm{e}^{2 \pi ix} - \mathrm{e}^{2 \pi i (x_j + i \epsilon_j)}} ~\mbox{.}
\end{equation}

Considering the individual summands in (\ref{eq_root1}), we compute for $1 \leq j \leq r$ and $x \in \mathbb{R}$
\begin{equation} \label{eq_root2}
\dfrac{2 \pi i \mathrm{e}^{2 \pi ix}}{\mathrm{e}^{2 \pi ix} - \mathrm{e}^{2 \pi i (x_j + i \epsilon_j)}} = \begin{cases} 
2 \pi i \sum_{n=0}^{\infty} \mathrm{e}^{2 \pi i x_j n} \mathrm{e}^{- 2 \pi n \epsilon_j} \mathrm{e}^{- 2 \pi i n x} & \mbox{if} ~ \epsilon_j>0 ~\mbox{,} \\
- 2 \pi i \sum_{n=1}^{\infty} \mathrm{e}^{- 2 \pi i n x_j} \mathrm{e}^{2\pi \epsilon_j n} \mathrm{e}^{2 \pi i nx} & \mbox{if} ~\epsilon_j < 0 ~\mbox{.}  \end{cases}
\end{equation}

Hence, for $x\in \mathbb{R}$, $q$ is given by,
\begin{equation} \label{eq_root4}
q(x) - q(0) = \int_{0}^{x} \dfrac{f^\prime(t)}{f(t)} \ud t = 2 \pi i n_+ x + h(x) ~\mbox{,}
\end{equation}
where $n_+$ represents the number of $j \in \{1, \dots, n\}$ with $\epsilon_j >0$, and $h \in \mathcal{C}_\delta^\omega(\mathbb{R}/\mathbb{Z}; \mathbb{C})$. Notice by construction, $q(0) = \log f(0)$, where the branch of the $\log$ is fixed by (\ref{eq_root3}). 

Let now $g(z): = \mathrm{e}^{\frac{1}{2} q(z)}$, which by construction is holomorphic on $\abs{\im(z)} < \delta$. For $x \in \mathbb{R}$, the explicit expression in (\ref{eq_root4}) already implies that  $g(x)$ is 2-periodic satisfying $g^2(x) = f(x)$. Upon use of the uniqueness theorem for holomorphic functions, both these properties extend to all of $\abs{\im(z)} < \delta$.
\item[Step 2] If $f \in \mathcal{C}_\delta^\omega(\mathbb{R}/\mathbb{Z}; \mathbb{C})$ is not a trigonometric polynomial, let $\delta>0$ such that $f(z) \neq 0$, $\forall z \in \mathbb{T}_\delta$. Approximate $f$ uniformly on $\mathbb{T}_\delta$ by trigonometric polynomials $p_n$. 

From step 1, we obtain holomorphic $q_n$ on $\abs{\im(z)} < \delta$ such that
\begin{eqnarray}
g_n:=\mathrm{e}^{\frac{1}{2} q_n} \in \mathcal{C}_\delta^\omega(\mathbb{R}/2\mathbb{Z}; \mathbb{C}) ~\mbox{,} ~g_n^2 = p_n ~\mbox{and} \\ 
q_n(0) = \log p_n(0) \to \log f(0) ~\mbox{,} \label{eq_root5}
\end{eqnarray}
defined (eventually) with respect to a common branch of the log since $f(0) \neq 0$ and $p_n \to f$. 

Since $q_n$ satisfies the differential equation $q_n^\prime = \frac{p_n^\prime}{p_n}$ on $\abs{\im(z)} < \delta$, and by construction $p_n \to f$ uniformly with $f$ bounded from zero on $\mathbb{T}_\delta$, $(q_n^\prime)$ is easily seen to be uniformly Cauchy. Using (\ref{eq_root5}), this in turn implies uniform convergence of $(q_n)$ on $\mathbb{T}_\delta$.

Finally, letting $q:=\lim_{n \to \infty} q_n$ and defining $g:=\mathrm{e}^{\frac{1}{2} q}$ we obtain the lemma's claim.
\end{description}

\section{Proof of Lemma \ref{lemma_regunif}} \label{app_3}

The proof for the cocycle $(\beta,(A^{E})^{\prime})$ will at the same time prove the statement for $(\beta, B^{E})$.
Again, the function $\log{\abs{c(z)}}$ is harmonic on the strip $\abs{\im{z}} < \delta$, some $\delta > 0$. Harmonicity and Birkhoff's ergodic theorem imply that
$\forall x \in \mathbb{T}$
\begin{equation} \label{eq_lem1}
\dfrac{1}{n} \log \left \vert \prod_{j=0}^{n-1} \dfrac{c(x+ j \beta)}{\abs{\det A^{E}(x+j \beta)}^{1/2}} \right \vert \to 0 ~\mbox{,}
\end{equation}
as $n \to \infty$.

Suppose $(\beta, (A^{E})^{\prime})$ is uniformly hyperbolic for $E \in
\Sigma$. According to Definition \ref{def_unifhyp}, let
$u_{+}(x)$/$s_{+}(x)$ be the unstable/stable directions for the positive half-line solutions of
$H_{x} \psi = E \psi$, $x \in \mathbb{T}$. Note that uniform hyperbolicity on the positive half line implies uniform hyperbolicity on the negative half-line with $u_{+}=s_{-}$ and $u_{-}=s_{+}$. 

Let $w:=(\psi(0),\psi(-1)) \in \mathbb{C}^{2}, ~\norm{w} = 1$, be an arbitrary initial condition. For $x \in \mathbb{T}$ arbitrary, choose unit vectors $u_{x} \in u_{+}(x)$ and 
$s_{x} \in u_{-}(x)$. Then, $w=\beta_{x} u_{x} + \alpha_{x} s_{x}$ for some complex $\alpha_{x}, \beta_{x}$. Without loss of generality, we may assume $\beta_{x} \neq 0$ (otherwise apply the argument for iteration in the negative direction, since then $v=\alpha_{x} s_{x} \in u_{-}(x)$). 

Using (\ref{eq_lem1}),
\begin{equation*}
\frac{1}{n} \log \norm{ \prod_{j=n-1}^{0} (A^{E})^{'}(x+ j \beta)} = \frac{1}{n} \log \norm{\prod_{j=n-1}^{0} B^{E}(x+ j \beta)} + o(1) ~\mbox{, as $n \to \infty$.}
\end{equation*} 

Therefore, uniform hyperbolicity of $(\beta, (A^{E})^{\prime})$ implies that the solution of $H_{x} \psi = E \psi$ associated with the initial condition $w$ increases
exponentially in at least one direction. $w$ and $x$ were chosen arbitrarily, hence $\forall x \in \mathbb{T}$, $E$ is not a generalized eigenvalue of $H_{x}$. 
Since uniform hyperbolicity is an open condition, using above argument again, $\forall x \in \mathbb{T}$, E cannot be a limit point of generalized eigenvalues
of $H_{x}$ either. Hence, we obtain $E$ is not in the spectrum, a contradiction.

\section{Upper-semicontinuity of the LE; Proof of Theorem \ref{thm_uscle}} \label{app_uscle}
Since all matrix norms are equivalent, following it is convenient to equip $M_2(\mathbb{C})$ with the Hilbert-Schmidt norm. Note that Remark \ref{rem_counter} also applies to $\log_{+}$ and continuous $M_2(\mathbb{C})$-cocycles, i.e. for every $n \in \mathbb{N}_0$, 
$(\beta, D) \mapsto \int_{\mathbb{T}} \log_{+} \norm{D^{(n)}(x)} \ud x$ is continuous in $\mathcal{C}^0(\mathbb{T},M_2(\mathbb{C}))$.

On the other hand, by Fatou's lemma, $(\beta, D) \mapsto \int_{\mathbb{T}} \log_{-} \norm{D^{(n)}(x)} \ud x$ is upper-semicontinuous. Recalling (\ref{eq_defle_limit}), the statement of the theorem follows since the infimum over upper-semicontinuous functions is upper-semicontinuous.

\section{Proof of Lemma \ref{fact_conv}} \label{app_convlemma}
For $K \subset (0,1)$ compact, let $m:=\min K$ and $M:=\max K$. Fix $f \in \mathcal{F}$. By basic properties of convex functions, 
\begin{equation}
D_+(g)(M) \leq \dfrac{g(M+h) - g(M)}{h} ~\mbox{,}
\end{equation}
for any $g \in \mathcal{F}$ and $0<h<1-M$. 

In particular,
\begin{equation}
\inf_{\delta>0} \sup_{\norm{g-f}_{[0,1]}<\delta} D_+(g)(M) \leq \dfrac{f(M+h) - f(M)}{h} ~\mbox{,}
\end{equation}
for any $0<h<1-M$. Thus, taking $h\to0+$ we obtain:
\begin{equation}
\inf_{\delta>0} \sup_{\norm{g-f}_{[0,1]}<\delta} D_+(g)(M) \leq D_+(f)(M) ~\mbox{.}
\end{equation}

A similar argument shows,
\begin{equation}
D_-(f)(m) \leq \sup_{\delta>0} \inf_{\norm{g-f}_{[0,1]}<\delta} D_-(g)(m) ~\mbox{.}
\end{equation}

Recalling that $D_{\pm}(g)$ is monotone increasing on $[m,M]$, we obtain the claim.

\section{Duality} \label{app_dualityconj}

In view of the alternative proof of Theorem \ref{thm_mainresult}, we establish the missing link in Theorem \ref{prop_regionIII}:
\begin{theorem} \label{conj_duality}
Throughout region II, $L(\beta, B^{E})=0$ for all $E$ in the spectrum and irrational $\beta$.
\end{theorem}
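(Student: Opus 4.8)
The plan is to deduce the theorem from spectral localization in the dual region~I, following the Aubry-duality scheme of \cite{E,EE}, and then to remove the arithmetic hypothesis on $\beta$ by a soft approximation argument resting on Theorem~\ref{thm_conti_sing}. Recall the duality map $\sigma$ on the coupling space $\lambda=(\lambda_1,\lambda_2,\lambda_3)$ (see \cite{D,EE} for its explicit form), which carries the interior of region~II into the interior of region~I and, at the operator level, conjugates $H_{\lambda,\theta;\beta}$ to $H_{\sigma(\lambda),\cdot;\beta}$ up to a rescaling $E\mapsto\lambda_2^{-1}E$ of the energy (this underlies relation (\ref{eq_finalproof})). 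The first reduction is that it suffices to establish $L(\beta,B_\lambda^{E})=0$ for $E\in\Sigma$ when $\beta$ is Diophantine and $\lambda$ lies in the \emph{interior} of region~II. Indeed, given an arbitrary irrational $\beta$, any $\lambda$ in region~II, and $E\in\Sigma$, choose Diophantine $\beta_n=\beta+r_n\to\beta$, interior-region-II couplings $\lambda_n\to\lambda$, and --- using continuity of the spectrum in the Hausdorff metric in both $\beta$ and $\lambda$ --- energies $E_n\in\Sigma$ for $(\beta_n,\lambda_n)$ with $E_n\to E$. Joint continuity of the Lyapunov exponent (Theorem~\ref{thm_lana}) then yields $L(\beta,B_\lambda^{E})=\lim_n L(\beta+r_n,B_{\lambda_n}^{E_n})=0$.

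So fix Diophantine $\beta$ and $\lambda$ in the interior of region~II, so that $\sigma(\lambda)$ lies in the interior of region~I. By \cite{E}, $H_{\sigma(\lambda),\theta;\beta}$ exhibits Anderson localization for a.e.\ $\theta$: a complete orthonormal basis of exponentially decaying eigenfunctions $\psi^{(\theta,j)}$ with eigenvalues $E_j(\theta)$. Exponential decay makes each Fourier series $\widehat{\psi}^{(\theta,j)}(x)=\sum_n\psi^{(\theta,j)}_n e^{2\pi i n x}$ an analytic, hence uniformly bounded, function on $\mathbb{T}$, and the defining recursion for $\psi^{(\theta,j)}$ Fourier-transforms into the statement that $k\mapsto\widehat{\psi}^{(\theta,j)}(\xi+k\beta)$ solves $H_{\lambda,\xi;\beta}u=\lambda_2E_j(\theta)\,u$ for a suitable dual phase $\xi=\xi(\theta,j)$. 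Assembling these over all $j$ at a fixed (typical) $\theta$ is the standard way to realize the cocycle $(\beta,B_\lambda^{E})$ as reducible --- analytically conjugate to a cocycle of rotations --- for the energies $E=\lambda_2E_j(\theta)$; in particular $L(\beta,B_\lambda^{E})=0$ for every such $E$. Since for a.e.\ $\theta$ the eigenvalues $\{E_j(\theta)\}_j$ exhaust the spectrum of $H_{\sigma(\lambda),\theta;\beta}$, these energies form a set of full measure for the integrated density of states $dN_\lambda$ of $H_\lambda$, hence a dense subset of $\Sigma$, because $\mathrm{supp}\,dN_\lambda=\Sigma$.

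It remains to pass from ``$L(\beta,B_\lambda^{E})=0$ on a dense subset of $\Sigma$'' to ``$L(\beta,B_\lambda^{E})=0$ on all of $\Sigma$''. This is immediate from continuity of $E\mapsto L(\beta,B_\lambda^{E})$ on $\mathbb{R}$: by the Thouless formula for Jacobi operators, $L(\beta,A_\lambda^{E})=\int_\mathbb{R}\log|E-E'|\,dN_\lambda(E')$ is a continuous logarithmic potential, and $L(\beta,B_\lambda^{E})=L(\beta,A_\lambda^{E})-I(\lambda)$ by (\ref{eq_relAB}); a continuous function vanishing on a dense subset of $\Sigma$ vanishes on $\Sigma$. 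Combined with the reduction of the first paragraph, this proves the theorem.

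\textbf{Main obstacle.} The delicate step is the duality argument of the second paragraph: verifying that Fourier-transforming a localized eigenbasis at a.e.\ phase $\theta$ genuinely produces reducibility of the dual cocycle (and hence a vanishing Lyapunov exponent) on a set of energies of full $dN_\lambda$-measure, with the bookkeeping of the dual phase $\xi(\theta,j)$ and of the energy rescaling by $\lambda_2$ done correctly, and checking that $\sigma$ does map the interior of region~II into the interior of region~I so that \cite{E} applies. This is precisely the content already present in \cite{E,EE} in the Diophantine setting; by contrast, the removal of the Diophantine condition --- the new ingredient here --- is soft, using only Theorem~\ref{thm_conti_sing} and Hausdorff continuity of the spectrum.
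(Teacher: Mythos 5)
Your architecture coincides with the paper's: dualize to region I via $\sigma$, invoke the localization theorem of \cite{E} for Diophantine $\beta$, transport the localized eigenfunctions back by Fourier transform to get zero Lyapunov exponent on a set of energies that is dense in the dual spectrum, and then remove the Diophantine restriction (and pass to the boundary of region II) using Theorem \ref{thm_lana} together with Hausdorff continuity of the spectrum. Your closing step (vanishing on a dense subset of $\Sigma$ plus continuity in $E$) is also what the paper does, phrased there through the duality-invariance of the density of states, (\ref{eq_inv}) and (\ref{eq_duality3}).

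The one step where you diverge --- and where your argument as written has a gap --- is the assertion that assembling the Bloch waves $\widehat{\psi}^{(\theta,j)}(\xi+k\beta)$ over all $j$ ``realizes the cocycle as reducible, analytically conjugate to a cocycle of rotations.'' This cannot be literally correct, since $\det B_\lambda^{E}(x)=\overline{c_\lambda(x-\beta)}/c_\lambda(x)$ is not unimodular, so $(\beta,B_\lambda^{E})$ is never conjugate to a rotation-valued cocycle; more importantly, a single bounded Bloch solution per energy does not by itself yield an invertible analytic conjugacy --- producing a second independent solution (or a nonvanishing argument) is precisely the hard part of dual reducibility (cf. \cite{V}), and you assert it rather than prove it. The paper sidesteps this entirely with Delyon's device \cite{M}: normalization of the eigenvector alone (no exponential decay, no analyticity of $\widehat{\phi}_\theta$ needed) gives, for a.e. dual phase $\eta$, a nonzero two-sided solution of the dual equation obeying the polynomial bound (\ref{eqn_subexpbound}); since for fixed $E$ the set of phases for which $E$ is an eigenvalue has measure zero, Oseledets' theorem (Appendix \ref{app_oseledec}) forces $L=0$. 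You should replace the reducibility claim by this softer argument; everything else in your proposal then goes through and matches the paper (which, note, runs the involution in the direction $\sigma(I)=II$, fixing $\lambda$ in region I --- equivalent to your setup, but it is how the bookkeeping in (\ref{eq_duality1})--(\ref{eq_duality3}) is organized).
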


Since the Hamiltonian in (\ref{eq_hamiltonian}) generalizes the almost Mathieu operator, it should not come as a surprise that its spectrum
expressed in terms of the density of states exhibits similar symmetry w.r.t. certain permutations of the coupling constants as the almost Mathieu operator. In fact, 
this so called duality has been explored for the present model in previous works \cite{D,E,EE,G,H}. A more general discussion of duality can be found in \cite{D}.

In this Section we will prove Theorem \ref{conj_duality}, adapting an idea of Delyon \cite{M}, that originally was used to establish
absence of point spectrum in the sub-critical region of the almost Mathieu equation. 
 
Let $\sigma$ denote the map, $\sigma(\lambda):=\frac{1}{\lambda_{2}}(\lambda_{3},1,\lambda_{1})$, where,
as earlier, $\lambda=(\lambda_{1},\lambda_{2},\lambda_{3})$. 

Referring to the partitioning of the parameter space introduced in Sec. \ref{sec_intro},  the map $\sigma$ acts as follows
\begin{equation} \label{eq_dualityofregions}
\sigma(I) = II~, ~\sigma(III) = III ~\mbox{.}
\end{equation}
In this sense, regions I and II are dual, whereas region III is self-dual. 

In \cite{E} (see Theorem 1 therein) it was proven that for Diophantine $\beta$ and $\lambda$ in region I, the spectrum of $H_{\lambda,\theta}$ 
is only pure point (i.e. for a certain set  $\mathbb{T}_{0}$ of phases $\theta$ with $\vert \mathbb{T}_{0} \vert =1$).
Fix $\lambda$ in region I and $\theta \in \mathbb{T}_{0}$. Let $\phi_{\theta}$ be a normalized eigenvector of $H_{\lambda,\theta}$ 
with associated eigenvalue $E_{\theta}$. If we let 
\begin{equation}
\hat{\phi_{\theta}}(x) = \sum_{n \in \mathbb{Z}} \phi_{\theta}(n) \mathrm{e}^{2 \pi i n x} ~\mbox{,}
\end{equation}
$\left\{ \mathrm{e}^{2 \pi i \theta m} \hat{\phi_{\theta}}(\eta + \beta m)   \right\}_{m \in \mathbb{Z}}$ defines a random sequence indexed
by points in $\mathbb{T}$. Set
\begin{equation}
\psi_{\theta}(\eta,m):= \mathrm{e}^{2 \pi i \theta m} \hat{\phi_{\theta}}(\eta + \beta m) ~\mbox{.}
\end{equation}

Since for each fixed $m \in \mathbb{Z}$, $\psi_{\theta}(.,m) \in \mathrm{L}^{2}(\mathbb{T})$, direct computation shows that
\begin{eqnarray} \label{eq_duality1}
& c_{\sigma{(\lambda)}}(. + \beta (m+1)) \psi_{\theta}(. , m+1) + \overline{c_{\sigma{(\lambda)}}(. + \beta (m-1))} \psi_{\theta}(. , m-1) & \nonumber \\
& + v(. + \beta m) \psi_{\theta}(. ,m)= \lambda_{2}^{-1} E_{\theta} \psi_{\theta}(. , m) ~\mbox{,} &
\end{eqnarray}
for each $m \in \mathbb{Z}$.

Equation (\ref{eq_duality1}) holds in the sense of $\mathrm{L}^{2}$-functions, whence it is true on a set $\Omega_{m}$ of full measure in $\mathbb{T}$. 
In particular, letting $\Omega:=\cap_{m \in \mathbb{Z}} \Omega_{m}$,  for each $\eta \in \Omega$ the sequence $\left\{ \psi_{\theta}(\eta,m) \right\}_{m \in \mathbb{Z}}$
satisfies the finite difference equation
\begin{equation}
H_{\sigma(\lambda),\eta} \psi_{\theta}(\eta, .) = \lambda_{2}^{-1} E_{\theta} \psi_{\theta}(\eta, .) ~\mbox{.}
\end{equation}

At the same time, for $\epsilon >0$ normalization of $\phi_{\theta}$ yields
\begin{eqnarray}
\int \ud \eta \sum_{n \in \mathbb{Z}} \left \vert n^{-\frac{1}{2}(1+\epsilon)} \psi_{\theta}(\eta,n) \right \vert^{2} = \sum_{n \in \mathbb{Z}} \dfrac{1}{n^{1+\epsilon}} < \infty ~\mbox{.}
\end{eqnarray}

Thus for every $\epsilon > 0$ and Lebesgue a.e. $\eta$ there exists $C_{\epsilon}(\eta) > 0$ such that 
\begin{equation} \label{eqn_subexpbound}
\abs{\psi_{\theta}(\eta,n)} \leq C_{\epsilon}(\eta)  n^{1/2(1+\epsilon)}   ~\mbox{,} ~\forall n \in \mathbb{Z} ~\mbox{.}
\end{equation}

It is a general fact for a family of bounded ergodic Jacobi operators $\left\{ H_{\omega} \right\}_{\omega \in \Omega}$, $(\Omega, \mathcal{F},\mathbb{P})$ a probability space, 
that fixing an energy $E$ \cite{N}, 
\begin{equation} \label{eq_ergodicjacobieigenv}
\mathbb{P}\{ E ~\mbox{is eigenvalue for} ~H_{\omega}\}=0 ~\mbox{.}
\end{equation}
Making use of a version of Osceledets Theorem for Jacobi operators, Theorem \ref{app_oseledec}, (\ref{eq_ergodicjacobieigenv}) implies
that for $\mathbb{P}$-a.e. $\omega$ any $\zeta \in \mathbb{C}^{\mathbb{Z}}$ satisfying $H_{\omega} \zeta = E \zeta$ has to exponentially increase in one direction
with rate given by the Lyapunov exponent.

The sub-exponential bound in (\ref{eqn_subexpbound}) then implies $L\left(B_{\sigma(\lambda)}^{\lambda_{2}^{-1} E_{\theta}},\beta\right) = 0$. 
Hence, if we let 
\begin{equation*}
\mathfrak{A} := \bigcup_{\theta \in \mathbb{T}_{0}} \sigma_{\scriptsize{pt}}(H_{\lambda;\theta}) ~\mbox{,}
\end{equation*}
$L\left(B_{\sigma(\lambda)}^{\lambda_{2}^{-1} E},\beta\right) = 0$ for all $E \in \mathfrak{A}$. In fact, continuity of the Lyapunov exponent w.r.t. the energy yields 
\begin{equation} \label{eq_duality2}
L\left(B_{\sigma(\lambda)}^{\lambda_{2}^{-1}E},\beta\right) = 0  ~, ~\forall E \in \overline{\mathfrak{A}} ~\mbox{.}
\end{equation}
Set $\mathfrak{B}:=\mathbb{R} \setminus \overline{\mathfrak{A}}$. 

Denote by $\ud n\left(\left\{H_{\lambda,\theta}\right\}; E\right)$ the density of states for the family of ergodic operators $\left\{H_{\lambda,\theta}\right\}$.
As shown in a more general context in \cite{D}, duality preserves the density of states. For the present model this yields,
\begin{equation} \label{eq_inv}
\ud n\left(\left\{H_{\lambda,\theta}\right\}; E\right) = \ud n\left( \left\{ \lambda_{2}  H_{\sigma(\lambda),\theta}\right\}; E\right) ~\mbox{.}
\end{equation}

Note that the spectrum of
$\left\{H_{\lambda,\theta}\right\}$ is contained in
$\overline{\mathfrak{A}}$, hence $ n\left(\left\{H_{\lambda,\theta}\right\}; \mathfrak{B}\right)=0$.
Using (\ref{eq_inv}), we obtain
\begin{equation} \label{eq_duality3}
\Sigma \left(\left\{\lambda_{2} H_{\sigma(\lambda)} \right\} \right) \subseteq \overline{\mathfrak{A}} ~\mbox{.}
\end{equation}

Finally, the spectral mapping theorem, (\ref{eq_duality2}), and (\ref{eq_duality3}) imply $L(B_{\sigma(\lambda)}^{E},\beta) = 0$ for any $E \in \Sigma \left(\left\{\lambda_{2} H_{\sigma(\lambda)} \right\} \right)$. Since $\lambda$
was chosen arbitrarily in region I, this implies Theorem \ref{conj_duality} for Diophantine $\beta$.

Finally, as mentioned in Sec. \ref{sec_alternative}, the Diophantine condition can be removed using continuity in the frequency as asserted by Theorem \ref{thm_lana}.

\section{Oseledets Theorem for Jacobi operators} \label{app_oseledec}
In Appendix \ref{app_dualityconj} we made use of Oseledets theorem to relate the Lyapunov exponent to the exponential decay rate of solution to the Schr\"odinger equation. The authors noticed that in the literature for Schr\"odinger operators, Oseledets theorem is usually given as a theorem for SL(2,$\mathbb{C}$) matrices even though Oseledets
original statement allows GL(2,$\mathbb{C}$) \cite{AA} .

The relation of one to the other is, of course, just a matter of appropriate re-normalization of the determinant. We include the simple argument below and show its consequences for ergodic Jacobi operators. A more general, deterministic statement which even allows for not invertible matrices can be found in \cite{GGGG}.

\begin{theorem}[deterministic formulation]
Let $(A_{n})_{n \in \mathbb{N}}$ be a sequence in GL(2,$\mathbb{C}$). Set $d_{n}:=(\det{A_{n}})^{\frac{1}{2}}, ~D_{n}:=\frac{1}{d_{n}}A_{n}, ~n \in \mathbb{N}$. 
Suppose that $m:= \lim_{n\to \infty} \frac{1}{n} \sum_{k=1}^{n} \log{\abs{d_{k}}}$ exists and is finite, $\lim_{n \to \infty} \frac{1}{n}\log{\norm{A_{n}}} = 0$, 
and that $L=L(A):=\lim_{n \to \infty} \frac{1}{n} \log{\norm{A_{n} \dots A_{1}}}$ exists and is finite with $L-m > 0$. 
Then, there is a one-dimensional subspace $S \subseteq \mathbb{C}^{2}$ such that for $v \in \mathbb{C}^{2}$
\begin{itemize}
\item[(i)] $\frac{1}{n} \log{\norm{A_{n} \dots A_{1} v}} \xrightarrow{n \to \infty} -L + 2 m$, if $v \in S$,
\item[(ii)] $\frac{1}{n} \log{\norm{A_{n} \dots A_{1} v}} \xrightarrow{n \to \infty} L $, if $v \in \mathbb{C}^{2} \setminus S$.
\end{itemize}
\end{theorem}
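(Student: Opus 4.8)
The plan is to reduce the statement to the familiar multiplicative ergodic theorem for $SL(2,\mathbb{C})$-sequences by factoring out the determinant. Set $\delta_n:=\prod_{k=1}^{n}d_k$ and $D_n:=d_n^{-1}A_n$, so that $\det D_n=1$ and $A_n\cdots A_1=\delta_n\,M_n$ with $M_n:=D_n\cdots D_1\in SL(2,\mathbb{C})$. By hypothesis $\frac1n\log\abs{\delta_n}=\frac1n\sum_{k=1}^{n}\log\abs{d_k}\to m$; writing $\log\abs{d_n}=S_n-S_{n-1}$ for $S_n:=\sum_{k=1}^{n}\log\abs{d_k}$, this averaging also forces $\frac1n\log\abs{d_n}\to 0$, and together with $\frac1n\log\norm{A_n}\to 0$ this gives $\frac1n\log\norm{D_n}\to 0$. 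Dividing the identity $\norm{A_n\cdots A_1}=\abs{\delta_n}\,\norm{M_n}$ by $n$, taking logs, and using that $L$ exists then shows $\frac1n\log\norm{M_n}$ converges, with limit $L-m=:\tilde L>0$.

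Next I would invoke the deterministic multiplicative ergodic theorem in the $SL(2,\mathbb{C})$ case — this is classical, and is exactly the invertible, unimodular special case of the statement in \cite{GGGG}. Its hypotheses, $\frac1n\log\norm{D_n}\to 0$ and the existence of $\lim_n\frac1n\log\norm{M_n}$, have just been verified (note also that in $SL(2)$ one has $\norm{D_n^{-1}}=\norm{D_n}$, and the smallest singular value of $M_n$ equals $\norm{M_n}^{-1}$, so the low exponent automatically exists and equals $-\tilde L$). Since $\tilde L>0$, the two exponents $\pm\tilde L$ are distinct, and the theorem produces a one-dimensional subspace $S\subseteq\mathbb{C}^2$ with $\frac1n\log\norm{M_n v}\to-\tilde L$ for $v\in S\setminus\{0\}$ and $\frac1n\log\norm{M_n v}\to\tilde L$ for $v\in\mathbb{C}^2\setminus S$. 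If one wants this $SL(2,\mathbb{C})$ input itself to be self-contained, the only nontrivial point is that the most-contracted directions of $M_n$ form a Cauchy sequence in the projective line — the standard cone-contraction estimate, which is exactly where $\norm{D_{n+1}}=e^{o(n)}$ against $\norm{M_n}^{-2}=e^{-2\tilde L n+o(n)}$ is used.

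Finally, for arbitrary $v\in\mathbb{C}^2$ I would simply add the two pieces, $\frac1n\log\norm{A_n\cdots A_1 v}=\frac1n\log\abs{\delta_n}+\frac1n\log\norm{M_n v}$. For $v\in S\setminus\{0\}$ the right-hand side tends to $m-\tilde L=m-(L-m)=-L+2m$, which is (i); for $v\in\mathbb{C}^2\setminus S$ it tends to $m+\tilde L=m+(L-m)=L$, which is (ii), and with the same subspace $S$ in both cases. There is no serious obstacle here: the rescaling is routine bookkeeping and all the content sits in the $SL(2,\mathbb{C})$ theorem; the only thing genuinely requiring care is confirming that the normalized hypotheses follow from those on $(A_n)$, $m$ and $L$, which is done in the first paragraph.
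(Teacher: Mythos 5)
Your proposal is correct and follows essentially the same route as the paper: normalize by $d_n=(\det A_n)^{1/2}$, reduce to the $SL(2,\mathbb{C})$ Oseledets theorem applied to $D_n=d_n^{-1}A_n$, and add back the Ces\`aro average of $\log\abs{d_k}$ to recover the exponents $-L+2m$ and $L$. Your write-up is in fact slightly more careful than the paper's on the bookkeeping step that $\frac1n\log\abs{d_n}\to 0$ follows from the convergence of the Ces\`aro means.
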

\begin{proof}
Since $m$ exists and is finite, $\frac{1}{n} \log{\abs{d_{n}}} \xrightarrow{n\to\infty} 0$, so $\lim_{n \to \infty} \frac{1}{n} \log{\norm{D_{n}}} = 0$. 
Note that $L-m=\lim_{n \to \infty} \frac{1}{n} \log{\norm{D_{n} \dots D_{1}}}$, hence by the SL(2,$\mathbb{C}$) version of Oseledec theorem, 
there exists a one-dimensional subspace $S \subseteq \mathbb{C}^{2}$ such that for $v \in \mathbb{C}^{2}$
\begin{eqnarray*}
\frac{1}{n} \log{\norm{A_{n} \dots A_{1} v}} = \frac{1}{n} \log{\norm{D_{n} \dots D_{1} v}} + \lim_{n\to \infty} \frac{1}{n} \sum_{k=1}^{n} \log{\abs{d_{k}}} \nonumber \\
\xrightarrow{n\to\infty} -L + 2m ~\mbox{, for} ~v \in S ~\mbox{,} \\
\frac{1}{n} \log{\norm{A_{n} \dots A_{1} v}} = \frac{1}{n} \log{\norm{D_{n} \dots D_{1} v}} + \lim_{n\to \infty} \frac{1}{n} \sum_{k=1}^{n} \log{\abs{d_{k}}} \nonumber \\
\xrightarrow{n\to\infty} L ~\mbox{, for} ~v \in \mathbb{C}^{2} \setminus S ~\mbox{.}
\end{eqnarray*}
\end{proof}
\begin{remark} \label{rem_osceledec}
\begin{itemize}
\item[(i)] Let $A: \Omega \to GL(2,\mathbb{C})$, where $(\Omega,\mathcal{F},\mathbb{P})$ is a probability space and $T:\Omega \to \Omega$ ergodic. If $A_{n}(x)=A(T^{n}x)$, $n \in \mathbb{N}$,
then by Birkhoff's ergodic theorem the Ceasaro mean $m$ becomes, $m=\mathbb{E}(\log\abs{d_{1}(x)})$. For an ergodic setup, we may hence replace the condition on finiteness
of $m$ by $\mathbb{E}(\log\abs{d_{1}(x)}) < \infty$. 
\item[(ii)] Consider a bounded ergodic Jacobi operator, i.e.
\begin{equation*} \label{eq_Jacobiop_1}
(H_{\omega} \psi)_n := b(T^{n} \omega) \psi_{n} + a(T^{n} \omega) \psi_{n+1} + \overline{a}(T^{n-1}\omega) \psi_{n-1} ~\mbox{,} 
\end{equation*}
where $T$ is ergodic on the probability space $(\Omega,\mathcal{F},\mathbb{P})$, $a: \Omega \to \mathbb{C}\setminus\{0\}, ~b:  \Omega \to \mathbb{R}$ are bounded random variables 
and $\mathbb{E}_{\omega}\log\abs{a(\omega)} < \infty$. 

In this case, above sequence $(A_{n})_{n \in \mathbb{Z}}$ is the associated sequence of one-step transfer matrices,
\begin{equation*} 
A_{n}(E)=\dfrac{1}{a(T^{n} \omega)}\begin{pmatrix} b(T^{n} \omega) - E & -\overline{a}(T^{n-1}\omega) \\
                 a(T^{n} \omega) & 0 \end{pmatrix}  ~\mbox{,}
\end{equation*}
where $E \in \mathbb{R}$. Then, since $T$ is measure preserving, (i) implies $m=0$.
\end{itemize}
\end{remark}

\bibliographystyle{amsplain}

\end{document}